\newtheorem{thm}{Theorem}[section]\crefname{thm}{Theorem}{Theorems}
\newtheorem{lem}[thm]{Lemma}\crefname{lem}{Lemma}{Lemmas}
\crefname{prb}{Problem}{Problems}
\crefname{conj}{Conjecture}{Conjectures}
\newtheorem{rem}[thm]{Remark}\crefname{rem}{Remark}{Remarks}
\newtheorem{cor}[thm]{Corollary}\crefname{cor}{Corollary}{Corollaries}
\newtheorem{dfn}[thm]{Definition}\crefname{dfn}{Definition}{Definitions}
\crefname{prp}{Proposition}{Propositions}
\newtheorem{obs}[thm]{Observation}\crefname{obs}{Observation}{Observations}
\newtheorem{exa}[thm]{Example}\crefname{exa}{Example}{Examples}
\DeclareMathOperator{\tr}{tr}
\DeclareMathOperator{\Mat}{Mat}
\DeclareMathOperator{\diag}{diag}
\DeclareMathOperator{\ch}{ch}
\DeclareMathOperator{\GL}{GL}
\DeclareMathOperator{\Herm}{Herm}
\DeclareMathOperator{\PD}{PD}
\DeclareMathOperator{\poly}{poly}
\DeclareMathOperator{\rk}{rank}
\DeclareMathOperator{\capa}{cap}
\DeclareMathOperator{\cut}{cut}
\DeclareMathOperator{\vol}{vol}
\DeclareMathOperator{\bbeta}{Beta}
\newcommand{\CC}{\mathbb C}
\newcommand{\Ess}{\mathbb S}
\newcommand{\EE}{\mathbb E}
\newcommand{\RR}{\mathbb R}
\newcommand{\ZZ}{\mathbb Z}
\newcommand{\eps}{\varepsilon}
\renewcommand{\vec}{\bm}
\begin{document}

\title{Rigorous Guarantees for Tyler's M-estimator via Quantum Expansion}
\author{
Cole Franks\thanks{Department of Mathematics, Massachusetts Institute of Technology. Email: \texttt{franks@mit.edu}.}
\and
Ankur Moitra\thanks{Department of Mathematics, Massachusetts Institute of Technology. Email: \texttt{moitra@mit.edu}. This work was
supported in part by a Microsoft Trustworthy AI Grant, NSF CAREER Award CCF-1453261, NSF Large CCF-
1565235, a David and Lucile Packard Fellowship, an Alfred P. Sloan Fellowship and an ONR Young Investigator
Award.}
}
\date{}
\maketitle

\begin{abstract}
Estimating the shape of an elliptical distribution is a fundamental problem in statistics. One estimator for the shape matrix, Tyler's M-estimator, has been shown to have many appealing asymptotic properties. It performs well in numerical experiments and can be quickly computed in practice by a simple iterative procedure. Despite the many years the estimator has been studied in the statistics community, there was neither a non-asymptotic bound on the rate of the estimator nor a proof that the iterative procedure converges in polynomially many steps.

Here we observe a surprising connection between Tyler's M-estimator and operator scaling, which has been intensively studied in recent years in part because of its connections to the Brascamp-Lieb inequality in analysis. We use this connection, together with novel results on quantum expanders, to show that Tyler's M-estimator has the optimal rate up to factors logarithmic in the dimension, and that in the generative model the iterative procedure has a linear convergence rate even without regularization. 
\end{abstract}




\section{Introduction}

The covariance matrix $\Sigma$ of a joint random variable $X$ is a fundamental object in statistics which encodes useful information about the geometry of the distribution of $X$. Estimation of the covariance matrix is a central task in data analysis, and in many situations the \emph{sample covariance matrix} is a good estimator. However, heavy-tailed random variables need not have a covariance matrix at all, and even when the covariance matrix exists the sample covariance matrix need not converge at all to the true one.

Elliptical distributions \cite{kelker1970distribution, cambanis1981theory} are a well-studied class of random variables used to model heavy-tailed data \cite{gupta2013elliptically}. Though elliptical distributions need not have covariance matrices, they are characterized by parameter called the \emph{shape matrix} with a similar geometric interpretation. Tyler \cite{Tyler_1987} defined an estimator, known as \emph{Tyler's M-estimator}, for the shape matrix of an elliptical distribution and proposed an iterative procedure to compute it. 
Furthermore, he established many powerful and surprising statistical properties for it. First, it is strongly consistent and asymptotically normal. In fact, its asymptotic distribution is \emph{distribution-free} in the sense that it that it does not depend on which elliptical distribution is generating the data. Second, it is the most robust estimator for the shape of an elliptical distribution in the sense that it minimizes the maximum asymptotic variance. 

There are some nonasymptotic bounds on the sample complexity \cite{soloveychik2014performance}, but these bounds are only comparable to the Gaussian case when error is measured in Frobenius norm. Moreover, the error has additional factors depending on the condition number of the shape matrix. From the computational standpoint, there are provably efficient algorithms for \emph{regularized} versions of the estimator \cite{Goes17}, but the regularized versions do not inherit the appealing statistical properties of Tyler's M-estimator. See the survey \cite{WZ15} for more information on Tyler's M-estimator and its regularized variants. Despite the significant attention the problem has received, the following are absent from the existing literature:
\begin{enumerate}
\item Necessary and sufficient conditions for the existence and uniqueness of Tyler's M-estimator.
\item\label{it:samples} A non-asymptotic upper bound on the sample complexity for estimation of the shape matrix in spectral norm comparable to the Gaussian case.
\item\label{it:linear} A rigorous proof that Tyler's iterative procedure converges at a linear rate. 
\end{enumerate}

Here we close all these gaps simultaneously by making a new connection, which surprisingly has gone unnoticed for decades, between Tyler's M-estimator and {\em operator scaling} \cite{gurvits2003classical}. We describe the setup for operator scaling in \cref{subsec:opscal}, but the name roughly refers to a group of problems generalizing Sinkhorn's classic ``matrix scaling" problem in which one seeks to obtain a doubly stochastic matrix by rescaling the rows and columns of a given nonnegative matrix. Tyler's M-estimator arises in operator scaling because it is precisely the ``rescaling" for a certain operator constructed from the samples.

Though the name is fairly new, operator scaling was studied far earlier in the context of \emph{geometric invariant theory} in algebra. As we will discuss in \cref{subsec:statement}, necessary and sufficient conditions for the existence of Tyler's M-estimator follow from the work \cite{king1994moduli} in this field. Later, the authors of \cite{gurvits2002deterministic, forster2002linear, barthe1998reverse} independently studied \cref{eq:tyler} for applications in convex geometry, communication complexity, and real analysis, respectively. It was shown in \cite{AM13}, and rather implicitly in \cite{gurvits2002deterministic}, that there is an algorithm to solve \cref{eq:tyler} up to error $\eps$ in time polynomial in $\log(1/\eps)$ and the bit-length of the samples, though both are very slow due to their use of the ellipsoid algorithm. Next, an iterative procedure for operator scaling was proposed in \cite{gurvits2003classical}, thus implicitly showing that Tyler's iterative procedure converges in time polynomial in $1/\eps$ and the bit-length of the samples. \cite{garg2016deterministic} proved the same guarantees in a significantly more general setting, and used them to obtain new upper bounds in algebraic complexity and then in \cite{garg2017algorithmic} to compute the Brascamp-Lieb constant in analysis.

Clearly a great deal of information about Tyler's M-estimator can be gleaned from the existing operator scaling literature, but there remain a few hurdles to \cref{it:samples,it:linear}. Firstly, it is unclear how well Tyler's estimator performs in a statistical sense in terms of proving finite sample guarantees. Secondly, results implying \emph{linear} ($\log \eps^{-1}$) convergence of iterative procedures like that of Tyler are rare, and the existing results are not explicit enough to produce polynomial time algorithms \cite{soules1991rate,knight2008sinkhorn} in the sense that they do not specify what norm they get convergence in. We clear these remaining hurdles through a somewhat surprising and subtle application of \emph{quantum expansion}, a tool from quantum information theory introduced recently to operator scaling in \cite{KLR19}. Moreover, we significantly sharpen the bounds in \cite{KLR19} in order to obtain an inverse exponential failure probability and an optimal rate of convergence up to logarithmic factors.

We now state our main results. Our first theorem plays the role in shape estimation analogous to the role of the matrix Chernoff theorem in covariance estimation, in the sense that it shows how well an ``empirical" version $\widehat{\Sigma}$ of the shape approximates the ``population" shape $\Sigma$. 
\begin{thm}[Sample complexity in spectral norm]\label{thm:elliptical}
For $n \geq C p \log^2 p /\eps^2$ samples from an elliptical distribution of shape $\Sigma$, Tyler's M-estimator $\widehat{\Sigma}$ satisfies 
$$\|I_p -  \Sigma^{1/2} \widehat{\Sigma}^{-1} \Sigma^{1/2}\|_{op} \leq \eps $$
with probability $1 - \exp( - \Omega( n \eps^2/\log^2 p))$ provided $\eps$ is a small enough constant. Here $C > 0$ is an absolute constant.
\end{thm}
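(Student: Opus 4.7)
The plan is to exploit the affine equivariance of Tyler's M-estimator to reduce to an isotropic problem on the sphere, and then combine matrix Bernstein for the ``gradient'' of the Tyler objective with a quantum-expander-type lower bound on its ``Hessian.'' By the affine equivariance $\widehat{\Sigma}(Ax_1,\ldots,Ax_n)=A\widehat{\Sigma}(x_1,\ldots,x_n)A^{T}$ together with the scale invariance of $\widehat{\Sigma}$, the quantity $\|I_p-\Sigma^{1/2}\widehat{\Sigma}^{-1}\Sigma^{1/2}\|_{op}$ has the same distribution as the corresponding norm for independent samples $y_1,\ldots,y_n$ uniform on $S^{p-1}$. I therefore reduce to the case $\Sigma=I_p$, where Tyler's estimator is the unique (up to positive scale) minimizer over $S\in\PD(p)$ of the geodesically convex Kempf--Ness-type function
\[
F(S)=\frac{1}{n}\sum_{i=1}^{n}\log(y_i^{T}Sy_i)-\frac{1}{p}\log\det S,
\]
whose critical points satisfy $\frac{p}{n}\sum_{i}y_iy_i^{T}/(y_i^{T}Sy_i)=S^{-1}$, i.e.\ the reweighted samples $S^{1/2}y_i/\|S^{1/2}y_i\|$ are in isotropic position.

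The next step is to control $F$ to second order at $S=I_p$. Expanding $F(e^{X})$ around $X=0$ using $\log(y^{T}e^{X}y)=y^{T}Xy+\tfrac12 y^{T}X^{2}y-\tfrac12(y^{T}Xy)^{2}+O(X^{3})$ identifies the gradient as $\nabla F(I_p)=\tfrac1n\sum_i y_iy_i^{T}-I_p/p$ and the Hessian quadratic form as $H\mapsto\tfrac1n\sum_i y_i^{T}H^{2}y_i-(y_i^{T}Hy_i)^{2}$. The gradient is a centered sum of rank-one matrices of operator norm $O(1)$ with matrix variance $\Theta(1/p)$, so matrix Bernstein gives
\[
\Pr\bigl[\,\|\nabla F(I_p)\|_{op}\ge t\,\bigr]\le 2p\exp\bigl(-c\,npt^{2}\bigr)
\]
for $t\lesssim 1$. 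The Hessian form can be rewritten as $p^{-1}\langle H,(\mathrm{Id}-\Phi)(H)\rangle$ (up to a correction controlled by $\Phi(I_p)-I_p$), where $\Phi(X):=\frac{p}{n}\sum_i(y_i^{T}Xy_i)\,y_iy_i^{T}$ is the random completely positive map naturally associated to the sample. In particular, the strong convexity of $F$ at $I_p$ restricted to trace-zero directions is exactly the spectral gap of $\Phi$ at its near-stationary vector $I_p$.

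The main technical step is to lower-bound this spectral gap non-asymptotically. The expectation $\mathbb{E}[\Phi]$ acts as $2/(p+2)$ times the identity on trace-zero symmetric matrices, so the expected gap is $\Omega(1)$ and the expected strong convexity of $F$ is of order $\Omega(1/p)$ in the Frobenius metric on $\Sym(p)$; I invoke the sharpened random-channel spectral-gap bound from \cite{KLR19} promised in the abstract to conclude that, with the required inverse-exponential probability, the empirical $\Phi$ still has a trace-zero spectral gap degraded from its expectation by at most a polylogarithmic factor, provided $n\gtrsim p\log^{2}p$. Geodesic convexity of $F$ then yields $\|\widehat{\Sigma}^{-1}-I_p\|_{op}\lesssim\|\nabla F(I_p)\|_{op}/\lambda_{\min}(\nabla^{2}F(I_p))$, and balancing the matrix-Bernstein tail for the gradient against the quantum-expander tail for the Hessian so that both match the exponent $n\eps^{2}/\log^{2}p$ produces the stated sample complexity and failure probability. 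The principal obstacle is unmistakably the quantum-expansion step: off-the-shelf matrix Bernstein for $\Phi-\mathbb{E}[\Phi]$ on the $O(p^{2})$-dimensional space of symmetric matrices loses polynomial factors in $p$ and produces only polynomial-in-$p$ failure probabilities, so both the inverse-exponential tail and the near-optimal $\log^{2}p$ factor require the delicate decoupling and tensorization arguments underlying the authors' sharpening of the quantum-expander inequality.
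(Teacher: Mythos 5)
Your overall architecture mirrors the paper's: reduce to the isotropic case by equivariance, view Tyler's estimator as the minimizer of the geodesically convex capacity-type functional, and control its distance from $I_p$ by combining concentration of the first-order term (double balancedness) with a spectral-gap/quantum-expansion lower bound on the second-order term. But the proposal has a genuine gap at its core: the spectral-gap step is assumed rather than proved. You ``invoke the sharpened random-channel spectral-gap bound from \cite{KLR19}'', but no such bound exists there -- \cite{KLR19} only gives failure probability $O(p/n^{3/4})$, which is far from the inverse-exponential tail the theorem claims. The sharpened statement is precisely this paper's new contribution (\cref{thm:random-expander}, via \cref{lem:random-cheeger}), proved by passing to a Cheeger constant minimized over all unitaries, reformulating it over orthogonal projections, establishing subexponential tails for Beta-distributed coordinate masses via negative association, and union-bounding over subsets of $[n]$ together with an operator-norm net of rank-$k$ projections to get failure probability $e^{-\Omega(n)+O(p\log p)}$. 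You acknowledge this is the obstacle, but that means the proof of the statement is not actually supplied.

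Two further quantitative steps would also fail as written. First, matrix Bernstein for $\nabla F(I_p)=\frac1n\sum y_iy_i^\dagger-I_p/p$ only exploits the norm bound $\|y_iy_i^\dagger\|_{op}\le 1$ and, in the relevant regime $t\asymp\eps/(p\log p)$, yields an exponent of order $n\eps^2/(p\log^2 p)$, which at $n\asymp p\log^2 p/\eps^2$ is $O(1)$ -- nowhere near the claimed $\exp(-\Omega(n\eps^2/\log^2 p))$. The paper instead uses subgaussian covariance concentration (\cref{thm:vershinyn}, applicable because $\sqrt p\,V$ is $O(1)$-subgaussian), whose exponent $(c\sqrt n\,\eps-C\sqrt p)^2\sim n\eps^2$ is what produces the stated tail. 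Second, the closing inequality $\|\widehat{\Sigma}^{-1}-I_p\|_{op}\lesssim\|\nabla F(I_p)\|_{op}/\lambda_{\min}(\nabla^2F)$ mixes norms: strong geodesic convexity is a Frobenius statement, so the elementary convexity argument only bounds $\|\log Z^*\|_F$ (this is the paper's \cref{lem:frob-grad}/\cref{thm:frob-improvement} route, which loses $\sqrt p$ and yields \cref{thm:elliptical-frob}), and one must also extend strong convexity from the point $I_p$ to a neighborhood or sublevel set (\cref{lem:expansion-scaling}, \cref{cor:convex-ball}). To get the operator-norm conclusion with only a $\log p$ loss, the paper invokes the refined scaling bound of \cite{KLR19} (\cref{thm:lapchi}), which is not reproduced by your heuristic; without it your argument gives at best the Frobenius-type rate.
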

This theorem provides guarantees for estimating the shape in the spectral norm, which was left as an open question in \cite{soloveychik2014performance}. Up to logarithmic factors, \cref{thm:elliptical} recovers the bound of \cite{soloveychik2014performance} on the Frobenius norm $\|\hat{\Sigma}^{-1} - \Sigma^{-1}\|_F$. They showed $\|\hat{\Sigma}^{-1} - \Sigma^{-1}\|_F \leq O( p / \gamma \sqrt{n} )$ with high probability, where $\gamma$ is a parameter depending on the spectrum of $\Sigma$.  We also prove guarantees for the Frobenius norm that, when the number of samples is large, are sharper than naively using \cref{thm:elliptical} to bound the Frobenius norm.
\begin{thm}[Sample complexity in Frobenius norm]\label{thm:elliptical-frob}
For $n \geq C p^2/\eps^2$ samples from an elliptical distribution of shape $\Sigma$, Tyler's M-estimator $\widehat{\Sigma}$ satisfies 
$$\|I_p -  \Sigma^{1/2} \widehat{\Sigma}^{-1} \Sigma^{1/2}\|_{F} \leq \eps $$
with probability $1 - \exp( - \Omega( n \eps^2/p^2))$ provided $\eps$ is a small enough constant. Here $C > 0$ is an absolute constant.
\end{thm}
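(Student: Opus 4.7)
The plan is to mirror the strategy used for \cref{thm:elliptical} but replace each spectral-norm concentration step with its Hilbert--Schmidt (Frobenius) analogue. The $\log^2 p$ factor is saved because Bernstein-type concentration for sums of independent \emph{vectors} in a Hilbert space is dimension-free, whereas matrix Chernoff in operator norm costs a $\log p$. By the affine equivariance of Tyler's M-estimator, I first reduce to $\Sigma = I_p$; the normalized samples $u_i := x_i/\|x_i\|$ are then i.i.d.\ uniform on $\Ess^{p-1}$ and Tyler's equation reads $\widehat{\Sigma} = F(\widehat{\Sigma})$ with $F(T) := (p/n)\sum_i u_i u_i^T/(u_i^T T^{-1} u_i)$. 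Applying \cref{thm:elliptical} with a small absolute constant $\eps_0$ localizes $\widehat{\Sigma}$ to spectrum in $[1-\eps_0, 1+\eps_0]$ with exponentially small failure probability; fixing the scale gauge $\tr\widehat{\Sigma} = p$, the matrix $\Delta := \widehat{\Sigma} - I_p$ is then traceless and $O(1)$ in operator norm.

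The central step is to linearize the fixed-point equation at $T = I_p$:
$$\Delta \;=\; \bigl(F(I_p) - I_p\bigr) \;+\; \mathcal{L}(\Delta) \;+\; O\bigl(\|\Delta\|_{op}\,\|\Delta\|_F\bigr), \qquad \mathcal{L}(H) := \frac{p}{n}\sum_{i=1}^{n} u_i u_i^T\,(u_i^T H u_i).$$
A sphere fourth-moment computation gives $\EE[\mathcal{L}(H)] = (\tr(H)\,I_p + 2H)/(p+2)$ for symmetric $H$, which on trace-zero $H$ is a strict contraction with eigenvalue $2/(p+2)$. The proof then reduces to two Frobenius-norm concentration claims: (i) the empirical ``gradient'' $F(I_p) - I_p = (p/n)\sum_i u_i u_i^T - I_p$ has Frobenius norm at most $\eps/2$, and (ii) the random operator $\mathcal{L}$ is a contraction on traceless matrices in the Hilbert--Schmidt operator norm. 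For (i), viewing each $p u_i u_i^T - I_p$ as a vector in the $O(p^2)$-dimensional symmetric Hilbert--Schmidt space (with per-sample squared norm exactly $p^2 - p$), the dimension-free vector Bernstein inequality gives
$$\Pr\!\bigl(\|F(I_p) - I_p\|_F \geq t\bigr) \;\leq\; \exp\!\bigl(-\Omega(n t^2/p^2)\bigr),$$
which at $t = \eps/2$ yields the stated rate whenever $n \geq C p^2/\eps^2$. Closing the self-consistent equation $\Delta \approx (F(I_p) - I_p) + \mathcal{L}(\Delta)$ in the traceless slice gives $\|\Delta\|_F \lesssim \|F(I_p) - I_p\|_F \leq \eps$, and converting to $\|I_p - \widehat{\Sigma}^{-1}\|_F$ costs only a factor $1 + O(\eps_0)$ by the localization.

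The main obstacle is (ii): matching the sharp exponent $\exp(-\Omega(n\eps^2/p^2))$ requires controlling the Hilbert--Schmidt-to-Hilbert--Schmidt operator norm of $\mathcal{L} - \EE[\mathcal{L}]$ on the trace-zero slice without paying a union-bound cost over a $p^2$-dimensional Frobenius sphere, which would reintroduce the very logarithm one is trying to avoid. The naive route of invoking \cref{thm:elliptical} with $\delta = \eps/\sqrt{p}$ and converting via $\|\cdot\|_F \leq \sqrt{p}\|\cdot\|_{op}$ reinserts the $\log^2 p$ factors. My proposed route is to exploit that $\mathcal{L}$ is PSD as an operator on symmetric matrices, since $\langle H, \mathcal{L}(H)\rangle = (p/n)\sum_i (u_i^T H u_i)^2 \geq 0$, so that intrinsic-dimension PSD matrix-Bernstein applies; alternatively, one can invoke a Hilbert--Schmidt quantum-expansion bound on this quartic form directly, in the spirit of the tools the authors developed for \cref{thm:elliptical}. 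Once (ii) is in hand, the remaining pieces---reduction, linearization, and (i)---are essentially routine.
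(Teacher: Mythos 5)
Your overall route (linearize Tyler's fixed-point equation at $I_p$ and close a self-consistent bound in Frobenius norm) is genuinely different from the paper's, but as written it has two gaps, one of which is serious. The serious one is the linearization error. You write $\Delta = (F(I_p)-I_p)+\mathcal{L}(\Delta)+O(\|\Delta\|_{op}\|\Delta\|_F)$, but the remainder is $E=\frac{p}{n}\sum_i u_iu_i^T\,\epsilon_i$ with $|\epsilon_i|=O(\|\Delta\|_{op}^2)$, and the only norms you control give $\|E\|_{op}=O(\|\Delta\|_{op}^2)$, hence $\|E\|_F=O(\sqrt{p}\,\|\Delta\|_{op}^2)$; a direct attempt to extract a $\|\Delta\|_F$ factor (bounding $\sum_i |u_i^T\Delta u_i|$ by Cauchy--Schwarz and the top singular value of $\Phi_{\vec u}$) likewise produces $O(\sqrt{p}\,\|\Delta\|_{op}\|\Delta\|_F)$. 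With only the constant-level localization $\|\Delta\|_{op}\le\eps_0$ from \cref{thm:elliptical}, the remainder can therefore be of order $\sqrt{p}\,\eps_0^2$ (or $\sqrt p\,\eps_0\|\Delta\|_F$), which swamps the target $\eps$ and destroys the contraction; the bound $O(\|\Delta\|_{op}\|\Delta\|_F)$ is asserted, not proved. The gap is repairable, but only by strengthening the a priori control: e.g.\ since $n\ge Cp^2/\eps^2$ you may invoke \cref{thm:elliptical} at accuracy $\eps'=\eps\log p/\sqrt p$ (this is exactly the accuracy the sample size permits), after which $\sqrt p\,\|\Delta\|_{op}^2=O(\eps^2\log^2 p/\sqrt p)\le\eps$ for large $p$ -- but that step, and the bookkeeping of its failure probability, is missing. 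The second gap is that your claim (ii), the contraction of $\mathcal{L}$ on the traceless slice, is left as an acknowledged obstacle. Here your worry about losing a logarithm is actually misplaced: since $\mathcal L=\frac{p}{n}\sum_i\langle u_iu_i^T,\cdot\rangle\,u_iu_i^T$ is a sum of operators on $\Herm(p)$ each of norm exactly $p/n$, ordinary matrix Bernstein gives $\|\mathcal L-\EE\mathcal L\|\le 1/4$ with failure probability $p^2 e^{-\Omega(n/p)}$, and a constant-accuracy contraction with this failure probability is harmless because the $\eps$-dependence of the final probability lives entirely in your claim (i); equivalently, you could simply quote \cref{thm:random-expander} at constant $\lambda$, since (ii) is precisely quantum expansion of $\Phi_{\vec u}$ written as $\mathcal L=\frac{p}{n}\Phi_{\vec u}^*\Phi_{\vec u}$. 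Still, as submitted, neither (ii) nor the remainder estimate is established, so the proof does not close.

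For comparison, the paper's proof is a two-line reduction on machinery already in place: apply \cref{thm:random-expander} with accuracy $\eps/\sqrt p$ instead of $\eps/\log p$ (this is where $n\ge Cp^2/\eps^2$ and the probability $1-\exp(-\Omega(n\eps^2/p^2))$ come from), then replace \cref{thm:lapchi} by \cref{thm:frob-improvement}, whose proof is a geodesic strong-convexity bound $\|\log Z^*\|_F\le\|\nabla f^{\Phi}(I_p)\|_F/\lambda\le\sqrt p\,\|\nabla f^{\Phi}(I_p)\|_{op}/\lambda=O(\eps)$, and finish exactly as in the proof of \cref{thm:elliptical}. If you want to salvage your linearization route, the cleanest fix is to import the same two inputs (operator-norm balancedness at scale $\eps/\sqrt p$ or $\eps\log p/\sqrt p$, and constant quantum expansion) and then your perturbative argument becomes a legitimate, if longer, alternative to the geodesic-convexity step.
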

Finally, we show that Tyler's iterative procedure converges quickly under mild conditions.
\begin{thm}[Iterative procedure]\label{thm:fast-sinkhorn-elliptical} For $n \geq C p \log^2 p$ samples from an elliptical distribution of shape $\Sigma$, Tyler's iterative procedure computes $\overline{\Sigma}$ satisfying 
\begin{gather}\|I_p - \widehat{\Sigma}^{1/2} \overline{\Sigma}^{-1} \widehat{\Sigma}^{1/2} \|_{F} \leq \eps\label{eq:sinkhorn-bound}\end{gather}
 in  $O(|\log \det \Sigma| + p +  \log (1/\eps))$
 iterations with probability $1 - \exp( - \Omega(n/p \log^2 p))$. 
\end{thm}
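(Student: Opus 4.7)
The plan is to recognize Tyler's iteration as the canonical left-scaling step of Gurvits's operator scaling algorithm and then run the standard two-phase (warm-up plus linear) convergence argument, closed off using the paper's sharpened quantum expansion bounds.

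Concretely, the samples $x_i$ define the completely positive map $T(Y) = \frac{p}{n}\sum_i \hat x_i \hat x_i^* \, Y \, \hat x_i \hat x_i^*$ with $\hat x_i = x_i/\|x_i\|$, and one checks that Tyler's update $\Sigma_{k+1} = \frac{p}{n}\sum_i x_i x_i^*/(x_i^*\Sigma_k^{-1} x_i)$ is precisely the left-scaling that restores $T(I) = I$ in the basis $\Sigma_k^{1/2}$; the fixed point $\widehat\Sigma$ is the essentially unique positive scaling making $T$ doubly stochastic. Under this dictionary, the Tyler log-likelihood is (up to additive constants) the log-capacity potential
\[
F(\Sigma) \;=\; \log\det\Sigma \;+\; \frac{p}{n}\sum_i \log\!\bigl(x_i^*\Sigma^{-1} x_i\bigr),
\]
which is geodesically convex and scale-invariant on $\PD_p$ and is minimized at $\widehat\Sigma$.

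With this reduction in place I would proceed in two phases. First, a descent lemma: a standard operator-scaling argument using operator concavity of $\log\det$ shows $F(\Sigma_k) - F(\Sigma_{k+1}) \geq c\,g(\Sigma_k)$, where $g$ measures $\bigl\|I_p - \frac{p}{n}\sum_i \hat z_i\hat z_i^*\bigr\|$ at the twisted samples $\hat z_i = \Sigma_k^{-1/2}x_i/\|\Sigma_k^{-1/2} x_i\|$. Second, an initial-potential bound: starting from $\Sigma_0 = I_p$ and using scale-invariance of $F$, a short moment computation gives $F(\Sigma_0) - F(\widehat\Sigma) = O(|\log\det\Sigma| + p)$ on the same high-probability event as \cref{thm:elliptical}. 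Together these two ingredients already yield additive decrease of $F$ for $O(|\log\det\Sigma|+p)$ iterations; the $\log(1/\eps)$ tail then comes from upgrading the descent lemma to a multiplicative $(1 - \Omega(1))$ decrease once we are near the optimum.

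The main obstacle, and the place where the paper's quantum expansion machinery is indispensable, is twofold: converting the bound on $F$ into the Frobenius-norm bound \eqref{eq:sinkhorn-bound}, and producing the strong-convexity needed to make the descent lemma multiplicative near $\widehat\Sigma$. Both rest on a quantitative Polyak--{\L}ojasiewicz / entropy inequality of the form
\[
\bigl\|I_p - \widehat\Sigma^{1/2}\Sigma^{-1}\widehat\Sigma^{1/2}\bigr\|_F^2 \;\leq\; \frac{C}{\lambda}\,\bigl[F(\Sigma) - F(\widehat\Sigma)\bigr],
\]
where $\lambda > 0$ is the spectral gap of the doubly-stochastic channel $T$ at $\widehat\Sigma$. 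The sharpened quantum expansion estimate of the paper (improving \cite{KLR19} to the inverse-exponential failure probability claimed in the theorem) guarantees $\lambda = \Omega(1)$ with probability $1 - \exp(-\Omega(n/p\log^2 p))$ whenever $n \geq C p\log^2 p$, which closes the loop. Establishing this quantum-expansion bound with the required inverse-exponential tail is the technical heart of the argument; everything else is standard operator scaling bookkeeping.
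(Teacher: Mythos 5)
Your proposal follows essentially the same route as the paper: the same reduction of Tyler's iteration to Sinkhorn scaling with the capacity potential as progress measure, a warm-up phase bounded by $O(p+|\log\det\Sigma|)$, a linear phase driven by strong geodesic convexity (equivalently your PL/quadratic-growth inequality) obtained from quantum expansion, and the sharpened high-probability expansion bound for random unit vectors. The only points you leave implicit—that expansion must be transferred from $\Phi_{\vec v}$ to its doubly balanced scaling at $\widehat{\Sigma}$ (\cref{lem:scaled-expander}) and that the strong-convexity/PL estimate holds only on a sublevel set, within which the iterates remain by monotonicity of the potential (\cref{lem:level-set})—are exactly how the paper closes these steps, so they are refinements rather than gaps.
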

Moreover, in \cref{sec:finite-precision} we show that the above theorems hold even with access to finitely many bits of the samples. In light of the many recent results on operator scaling and its generalizations \cite{straszak2017computing,burgisser2017alternating,burgisser2018efficient, burgisser2019towards, KLR19}, we hope the connection we make here will shed light on other problems in statistical estimation.

\begin{rem}[Error metric]
One notes that we use the following measure of error:
\begin{gather} \|I_p -  \Sigma^{1/2} \widehat{\Sigma}^{-1} \Sigma^{1/2}\|_{op}.\label{eq:mahal}\end{gather}
We use this error because the \emph{Mahalanobis distance} between $\widehat{\Sigma},\Sigma$, given by \begin{gather}\|I_p -  \Sigma^{1/2} \widehat{\Sigma}^{-1} \Sigma^{1/2}\|_{F},\label{eq:real-mahal}
\end{gather} exceeds it by at most a factor $\sqrt{p}$. The Mahalanobis distance is a natural metric to study, because in the special case when $X$ is a Gaussian with covariance $\Sigma$, if the Mahalanobis distance is at most a small constant then it is on the order of the total variation distance between $X$ and the Gaussian with covariance $\widehat{\Sigma}$ \cite{barsov1987estimates}. Because $\Omega(p)$ samples are required to estimate the covariance of a Gaussian to constant spectral norm, and $\Omega(p^2)$ for constant Mahalanobis distance, \cref{thm:elliptical,thm:elliptical-frob} are tight up to logarithmic and constant factors, respectively. Moreover, the error metric \cref{eq:mahal} satisfies an approximate version of the triangle inequality (\cref{lem:triangle-ineq}), so approximating the estimator in this metric suffices to approximate the shape in the metric.
\end{rem}

\section{Tyler's M-estimator and operator scaling}

In this section we outline the connection between Tyler's M-estimator and operator scaling. As mentioned in the introduction, Tyler's M-estimator is precisely the ``rescaling" for a certain operator constructed from the samples. In some sense, the size of the rescaling governs the nearness of the estimator to the truth. Bounding the sizes of such scalings is a problem that arises naturally in operator scaling, and as shown by \cite{KLR19}, the size can be bounded by showing that the constructed operator is an \emph{approximate quantum expander}. In \cref{sec:quantum-expanders} we make this observation precise, and to prove \cref{thm:elliptical,thm:fast-sinkhorn-elliptical} we improve bounds on the probability that the operator is a quantum expander and show that the general form of Tyler's iterative procedure, known as \emph{Sinkhorn scaling}, converges in time $\poly(\log(1/\eps))$ on approximate quantum expanders.

\subsection{Elliptical distributions}\label{subsec:statement}
Consider $X$ drawn from a centered elliptical distribution $u \Sigma^{1/2} V$ on $\RR^p$, in which $u$ is random scalar independent of $V$, $\Sigma$ (known as the \emph{shape matrix}) is a fixed $p\times p$ positive-semidefinite matrix with $\tr \Sigma = 1$, and $V$ is a uniformly random element of $\Ess^{p - 1}$.

Our task is to find $\widehat{\Sigma}$ estimating $ \Sigma$. If $x_i, i\in [n]$ are drawn i.i.d from $X$, then Tyler's M-estimator $\widehat{\Sigma}$ for $ \Sigma$ is the defined to be the solution $\widehat{\Sigma}$ to the two equations
\begin{align} \frac{p}{n} \sum_{i =1}^n \frac{x_i x_i^\dagger}{x_i^\dagger \widehat{\Sigma}^{-1} x_i } &= \widehat{\Sigma}\label{eq:tyler}\\
\tr \widehat{\Sigma} &= p.\nonumber
\end{align} 
when the solution exists and is unique. 

It is known that if every $k$-dimensional subspace contains \emph{strictly less} than $k n/p$ vectors $x_i$, then Tyler's M-estimator exists and is unique, and also that if Tyler's M-estimator exists then every $k$-dimensional subspace contains \emph{at most} $k n/p$ vectors \cite{Tyler_1987}. Note the subtle difference between the two conditions. In fact, as shown by King in 1994 in a different language and without any apparent knowledge of Tyler's M estimator, the above sufficient condition is actually necessary. Below we include King's theorem stated in the language of Tyler's M estimator. We remark that \cref{it:tyler-uniqueness} in what follows does not require the full sophistication of King's theorem, so we provide a short proof in appendix \cref{sec:king} for completeness.
\begin{thm}[Existence of Tyler's M-estimator \cite{king1994moduli}] \label{thm:king}
Let $S = \{x_1, \dots, x_n\} \subset \RR^p$. 
\begin{enumerate}
\item \cref{eq:tyler} can be solved in $\PD(p)$ up to error $\eps$ for all $\eps > 0$ if and only if every $k$-dimensional subspace of $\RR^p$ contains at most $k n/p$ elements of $S$.
\item \cref{eq:tyler} has a solution in $\PD(p)$ if and only 
\begin{enumerate}
\item Every $k$-dimensional subspace of $\RR^p$ contains at most $k n/p$ elements of $S$, and
\item For every $k$-dimensional subspace $L$ of $\RR^p$ containing exactly $kn/p$ elements of $S$, there is a subspace $L'$ complementary to $L$ in $\RR^p$ containing the other $n - kn/p$ elements of $S$.
\end{enumerate}
\item\label{it:tyler-uniqueness} \cref{eq:tyler} has a unique solution in $\PD(p)$, or equivalently, Tyler's M-estimator exists, if and only if every proper $k$-dimensional subspace of $\RR^p$ contains strictly less than $kn/p$ elements of $S$. 
\end{enumerate}
\end{thm}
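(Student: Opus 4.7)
The plan is to interpret Tyler's equation as the first-order optimality condition for a geodesically convex function on positive-definite matrices, and then read off each of the three parts from the asymptotic behavior of this function along geodesic rays.

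Consider
\begin{equation*}
f(\Sigma) := \frac{p}{n}\sum_{i=1}^n \log\bigl(x_i^\dagger \Sigma^{-1} x_i\bigr) + \log\det \Sigma
\end{equation*}
on $\PD(p)$. Differentiating shows $\Sigma$ is a critical point iff $\Sigma$ solves \cref{eq:tyler}; the trace normalization merely picks a representative in the scalar family under which $f$ is invariant. Along geodesics $\Sigma(t)=\exp(tH)$ from the identity with $H$ traceless symmetric, each summand $\log\sum_j e^{-t\lambda_j} a_{ij}$ is a log-sum-exp of affine functions of $t$, so $f$ is convex along the ray, and strictly so unless every $x_i$ is an eigenvector of $H$. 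The three parts of the theorem then become: (1) $f$ is bounded below; (2) $f$ attains its infimum; (3) $f$ has a unique minimizer.

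For part 1, the asymptotic slope along the ray $\exp(tH)$ with $H=\diag(\lambda_1\geq\cdots\geq\lambda_p)$ equals $-\frac{p}{n}\sum_i \lambda_{j_{\max}(i)}$, where $j_{\max}(i):=\max\{j:x_i^{(j)}\neq 0\}$ picks the most negative eigenvalue of $H$ on which $x_i$ is supported. Taking $H$ equal to $-(p-k)$ on a $k$-dimensional subspace $L$ and $k$ on $L^\perp$ makes this slope $\frac{p}{n}(kn-p|S\cap L|)$, so non-negativity along these distinguished rays is exactly $|S\cap L|\leq kn/p$. An Abelian convexity argument (Schur majorization in the eigenvalues of $H$) extends non-negativity from these rays to every traceless symmetric $H$, and geodesic convexity converts ``non-negative asymptotic slope along every ray'' into ``globally bounded below''. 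Approximate solvability of \cref{eq:tyler} then follows from gradient descent on $f$, giving part 1. For part 3, strict inequality $|S\cap L|<kn/p$ makes every asymptotic slope strictly positive; continuity of the slope and compactness of the unit sphere of directions produce a uniform linear lower bound $f(\Sigma)\geq c\cdot d(\Sigma,I)-C$, yielding coercivity and hence existence of a minimizer. Strict geodesic convexity --- which fails only if some nontrivial $H$ has every $x_i$ as an eigenvector, forcing an equality case in the subspace count --- then delivers uniqueness.

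Part 2 is the most delicate because the infimum can be attained even when some direction has asymptotic slope exactly zero. Zero slope together with attainment forces $f$ to be affine and in fact constant along that ray, which means every $x_i$ is an eigenvector of $H$; the associated eigenspace decomposition $\RR^p=V_1\oplus\cdots\oplus V_r$ then partitions $S$, and combined with semistability forces each $V_j$ to contain exactly $(\dim V_j)n/p$ samples --- precisely condition (b). I would proceed by induction on $p$: given such a splitting, apply the induction hypothesis inside each summand, solve \cref{eq:tyler} restricted to $V_j$, and glue. The main technical obstacle throughout is the coercivity step of part 3, upgrading pointwise positivity of asymptotic slopes to a uniform linear lower bound in $d(\Sigma,I)$. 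The standard remedy is to use continuity and compactness of the Hilbert--Mumford numerical invariant on the unit sphere of directions; once this is in hand, the remaining deductions for part 3 are formal.
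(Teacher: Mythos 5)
The central step of your part 1 is a genuine gap. You assert that geodesic convexity converts ``non-negative asymptotic slope along every ray'' into ``globally bounded below,'' and then obtain approximate solvability of \cref{eq:tyler} by descent. That implication is false for convex functions in general: already on $\RR$ the convex function $f(x)=\sup_{k\geq 1}(-x/k-k)$, which behaves like $-2\sqrt{x}$ as $x\to+\infty$, has non-negative recession slope in both directions yet $\inf f=-\infty$. In your setting the point is uniformity in the direction $H$: along each fixed ray the log-sum-exp structure does give $f(e^{tH})\geq t\,\mu(H)+c(H)$, but the constant $c(H)$ is a sum of $\log a_{ij}$ terms that degenerates to $-\infty$ as the $x_i$ become nearly contained in eigenspaces of $H$, and convexity alone cannot rule out the infimum over directions being $-\infty$. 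Deducing boundedness below (equivalently, positivity of the capacity) from the weak subspace condition is precisely the non-trivial content of the Hilbert--Mumford/Kempf--Ness semistability criterion for the non-abelian group action here (equivalently, of Gurvits' capacity theorem, \cref{thm:gurvits} plus King's characterization); it cannot be waved through as a formal consequence of convexity. By contrast your part 3 sufficiency sketch is essentially sound, since strictly positive slopes do yield coercivity by a compactness argument---though note the slope $H\mapsto\mu(H)$ is only lower semicontinuous (supports can jump), which suffices; and there is a sign slip in your distinguished rays ($H=-(p-k)\pi_L+k\pi_{L^\perp}$ controls $|S\cap L^\perp|$, not $|S\cap L|$; you want the opposite sign, harmlessly).

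Beyond this, your proposal leaves unproved exactly the piece the paper actually proves. The paper cites King for parts 1 and 2 and Tyler for the sufficiency in part 3; its own contribution (\cref{sec:king}) is the necessity in part 3: placing the data in radial isotropic position, using $\tr\pi_W=k$ together with $\tr\pi_W\geq \frac{p}{n}\cdot\frac{kn}{p}$ to force every $y_i$ into $W$ or $W^\perp$, and exhibiting the explicit one-parameter family of solutions $a\pi_W+a^{-1}\pi_{W^\perp}$, which kills uniqueness in the equality case. Your write-up asserts only the ``if'' direction of part 3 (uniqueness from strict convexity) and never addresses non-uniqueness when some subspace contains exactly $kn/p$ points; your part 2 is likewise only an outline (both the gluing/induction and its converse need to be carried out, e.g.\ via the same projection-trace argument). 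So as written the proposal does not establish the theorem; repairing it essentially requires importing the GIT/Kempf--Ness machinery (or Gurvits' theorem) for part 1 and supplying the equality-case construction for parts 2 and 3.
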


Moreover there is a simple iterative procedure for computing $\widehat{\Sigma}$ when it exists uniquely:

\begin{dfn}[Tyler's iterative procedure]\label{dfn:iterative} Set $\widehat{\Sigma}_0 = I_p$. For $t \in [T]$ set 
\begin{gather} \frac{p}{n} \sum_{i =1}^n \frac{x_i x_i^\dagger}{x_i^\dagger \widehat{\Sigma}_{t-1}^{-1} x_i } = \widehat{\Sigma}_{t}.\label{eq:tyler-fixedpt}
\end{gather}
Output $p \widehat{\Sigma}_{T}/\tr \widehat{\Sigma}_{T}$.

\end{dfn}
It is immediate that any fixed point of \cref{eq:tyler-fixedpt} is, up to a scalar multiple, Tyler's $M$ estimator $\widehat{\Sigma}$. Though \cite{Tyler_1987} also includes a normalization step, normalizing at the end of the procedure has the same effect and in any case the procedure in \cref{dfn:iterative} converges without normalization \cite{WZ15}.

\subsection{Operator scaling}\label{subsec:opscal}

The objects of study are \emph{completely positive maps}, maps $\Phi:\Mat(p) \to \Mat(n)$ between matrix spaces such that there exist $A_1, \dots, A_r \in \Mat(n \times p)$ such that $\Phi(Y) = \sum_{i = 1}^r A_i Y A_i^\dagger$. Completely positive maps arose in the study of $C^\star$ algebras, and play a role in quantum mechanics analogous to the role played by nonnegative matrices in classical probability. We will need some terminology. Let $\PD(p)$ denote the set of positive-definite $p\times p$ matrices.

\begin{dfn}[Completely positive maps]
Let $\Phi(X) = \sum_{i = 1}^r A_i X A_i^\dagger$ be a completely positive map.
\begin{itemize}
\item The \emph{size} of $\Phi$, denoted $s(\Phi)$, is given by $\tr \Phi(I_p).$
\item The \emph{dual} $\Phi^*:\Mat(n) \to \Mat(p)$ of $\Phi$ is its Hermitian adjoint and is given by $\Phi^*(X) = \sum_{i = 1}^r A_i X A_i^\dagger$.
\item Say $\Phi$ is \emph{$\eps$-doubly balanced} if 
$$\|\Phi(I_p)  - s(\Phi) I_n/n \|_{op} \leq s(\Phi)\eps/n\textrm{ and }\|\Phi^*(I_n)  - s(\Phi) I_p/p \|_{op} \leq s(\Phi)\eps/p,$$
and \emph{balanced} if it is $0$-doubly balanced. Up to scalar multiples, the balanced complely positive maps are exactly the \emph{unital quantum channels}.
\item If $L \in \GL(p), R \in \GL(n)$, let $ \Phi_{L,R}$, called a \emph{scaling} of $\Phi$ by $L,R$, denote the completely positive map given by
$$ \Phi_{L,R}:X \mapsto R \Phi(L^\dagger X L) R^\dagger.$$
Note that $(\Phi_{L,R})^* = \Phi_{R,L}.$
\end{itemize}
\end{dfn}
A central problem in operator scaling is the the existence, and computational efficiency of finding, doubly balanced scalings of a given operator. Analogously to Tyler's iterative procedure (\cref{dfn:iterative}), there is an iterative procedure to output $Z := L^\dagger L$. Note that if $\Phi_{L,R}$ is a balanced scaling of $\Phi$ then $R^\dagger R$ is equal to $\Phi(Z)^{-1}$, and $\phi_{\sqrt{L^\dagger L}, \sqrt{R^\dagger R}}$ is also balanced, so it is enough to look for scalings of the form $Z^{1/2}, \Phi(Z)^{-1/2}$. The following iterative procedure converges to such a $Z$ if it exists \cite{gurvits2003classical}.

\begin{dfn}[Sinkhorn's algorithm]\label{dfn:sinkhorn-alg}
Let $Z_0 \in \PD(p)$. 
For $t \in \ZZ_{\geq 1}$, set 
\begin{gather}\label{eq:fixedpt}Z_{t}^{-1} = \frac{p}{n} \Phi^*(\Phi(Z_{t-1})^{-1}).\end{gather}
We say $Z_i$ is the $t^{th}$ Sinkhorn iterate starting at $Z_0$. 
\end{dfn}
One can immediately check that $Z \in \PD(p)$ is a fixed point of Sinkhorn's algorithm if and only if $\Phi_{\sqrt{Z}, \Phi(Z)^{-1/2}}$ is balanced. 

 In the context of Tyler's M-estimator, the relevant example is a completely positive map $\Phi_{\vec x}$ constructed from a tuple $\vec x$ of vectors. This construction arises in the context of the Brascamp-Lieb inequalities in real analysis \cite{garg2017algorithmic}. We next see that finding a balanced scaling of the operator solves \cref{eq:tyler} and vice versa.
\begin{exa}[The map $\Phi_{\vec x}$.] For $\vec x \in (\CC^p)^n$, let 
$$\Phi_{\vec x}:X\mapsto  \diag(\langle x_i, X x_i \rangle: i \in [n]).$$
Then for $z \in \RR^n$, $\Phi_{\vec x}^*(\diag(z)) = \sum_{i = 1}^n z_i x_i x_i^\dagger.$ 
\end{exa}

One checks that $Z$ satisfies \cref{eq:fixedpt} for $\Phi = \Phi_{\vec x}$ if and only if $\widehat{\Sigma} = Z^{-1}$ satisfies \cref{eq:tyler-fixedpt}. Thus, Tyler's $M$ estimator is precisely the inverse of the first component of a scaling that balances $\Phi_{\vec x}$. Moreover, the output of $T$ steps of Tyler's iterative procedure is actually $pZ_T^{-1}/\tr Z_T^{-1}$ where $Z_T$ is the $T^{th}$ Sinkhorn iterate for $\Phi_{\vec x}$ starting at $I_p$.

As discussed earlier, the size of the scalings control the accuracy of the estimator, and we may control the size by showing that the operator is an approximate quantum expander. Roughly, a completely positive map is an approximate quantum expander if, as a linear map, its first singular vector is close to the identity matrix and its second singular value is strictly less than the first. 

\begin{dfn}[Appoximate quantum expanders] Let $\Phi$ be a completely positive map. Then $\Phi$ is a \emph{$(\eps, 1- \lambda)$-quantum expander} if it is $\eps$-doubly balanced and 
$$ \sup_{X \in \Herm(p): \tr X = 0} \frac{\|\Phi(X)\|_F}{\|X\|_F} \leq (1 - \lambda) \frac{s(\Phi)}{\sqrt{np}}.$$
A $(0, 1 - \lambda)$-quantum expander is called a $(1 -\lambda)$-\emph{quantum expander}.
\end{dfn}
We use a result of \cite{KLR19} relating expansion to the condition numbers of the scaling factors of an approximate quantum expander.

\begin{thm}[Follows from Theorem 1.7 of \cite{KLR19}]\label{thm:lapchi}
If $\Phi$ is an $(\eps, 1 -\lambda)$ quantum expander and $\eps \log p/\lambda^2$ is at most a small enough constant, then there are $L,R$ with $\det L = \det R = 1$ satisfying
$$\|I_p - L^\dagger L\|_{op}, \|I_n - R^\dagger R\|_{op} =  O(\eps \log p/\lambda)$$
such that $\Phi_{L,R}$ is doubly balanced.
\end{thm}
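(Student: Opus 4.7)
The plan is to obtain $L, R$ as a fixed point of Sinkhorn's algorithm near the identity, via a Banach fixed-point argument. If $Z \in \PD(p)$ satisfies the Sinkhorn equation $Z^{-1} = (p/n)\Phi^*(\Phi(Z)^{-1})$, then $\Phi_{\sqrt{Z}, \Phi(Z)^{-1/2}}$ is balanced; after rescaling $Z^{1/2}$ and $\Phi(Z)^{-1/2}$ by the appropriate determinant factors one obtains $L, R$ with $\det L = \det R = 1$, and $L^\dagger L, R^\dagger R$ differ from $Z/(\det Z)^{1/p}$ and $\Phi(Z)^{-1}/(\det \Phi(Z)^{-1})^{1/n}$ only by scalars within $1 \pm O(\|Z-I_p\|_{op})$ of $1$. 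So the task reduces to producing such a fixed point $Z$ within operator-norm distance $O(\eps \log p/\lambda)$ of $I_p$.

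Let $T(Z) := [(p/n)\Phi^*(\Phi(Z)^{-1})]^{-1}$. Two estimates drive the argument. First, the $\eps$-doubly balanced hypothesis yields $\|T(I_p) - I_p\|_{op} = O(\eps)$ directly in operator norm: writing $\Phi(I_p) = (s(\Phi)/n)(I_n + E)$ with $\|E\|_{op} \leq \eps$, inverting to first order, and applying the positivity-based bound $\|\Phi^*(X)\|_{op} \leq \|\Phi^*(I_n)\|_{op}\,\|X\|_{op}$ delivers the claim. Second, linearizing $T$ at $Z = I_p$ gives
\[
DT|_{I_p}(H) \;=\; \frac{np}{s(\Phi)^2}\,\Phi^*\Phi(H) \;+\; O(\eps \|H\|),
\]
and on trace-zero perturbations quantum expansion applied to $\Phi$ and its dual gives $\|\Phi^*\Phi(H)\|_F \leq (1-\lambda)^2 (s(\Phi)^2/(np))\,\|H\|_F$. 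Thus the linearization contracts trace-zero directions in Frobenius norm by $(1-\lambda)^2 \leq 1-\lambda$; controlling the $O(\|H\|_{op}^2)$ quadratic remainder on a small operator-norm neighborhood of $I_p$ upgrades this to a Lipschitz contraction of $T$ itself with ratio $1-\Omega(\lambda)$. Iterating from $I_p$ therefore gives a geometric Cauchy sequence $Z_t := T^t(I_p) \to Z^*$.

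The remaining and most technical step is to bound the total displacement $\|Z^* - I_p\|_{op}$ without losing a $\sqrt p$ factor. A direct Frobenius-norm accounting gives $\|Z^* - I_p\|_F = O(\|T(I_p) - I_p\|_F / \lambda) = O(\sqrt{p}\,\eps/\lambda)$, since the Frobenius norm of the initial step can be as large as $\sqrt p$ times its operator norm. To recover the logarithmic dependence claimed, one runs the contraction argument instead in the Schatten $S_{2k}$ norm for $k = \Theta(\log p)$, which is within a universal constant of the operator norm; the required ingredient is that $(np/s(\Phi)^2)\Phi^*\Phi$ still contracts trace-zero Hermitian matrices by $1 - \Omega(\lambda)$ in this norm, a statement extracted from the Frobenius spectral gap by a higher-moment (Schatten $2k$-trace) computation. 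This is exactly the source of the $\log p$ factor both in the smallness hypothesis $\eps\log p/\lambda^2 \leq c$ and in the final displacement bound $O(\eps \log p/\lambda)$.

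The main obstacle I anticipate is precisely this last, norm-upgrade step: passing from a Frobenius-norm spectral gap for $\Phi$ to a Schatten-$2\log p$-norm contraction for $\Phi^*\Phi$ on the trace-zero slice is not soft, and requires a genuine higher-moment estimate in the spirit of the matrix-Bernstein and Hanson--Wright style analyses developed for quantum expanders in \cite{KLR19}.
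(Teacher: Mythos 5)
The crux of your argument is unproven, and it is exactly the hard content of the result being invoked. Your fixed-point scheme correctly reduces the problem to showing that the linearized Sinkhorn map $H \mapsto (np/s(\Phi)^2)\,\Phi^*\Phi(H)$ contracts trace-zero Hermitian matrices in a norm comparable to the operator norm, with only a logarithmic loss overall. But the step you describe as ``extracted from the Frobenius spectral gap by a higher-moment (Schatten $2k$-trace) computation'' does not follow from the hypotheses by any soft argument: the $(\eps,1-\lambda)$-expansion hypothesis is a statement about the $S_2\to S_2$ norm on the traceless slice, and such a bound gives no control of the $S_{2k}\to S_{2k}$ norm for $k=\Theta(\log p)$ without additional structure. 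The natural route (interpolating between the $S_2$ gap and the fact that an approximately unital, approximately trace-preserving completely positive map is an approximate contraction on $S_1$ and $S_\infty$) yields at best a ratio $1-\Omega(\lambda/k)$, not the $1-\Omega(\lambda)$ you assert (a ratio $1-\Omega(\lambda/k)$ would in fact still suffice for the displacement bound, but you assert the stronger statement and prove neither); moreover even that route requires verifying approximate trace-preservation so that the traceless slice is nearly invariant, transferring the expansion hypothesis from $\Phi$ to $\Phi^*$ (your bound on $\|\Phi^*\Phi(H)\|_F$ implicitly uses expansion of $\Phi^*$, which is not literally the hypothesis), and controlling the quadratic remainder of $T$ in the same norm. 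None of this is carried out, and you acknowledge as much; as written, the proposal establishes only the easy $O(\sqrt{p}\,\eps/\lambda)$ Frobenius-type bound, which the paper obtains separately and more simply in \cref{thm:frob-improvement}.

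The paper does not attempt any of this: \cref{thm:lapchi} is explicitly a repackaging of Theorem 1.7 of \cite{KLR19} (stated as \cref{thm:actual-lapchi}), which is where the $\log p$-losing condition-number bound lives. The paper's proof consists of (i) the elementary translation \cref{lem:translator}, showing that an $(\eps,1-\lambda)$-quantum expander has spectral gap $\lambda-O(\eps)$, (ii) an application of the cited theorem to $\Phi^*$ (which has the same spectral gap), and (iii) rescaling $L,R$ by the geometric means of their singular values so that $\det L=\det R=1$ and the condition-number bound becomes $\|I_p-L^\dagger L\|_{op},\|I_n-R^\dagger R\|_{op}=O(\eps\log p/\lambda)$. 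If you intend to prove the statement from scratch rather than cite it, you must supply the norm-upgrade estimate; until then the proposal has a genuine gap at its central step.
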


\cite{KLR19} actually studied a relaxed notion of approximate quantum expansion, called the \emph{spectral gap}. We discuss the relationship between spectral gap and approximate quantum expansion in \cref{sec:spectral-gap}, and show in particular that the hypotheses of Theorem 1.7 of \cite{KLR19} are equivalent to those of \cref{thm:lapchi}.

Next we define a function depending on a completely positive map that arises as a progress measure for the Sinkhorn scaling algorithm, and also for the specific case $\Phi = \Phi_{\vec x}$ is the quantity playing the role analogous to the log-likelihood for Tyler's $M$ estimator \cite{WZ15}.
\begin{dfn}[The capacity]\label{dfn:function}
Let $\Phi:\Mat(p) \to \Mat(n)$ be a completely positive map. Define the function $f^{\Phi}: \PD(p) \to \RR$ by 
$$ f^{\Phi}: \frac{p}{n} \log \det \Phi(Z) - \log \det Z.$$
The quantity $\inf_{Z \succ 0} f^{\Phi}$ is the logarithm of the \emph{capacity}, denoted $\capa(\Phi)$, defined in \cite{gurvits2003classical}.
\end{dfn}
The link between capacity and scalability is a central result in operator scaling.
\begin{thm}[\cite{gurvits2003classical}]\label{thm:gurvits} $\capa (\Phi) > 0$ if and only if $\Phi$ has an $\eps$-doubly balanced scaling for every $\eps > 0$. 
\end{thm}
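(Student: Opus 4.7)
The plan centers on the potential $f^\Phi$ from \cref{dfn:function}, whose infimum over $\PD(p)$ equals $\log\capa(\Phi)$. Three properties are key: (i) $f^\Phi$ is scale invariant, $f^\Phi(tZ)=f^\Phi(Z)$; (ii) under an operator scaling, $f^{\Phi_{L,R}}(Y)=f^\Phi(L^\dagger Y L)+2\log|\det L|+(2p/n)\log|\det R|$, so rescalings with $\det L=\det R=1$ preserve $\capa$; and (iii) the Euclidean gradient $(p/n)\Phi^*(\Phi(Z)^{-1})-Z^{-1}$ vanishes at $Z$ precisely when $\Phi_{\sqrt{Z},\Phi(Z)^{-1/2}}$ is doubly balanced.

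For the forward direction ($\capa(\Phi)>0 \Rightarrow$ $\eps$-doubly balanced scalings exist for every $\eps>0$), I would run Sinkhorn's algorithm (\cref{dfn:sinkhorn-alg}) from $Z_0=I_p$ and use $f^\Phi$ as a progress measure. Setting $W_t=Z_t^{-1/2}Z_{t+1}Z_t^{-1/2}$ and invoking concavity of $\log\det$ together with the Sinkhorn update yields
\begin{gather*}
f^\Phi(Z_t)-f^\Phi(Z_{t+1})\;\geq\;-\log\det W_t.
\end{gather*}
Because $W_t^{-1}=(p/n)\Phi_{\sqrt{Z_t},\Phi(Z_t)^{-1/2}}^*(I_n)$ has trace $p$, AM-GM on its eigenvalues gives $-\log\det W_t\geq 0$, with a strictly positive lower bound whenever the scaling $\Phi_{\sqrt{Z_t},\Phi(Z_t)^{-1/2}}$ (which is already primal balanced) fails to be doubly balanced. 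Since $f^\Phi(Z_0)=(p/n)\log\det\Phi(I_p)$ is finite and $f^\Phi$ is bounded below by $\log\capa(\Phi)>-\infty$ by hypothesis, the per-step decreases are summable, forcing the unbalancedness to drop below any prescribed $\eps$ in finitely many iterations.

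For the reverse direction, assume $\eps$-doubly balanced scalings $\Phi_{L_\eps,R_\eps}$ exist for every $\eps>0$. Normalizing $\det L_\eps=\det R_\eps=1$ preserves the $\eps$-balanced property (which is invariant under rescaling $\Phi\mapsto c\Phi$) and, by (ii), preserves $\capa$. It therefore suffices to show that any $\eps$-doubly balanced $\Psi$ with $\eps$ small has $\capa(\Psi)>0$. One route is a Van der Waerden-type AM-GM inequality giving a uniform lower bound on $\det\Psi(Y)/(\det Y)^{p/n}$ depending only on $n$, $p$, and $\eps$. Alternatively, since (iii) and the $\eps$-balanced hypothesis make $I_p$ an $O(\eps)$-approximate critical point of the geodesically convex function $f^\Psi$ on $\PD(p)$ with the affine-invariant metric, a compactness-and-convexity argument shows $\inf f^\Psi > -\infty$.

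The main technical obstacle is the quantitative form of the AM-GM step in the forward direction: any strictly positive gap suffices for the qualitative statement here, but \cref{thm:fast-sinkhorn-elliptical}'s polynomial iteration guarantee needs the stronger bound $-\log\det W_t\gtrsim\|\Phi_{\sqrt{Z_t},\Phi(Z_t)^{-1/2}}^*(I_n)-(n/p)I_p\|_F^2$, requiring care in how one applies AM-GM to eigenvalues clustered near $1$.
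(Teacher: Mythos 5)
This theorem is not proved in the paper at all---it is quoted from \cite{gurvits2003classical}---so the comparison is with the standard scaling argument, whose forward half does appear implicitly in \cref{lem:sinkhorn-progress}. Your forward direction is essentially that argument and is sound: Sinkhorn iteration (\cref{dfn:sinkhorn-alg}) with $f^\Phi$ as potential, Jensen/AM--GM for the per-step decrease, boundedness below by $\log\capa(\Phi)$, and the observation that the iterate's scaling is exactly balanced on the image side. Two small points: with your definition $W_t=Z_t^{-1/2}Z_{t+1}Z_t^{-1/2}$ the signs are inverted (since $\tr W_t^{-1}=p$, AM--GM gives $\log\det W_t\ge 0$, and the progress bound reads $f^\Phi(Z_t)-f^\Phi(Z_{t+1})\ge \log\det W_t$, i.e.\ your $W_t$ is the inverse of the matrix you want); and "a strictly positive gap whenever the scaling fails to be doubly balanced" must be made uniform in the amount of unbalancedness (the robust AM--GM bound $\log\det W\le -c\min\{\|W-I_p\|_F,\|W-I_p\|_F^2\}$ for $\tr W=p$ does this), which you acknowledge. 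Neither issue is fatal.

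The genuine gap is in the reverse direction. Your reduction to "any $\eps$-doubly balanced $\Psi$ with $\eps$ small has $\capa(\Psi)>0$" is fine (the determinant-one normalization and the identity $f^{\Phi_{L,R}}(Y)=f^\Phi(L^\dagger YL)+2\log|\det L|+\tfrac{2p}{n}\log|\det R|$ are correct), but that reduced statement is precisely the nontrivial quantitative core of Gurvits's theorem, and you do not prove it. Route (a) simply names the needed van der Waerden--type capacity lower bound for (approximately) doubly balanced operators; that bound is a substantive theorem in its own right, not an application of scalar AM--GM. Route (b) is not a valid argument: $\PD_1(p)$ is non-compact, and a geodesically convex function with an $O(\eps)$-approximate critical point need not be bounded below---convexity only gives $f^\Psi(e^{tX})\ge f^\Psi(I_p)-O(\eps)t$ along geodesic rays, which is compatible with $\inf f^\Psi=-\infty$. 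Indeed, whether $\inf f^\Psi>-\infty$ is governed by the asymptotic slopes of $t\mapsto f^\Psi(e^{tX})$, i.e.\ by the combinatorial rank/support conditions of \cref{thm:king}, and relating $\eps$-double balancedness (with $\eps$ below a dimension-dependent threshold) to those slopes is exactly the missing content; approximate criticality at a single point does not control them. So as written, the "only if" statement ($\eps$-balanced scalings for all $\eps$ imply $\capa>0$) is asserted rather than proved.
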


This result suggests that to find Tyler's M-estimator to accuracy $\eps$ in time $\poly(p,n,\log(1/\eps)$, it suffices to find an $\eps$-minimizer of $f^{\Phi}$ in time $\poly(p,n,\log(1/\eps)$, and indeed this was also shown in \cite{gurvits2003classical}. We now record a few properties of the function $f^{\Phi}$.

\begin{itemize}
\item $f^{\Phi}$ is invariant under multiplication of the input by a positive scalar, i.e. 
\begin{gather}f^{\Phi}(Z) = f^{\Phi}(\alpha Z) \textrm{ for all }\alpha > 0, Z\in \PD(p).\label{eq:invar}\end{gather}
This shows that the progress measure doesn't change whether we use normalized or unnormalized Sinkhorn iterates. 
\item $f^{\Phi}$ has a property called \emph{geodesic convexity}: $f^{\Phi}(A^\dagger e^{t X} A)$ is convex in $t$ for all $X \in \Herm(p)$ and $A\in \GL(p)$. This property, which can be checked quite easily using the Cauchy-Binet formula and elementary calculus, was already observed for the case $\Phi = \Phi_{\vec x}$ (though not in this language) in \cite{WZ15}.
\end{itemize}
For a certain choice of metric on the manifold of positive definite matrices $\PD(p)$, this definition of geodesic convexity does match the usual one, in which a real valued function on a Riemmannian manifold is defined to be geodesically convex if it is convex along geodesics. We make use of the \emph{geodesic gradient} of a function $f:\PD(p) \to \RR$ at a point $Z$, given by 
\begin{gather}\nabla f(Z):= \nabla_X f(\sqrt{Z} e^{X} \sqrt{Z})|_{X = 0} \label{eq:geodesic-gradient}\end{gather}
Note that also $\nabla f^{\Phi}(Z) = \nabla f^{\Phi}(\alpha Z)$ for all $\alpha, Z\in \PD(p)$. For example, for $f = f^{\Phi}$ as defined in \cref{dfn:function} we have
 \begin{gather}\nabla f^{\Phi}(Z)= \frac{p}{n}\sqrt{Z}\Phi^*(\Phi(Z)^{-1}) \sqrt{Z} - I_p. \label{eq:operator-gradient}\end{gather}
 This identity gives intuition for \cref{thm:gurvits}: if $f^\Phi$ has a local minimum, then it is a critical point at which the geodesic gradient $\nabla f^\Phi $ vanishes. But if the right-hand side of \cref{eq:operator-gradient} is zero, then $Z$ is a fixed point of \cref{eq:fixedpt}, i.e. $\Phi_{Z^{1/2}, \Phi(Z)^{-1/2}}$ is doubly balanced. In particular, if $\Phi= \Phi_{\vec x}$ then we obtain 
 \begin{gather*}\nabla f^{\Phi_{\vec x}}(Z)= \frac{p}{n}\sum_{i = 1}^n \frac{\sqrt{Z} x_i x_i^\dagger \sqrt{Z}}{x_i^\dagger Z x_i} - I_p,\end{gather*}
which shows that if $f^{\Phi_{\vec v}}$ has a local minimum then \cref{eq:tyler} has a solution.

\subsection{Technical contributions}\label{subsec:results}
Having established the link between scalings and Tyler's M-estimator, we now state our result on quantum expansion, our result on iterative scaling of quantum expanders, and their corollary for Tyler's M-estimator.

We now state the results on expansion of $\Phi_{\vec v}$ which, in conjunction with \cref{thm:lapchi}, allows us to conclude \cref{thm:elliptical,thm:fast-sinkhorn-elliptical}. The following improves on \cite{KLR19} which showed the same but with failure probability $O(p/n^{3/4})$.

\begin{thm}\label{thm:random-expander} There are absolute constants $C, c, \lambda$ such that the following holds. Let $v_1, \dots, v_n$ be Haar random unit vectors from the sphere $\mathbb{S}^{p-1}$. Then $\Phi_{\vec v}$ is an $\left(\eps, 1 - \lambda\right)$ quantum expander with probability at least $1  - O(e^{-q(p,n,\eps)})$, where 
$$ q(p,n,\eps) = \min \{(c \sqrt n \eps - C \sqrt{p})^2, cn - C p \log p\}.$$
\end{thm}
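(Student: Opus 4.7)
The plan is to verify the two defining properties of an $(\eps, 1-\lambda)$-quantum expander separately. Because $\|v_i\| = 1$, we have $\Phi_{\vec v}(I_p) = I_n$ exactly, so $s(\Phi_{\vec v}) = n$ and the first half of the $\eps$-double-balance condition holds automatically. The second half $\|\Phi_{\vec v}^*(I_n) - (n/p)I_p\|_{op} \leq n\eps/p$ is equivalent to the sample-covariance estimate $\|(p/n)\sum_i v_iv_i^\dagger - I_p\|_{op} \leq \eps$ for Haar unit vectors. Scaled by $\sqrt p$ these are isotropic sub-Gaussian with constant $\psi_2$-norm, so standard sub-Gaussian covariance concentration yields $\|(p/n)\sum_i v_iv_i^\dagger - I_p\|_{op} \leq C(\sqrt{p/n} + t/\sqrt n)$ with probability $1 - 2e^{-t^2}$; taking $t \asymp c\sqrt n\,\eps - C\sqrt p$ produces the first term $(c\sqrt n\eps - C\sqrt p)^2$ of the failure exponent.

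The main task is the spectral gap condition: $\sup_{X\in\Herm(p),\,\tr X = 0,\,\|X\|_F = 1}\sum_i\langle v_i, Xv_i\rangle^2 \leq (1-\lambda)^2 n/p$. A direct Haar fourth-moment computation gives expected value $2n/[p(p+2)] = O(n/p^2)$, so we have slack of order $n/p$ to absorb deviations -- an enormous margin relative to the mean. For a fixed such $X$, Levy's isoperimetric inequality applied to the $2\|X\|_{op}$-Lipschitz function $v\mapsto\langle v,Xv\rangle$ shows this quantity is $O(\|X\|_{op}/\sqrt p)$-sub-Gaussian, hence its square is $O(\|X\|_{op}^2/p)$-sub-exponential; sub-exponential Bernstein then yields
\[
\Pr\!\left[\sum_i \langle v_i, X v_i\rangle^2 - \EE \geq cn/p\right] \leq \exp(-\Omega(n))
\]
uniformly over $\|X\|_{op} \leq 1$, giving the desired per-point exponent $cn$.

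The principal obstacle is upgrading this pointwise tail to a uniform bound over the continuum of $X$. A naive $1/2$-net on the Frobenius-unit sphere of traceless symmetric matrices has size $e^{\Omega(p^2)}$, which would force $n \gtrsim p^2$, far too many. To reach the claimed $n \gtrsim p\log p$ regime, I would combine the pointwise Bernstein with a layered/chaining argument: split $X = X_{\mathrm{low}} + X_{\mathrm{high}}$ where $X_{\mathrm{low}}$ carries the top-$r$ eigenvalues and $\|X_{\mathrm{high}}\|_{op} \leq \|X\|_F/\sqrt r$. On the low-rank piece a net has size only $e^{O(rp\log p)}$, and the per-point tail $e^{-\Omega(n)}$ costs $Cp\log p$ in the exponent for $r = O(1)$; on the high-rank piece, the improved bound $\|X_{\mathrm{high}}\|_{op}^2 \leq 1/r$ sharpens the pointwise Bernstein to $e^{-\Omega(nr)}$, which for appropriately chosen $r$ absorbs the ambient $p^2$-dimensional net. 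Balancing $r$ (potentially via full Dudley-type chaining, or by working on the dual side with the $n\times n$ Gram matrix $G'_{ij}=\langle v_i,v_j\rangle^2$ whose top eigenvalue reflects the identity direction and whose gap we must control) yields failure probability $e^{-\Omega(cn - Cp\log p)}$, and a union bound with the balance estimate produces the stated bound. The main technical difficulty -- and the main sharpening over the $O(p/n^{3/4})$ bound of \cite{KLR19} -- is executing this entropy/tail tradeoff with universal constants so that the claimed exponent $\min\{(c\sqrt n\eps - C\sqrt p)^2,\, cn - Cp\log p\}$ emerges cleanly.
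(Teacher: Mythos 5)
Your treatment of the $\eps$-double-balancedness is fine and is exactly what the paper does (note $\Phi_{\vec v}(I_p)=I_n$ holds identically, and the other half is the sub-Gaussian covariance bound of \cref{thm:vershinyn}), but the expansion half of your argument has a genuine gap precisely at the step you yourself defer: the entropy/tail tradeoff. Your pointwise Bernstein bound for a fixed traceless $X$ with $\|X\|_F=1$ gives exponent roughly $n\min\{c^2/\|X\|_{op}^4,\,c/\|X\|_{op}^2\}$, which is only $e^{-\Omega(n)}$ when $\|X\|_{op}$ is of constant order --- and this does \emph{not} improve with the rank $r$ of your low piece $X_{\mathrm{low}}$, since a rank-$r$ matrix of unit Frobenius norm can have operator norm $1$. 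Hence the union bound over the $e^{\Theta(rp\log p)}$-sized net of rank-$\le r$ traceless matrices forces $n\gtrsim rp\log p$, while the full-dimensional net for the piece with $\|X_{\mathrm{high}}\|_{op}\le 1/\sqrt r$ forces $nr\gtrsim p^2$; together these give $n\gtrsim p^{3/2}$ up to logarithms, not the claimed regime where the exponent $cn-Cp\log p$ is already positive at $n\asymp p\log p$. In particular the "$r=O(1)$" balancing you describe cannot work ($r=O(1)$ makes the high-rank net cost $p^2$ against a tail of only $e^{-\Omega(n)}$), and a genuine multi-scale refinement (dyadic eigenvalue bands plus a triangle inequality in $\ell_2$ over samples) appears to lose several extra $\log p$ factors at best; the dual Gram-matrix suggestion is likewise only a heuristic. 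Since this tradeoff is the entire content of the sharpening over \cite{KLR19}, the proof as proposed is incomplete.

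The paper circumvents this obstruction by never chaining over the matrix ball at all. Via \cref{lem:unitaries} and \cref{lem:operator-cheeger} the expansion constant is controlled by a Cheeger constant over cuts $(S,\pi)$ with $S\subset[n]$ and $\pi$ a rank-$k$ projection, and the crucial structural feature is that the cut/volume quantities for a rank-$k$ projection live at scale $kn/p$: the relevant deviations are proportional to $k$, so (using Beta-moment sub-exponential bounds and negative association, \cref{lem:subexp,cor:subexp}) the failure probability for a fixed $\pi$ and all $2^n$ sets $S$ is $e^{-\Omega(kn)}$ (\cref{lem:fixed-proj}), which grows with $k$ at exactly the rate needed to absorb the $e^{O(kp\log p)}$ net of rank-$k$ projections (\cref{lem:net-exists,lem:net-suffices}). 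A Cheeger-type inequality (\cref{lem:graph-cheeger}, via the spectral-gap results of \cite{KLR19} and \cref{lem:translator}) then converts a constant lower bound on $\ch(\vec v)$ into constant quantum expansion, the quadratic loss being harmless since $\lambda$ is an absolute constant. Your direct second-moment approach lacks this "deviation scales with rank" mechanism, which is why the union bound cannot be closed in the stated regime; if you want to salvage a direct argument you would need to build that scaling in, e.g.\ by normalizing the test objects so that the quantity being concentrated has mean and allowed deviation proportional to the dimension of the net being used, which is essentially what the cut formulation accomplishes.
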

We prove the theorem in \cref{sec:random}. We combine it with \cref{thm:lapchi} in \cref{sec:samples} to prove \cref{thm:elliptical}.

We next turn to algorithmic considerations. By the connection between optimization and scaling in \cref{subsec:opscal}, Tyler's M-estimator can be computed by finding an $\eps$-approximate minimizer to the convex function $f^{\Phi_{\vec x}^*}$. The works \cite{gurvits2002deterministic, AM13} use the ellipsoid method to accomplish this, showing Tyler's M-estimator can be computed up to accuracy $\eps$ in time $\poly(p,n,\log(1/\eps))$. Later in \cite{AGLOW18} it was shown that the optimization problem can be solved by second-order ``trust region" methods.

Though the algorithms of \cite{gurvits2002deterministic, AM13, AGLOW18} have polynomial time guarantees, both the ellipsoid method and trust regions tend to be slow. Tyler's iterative procedure, on the other hand, is very simple and fast in practice. We next discuss our results in this direction, which will be proven in \cref{subsec:sinkhorn-sc}. Firstly we show that Sinkhorn's algorithm converges in time $O(\log (1/\eps))$ for quantum expanders.
\begin{thm}\label{thm:expander-sinkhorn}
If $\Phi:\Mat(p) \to \Mat(n)$ is a $(1 - \lambda)$-quantum expander and $\|\nabla f^{\Phi}(Z_0)\|_F \leq c \lambda^2$, then the $T^{th}$ Sinkhorn iterate $Z_T$ starting at $Z_0$ satifies 
$$ \|\nabla f^{\Phi}(Z_T)\|_F = \exp( - O(\lambda T))$$
for some small enough constant $c$.
\end{thm}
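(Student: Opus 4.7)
The plan is to show that, after reducing to the balanced case and working in log-coordinates $X_t=\log Z_t$, one Sinkhorn step behaves like the linear map $\frac{p}{n}\Phi^*\Phi$ on traceless Hermitian matrices, which by two applications of the expansion bound contracts by a factor of $(1-\lambda)^2$. Since $\Phi$ is a $(1-\lambda)$-quantum expander it is in particular balanced, and after rescaling so that $s(\Phi)=n$ we have $\Phi(I_p)=I_n$ and $\Phi^*(I_n)=\frac{n}{p}I_p$, which makes $I_p$ a critical point of $f^\Phi$ and, by geodesic convexity, its global minimizer; the geodesic Hessian there in a traceless direction $X$ equals $\|X\|_F^2-\frac{p}{n}\|\Phi(X)\|_F^2\geq[1-(1-\lambda)^2]\|X\|_F^2\geq\lambda\|X\|_F^2$ by the expansion property. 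Using the scale invariance of $f^\Phi$ and the scale-equivariance of the Sinkhorn map, I normalize $\det Z_t=1$ throughout, so that $Z_t=e^{X_t}$ with $X_t$ traceless Hermitian.

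The first step is to show that the hypothesis $\|G_0\|_F\leq c\lambda^2$ (with $G_t=\nabla f^\Phi(Z_t)$) implies the smallness bound $\|X_0\|_F\leq C\lambda$ for some absolute constant $C$. By a continuity argument on the Hessian formula, the geodesic Hessian at $e^{sX_0}$ in direction $X_0$ remains at least $\lambda\|X_0\|_F^2/2$ as long as $s\|X_0\|_F$ stays below a fixed absolute constant times $\lambda$; combining this with Cauchy--Schwarz on $\phi'(1)=\langle X_0,G_0\rangle$ for $\phi(s)=f^\Phi(e^{sX_0})$, together with a brief case analysis on whether $\|X_0\|_F$ exceeds that threshold, yields the bound. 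The second step is the linearization: Taylor expansion gives $\Phi(e^{X_t})^{-1}=I_n-\Phi(X_t)+O(\|X_t\|_F^2)$, whence $Z_{t+1}^{-1}=\frac{p}{n}\Phi^*(\Phi(e^{X_t})^{-1})=I_p-\frac{p}{n}\Phi^*\Phi(X_t)+O(\|X_t\|_F^2)$; inverting and re-normalizing the determinant gives $X_{t+1}=\frac{p}{n}\Phi^*\Phi(X_t)+O(\|X_t\|_F^2)$. Since $\Phi$ is balanced, both $\Phi$ and $\Phi^*$ preserve the traceless subspace (because $\tr\Phi(X)=\langle X,\Phi^*(I_n)\rangle=\frac{n}{p}\tr X$), so applying expansion twice gives $\|\frac{p}{n}\Phi^*\Phi(X_t)\|_F\leq(1-\lambda)^2\|X_t\|_F$ on traceless $X_t$. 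Combining,
\[ \|X_{t+1}\|_F\leq\bigl[(1-\lambda)^2+C'\|X_t\|_F\bigr]\|X_t\|_F\leq(1-\lambda/2)\|X_t\|_F \]
whenever $\|X_t\|_F$ is at most a sufficiently small constant times $\lambda$, a condition preserved along the induction. Hence $\|X_T\|_F\leq(1-\lambda/2)^T\|X_0\|_F=\exp(-\Omega(\lambda T))$, and since $G_T=(I-\frac{p}{n}\Phi^*\Phi)(X_T)+O(\|X_T\|_F^2)$ with the linear part of operator norm at most $1$, we conclude $\|G_T\|_F\leq 2\|X_T\|_F=\exp(-\Omega(\lambda T))$, as required.

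The main obstacle is the initial smallness bound $\|X_0\|_F\leq C\lambda$. The expansion hypothesis yields the Hessian lower bound $\lambda$ only exactly at $I_p$, so extending it to a neighborhood with absolute-constant coefficients (independent of $p$, $n$, and $\lambda$) requires explicit control of how far $\Psi_s=\Phi_{e^{sX/2},\Phi(e^{sX})^{-1/2}}$ deviates from $\Phi$ in operator norm as $s$ varies. Doing this using only the balanced structure of $\Phi$, rather than assuming expansion of the rescaled $\Psi_s$ (which is not hypothesized), is the technical heart of the argument; a secondary subtlety is verifying that the determinant normalization $\det Z_t=1$ can be enforced along the entire Sinkhorn trajectory, which follows from the scale-equivariance of the Sinkhorn update.
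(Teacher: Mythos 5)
Your proposal is correct in outline but follows a genuinely different route from the paper. The paper never linearizes the Sinkhorn map: it combines the classical per-step progress bound (\cref{lem:sinkhorn-progress}, via the robust AM--GM inequality of \cite{garg2017operator}), which says each iteration decreases $f^\Phi$ by $\tfrac16\|\nabla f^\Phi\|_F^2$, with geodesic $\lambda$-strong convexity of $f^\Phi$ on a sublevel set containing $Z_0$ (\cref{lem:level-set}, itself built from \cref{lem:expansion-scaling} and \cref{cor:expansion-convexity}); since the function gap is at most $\tfrac1{2\lambda}\|\nabla f^\Phi\|_F^2$ on that set, the squared gradient halves every $O(1/\lambda)$ steps (\cref{lem:sinkhorn-exponential}), and the iterates stay in the region automatically because $f^\Phi$ is monotone along them. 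You instead treat Sinkhorn as a fixed-point iteration: localize $Z_0$ near $I_p$ via strong convexity (your case analysis on $\phi(s)=f^\Phi(e^{sX_0})$ is sound, and your Hessian formula $\|X\|_F^2-\tfrac pn\|\Phi(X)\|_F^2$ at $I_p$ matches the paper's computation), then show the update acts on $X_t=\log Z_t$ as $\tfrac pn\Phi^*\Phi+O(\|X_t\|_F^2)$ and contracts by $(1-\lambda)^2$ on traceless matrices. This buys a sharper per-step rate and convergence of the iterates themselves, and it exposes $\tfrac pn\Phi^*\Phi$ as the Jacobian at the fixed point; the paper's argument buys robustness, since it needs no quadratic-error bookkeeping and applies verbatim to any descent method satisfying \cref{lem:sinkhorn-progress}. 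Two points in your sketch should be made explicit: the contraction step uses expansion of $\Phi^*$, which follows because $\Phi^*$ restricted to the traceless subspace is the adjoint of $\Phi$ restricted there (both preserve tracelessness by balancedness) and so has the same norm; and the $O(\|X_t\|_F^2)$ errors must be shown in Frobenius norm with dimension-free constants, which works but requires mixing operator-norm bounds from balancedness (e.g. $\|\Phi(Y)\|_{op}\le 2\|Y\|_{op}$) with the $F$-to-$F$ bounds of order $\sqrt{n/p}$ so that applying $\tfrac pn\Phi^*$ to the error does not introduce $\sqrt p$ factors. Your worry about needing expansion of the rescaled operators $\Psi_s$ "using only balancedness" is unnecessary: expansion of small scalings of $\Phi$ does follow from the hypothesized expansion of $\Phi$ itself, exactly as in \cref{lem:expansion-scaling}, which is what the neighborhood strong convexity (\cref{cor:convex-ball}) rests on.
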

 In \cref{subsec:sinkhorn-el} we straightforwardly combine this theorem with \cref{thm:random-expander} to show \cref{thm:fast-sinkhorn-elliptical} on the fast rate of convergence for Tyler's iterative procedure. 
 
 Finally, we prove a version of \cref{thm:lapchi} adapted for the Frobenius norm. In some cases, as for Tyler's M estimator, one wishes to obtain Frobenius bounds on $L^\dagger L$. One can simply combine the bound $\| \cdot \|_{F} \leq \sqrt{p}\| \cdot \|_{op}$ with the previous theorem, but our techniques yield a bound that is sharper by a logarithmic factor. Our result is incomparable to \cref{thm:lapchi} and appeals only to geodesic convexity rather than the more sophisticated methods of \cite{KLR19}. The theorem is proven in \cref{subsec:strong-convexity}.
\begin{thm}[Frobenius version of \cite{KLR19}]\label{thm:frob-improvement} Suppose $\Phi$ is a $(\eps, 1 - \lambda)$-quantum expander and $\eps \sqrt{p}/\lambda$ is at most a small enough constant. Then there are $L, R$ with $\det L = \det R = 1$ such that $\Phi_{L,R}$ is doubly balanced and 
$$ \|I_p - L^\dagger L \|_F, \|I_p - R^\dagger R \|_{op} = O(\eps \sqrt{p} /\lambda).$$ 
\end{thm}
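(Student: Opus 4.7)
The plan is to locate a minimizer $Z^{\ast}\in\PD(p)$ of $f^{\Phi}$ close to $I_p$ in Frobenius norm, and then set $L=\sqrt{Z^{\ast}}$, $R\propto\Phi(Z^{\ast})^{-1/2}$, rescaled so $\det L=\det R=1$. Because $Z^{\ast}$ is a critical point of $f^{\Phi}$, $\Phi_{L,R}$ is automatically doubly balanced. By scale invariance of $f^{\Phi}$ I normalize $s(\Phi)=np$, so being balanced means $\Phi(I_p)=pI_n$ and $\Phi^{*}(I_n)=nI_p$, and the $\eps$-balanced hypothesis gives operator-norm perturbations of size $p\eps$ and $n\eps$.

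First I would bound the geodesic gradient at the identity using the $\eps$-balanced property. Writing $\Phi(I_p)=pI_n+E_1$, $\Phi^{*}(I_n)=nI_p+E_2$ with $\|E_1\|_{op}\le p\eps$, $\|E_2\|_{op}\le n\eps$, a Neumann expansion of $\Phi(I_p)^{-1}$ gives
\[
\nabla f^{\Phi}(I_p)=\tfrac{p}{n}\Phi^{*}(\Phi(I_p)^{-1})-I_p=\tfrac{1}{n}E_2-\tfrac{1}{np}\Phi^{*}(E_1)+O(\eps^{2}).
\]
Both terms have operator norm $O(\eps)$ (the second via the CP-map bound $\|\Phi^{*}(E_1)\|_{op}\lesssim \|\Phi^{*}(I_n)\|_{op}\|E_1\|_{op}=O(np\eps)$), hence Frobenius norm $O(\eps\sqrt{p})$. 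A direct trace calculation shows $\tr\nabla f^{\Phi}(I_p)=0$, so the gradient lies in the traceless subspace over which the minimization effectively happens.

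Second I would establish geodesic strong convexity of $f^{\Phi}$ with constant $\Omega(\lambda)$ in Frobenius norm. The explicit Hessian at $Z$ in a traceless direction $X$ is
\[
\text{Hess}_Z[X]=\tfrac{p}{n}\bigl[-\tr\bigl((\Phi(Z)^{-1}\Phi(\sqrt{Z}X\sqrt{Z}))^{2}\bigr)+\tr\bigl(\Phi(Z)^{-1}\Phi(\sqrt{Z}X^{2}\sqrt{Z})\bigr)\bigr],
\]
and at $Z=I_p$ this reduces via $\Phi(I)^{-1}=\tfrac{1}{p}I+O(\eps/p)$ and $\Phi^{*}(I)=nI+O(n\eps)$ to $\|X\|_F^{2}-\tfrac{1}{np}\|\Phi(X)\|_F^{2}+O(\eps)\|X\|_F^{2}$. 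The quantum expander inequality $\|\Phi(X)\|_F\le(1-\lambda)\sqrt{np}\|X\|_F$ then yields $\text{Hess}_{I_p}[X]\ge(\lambda-O(\eps))\|X\|_F^{2}\ge c\lambda\|X\|_F^{2}$ once $\eps\le c'\lambda$ (implied by the hypothesis). A perturbative extension — splitting $\sqrt{Z}X\sqrt{Z}=X+(\text{error})$ with $\|\text{error}\|_F=O(\|Z-I\|_{op})\|X\|_F$ and bounding $\|\Phi(\sqrt{Z}X\sqrt{Z})\|_F\le(1-\lambda+O(\|Z-I\|_{op}))\sqrt{np}\|X\|_F$ — shows the estimate persists inside an operator-norm ball of radius $\Theta(\lambda)$ about $I_p$.

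Finally, normalize $\det Z^{\ast}=1$, let $W=\log Z^{\ast}$ (traceless), and consider $h(t):=f^{\Phi}(e^{tW})$, convex in $t$ with $h'(1)=0$. Provided the geodesic segment stays inside the strong-convexity ball, integrating the Hessian bound gives $\int_0^1(1-s)h''(s)\,ds\ge c\lambda\|W\|_F^{2}/2$, and combining with $h(1)\le h(0)$ via the Taylor identity $h(1)-h(0)=h'(0)+\int_0^1(1-s)h''(s)\,ds$ yields $c\lambda\|W\|_F^{2}/2\le\|\nabla f^{\Phi}(I_p)\|_F\|W\|_F$, i.e.\ $\|W\|_F=O(\eps\sqrt{p}/\lambda)$. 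The Frobenius bound follows from $\|I_p-L^{\dagger}L\|_F=\|I_p-e^{W}\|_F=(1+O(\|W\|_F))\|W\|_F=O(\eps\sqrt{p}/\lambda)$. For $R$, write $\Phi(Z^{\ast})=p(I_n+G)$ with $\|G\|_{op}=O(\|Z^{\ast}-I\|_{op}+\eps)$; the $(\det\Phi(Z^{\ast}))^{1/n}$ rescaling that enforces $\det R=1$ subtracts off precisely the trace part of $G$, giving $\|I_n-R^{\dagger}R\|_{op}=O(\|G\|_{op})=O(\eps\sqrt{p}/\lambda)$. The main obstacle is the bootstrap in this last step: the strong-convexity ball has radius $\Theta(\lambda)$ in operator norm, and the derived bound $\|W\|_F=O(\eps\sqrt{p}/\lambda)$ must fit inside it — tracking the implicit constants so this is self-consistent, without resorting to the more sophisticated chi-squared analysis of \cref{thm:lapchi}, is the delicate part of the argument and the only place where geodesic convexity must be used along an entire path rather than pointwise at $I_p$.
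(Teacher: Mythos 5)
Your overall route is the same as the paper's: bound $\|\nabla f^{\Phi}(I_p)\|_F = O(\eps\sqrt{p})$ from $\eps$-balancedness, get geodesic $\Omega(\lambda)$-strong convexity of $f^{\Phi}$ on an operator-norm ball of radius $\Theta(\lambda)$ about $I_p$ (Hessian at $I_p$ plus a perturbation argument, which is exactly \cref{cor:expansion-convexity} combined with \cref{lem:expansion-scaling}/\cref{cor:convex-ball}), and then take $L=\sqrt{Z^*}$, $R\propto\Phi(Z^*)^{-1/2}$ for the minimizer $Z^*$, with double balancedness coming for free from the vanishing geodesic gradient. All of that matches the paper.

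The one genuine gap is the final step, and it is exactly the point you flag and leave open: you assume a minimizer $Z^*$ exists and that the geodesic from $I_p$ to $Z^*$ stays inside the strong-convexity ball, and then worry about a constant-tracking ``bootstrap'' to make $\|W\|_F=O(\eps\sqrt p/\lambda)$ self-consistent with the ball radius. That is not how the difficulty should be resolved, and as written your Taylor-identity argument is circular: you may only invoke the Hessian bound on the segment after you know the segment lies in the ball, which you only know after you have the bound on $\|W\|_F$. The fix (this is the paper's \cref{lem:frob-grad}) uses convexity \emph{only inside the ball} and never needs to know in advance where $Z^*$ is: for any traceless unit $H$ and $0\le t\le\kappa=c\lambda$, integrate $\partial_t^2 f(e^{tH})\ge\lambda/2$ twice to get $f(e^{tH})\ge f(I_p)-t\|\nabla f(I_p)\|_F+\tfrac{\lambda}{4}t^2$; since $\|\nabla f(I_p)\|_F=O(\eps\sqrt p)$ is below $\lambda\kappa/4$ (note this actually needs $\eps\sqrt p\lesssim\lambda^2$, which is harmless since $\lambda$ is an absolute constant in the applications, but your stated hypothesis alone does not literally give it), the value on the boundary sphere exceeds $f(I_p)$, so by compactness a local minimizer exists \emph{strictly inside} the ball; by strong convexity it is unique there, by global geodesic convexity of $f^{\Phi}$ it is a genuine critical point, and the once-integrated inequality $\frac{d}{dt}f(e^{tH})\ge-\|\nabla f(I_p)\|_F+\tfrac{\lambda}{2}t$ evaluated at the critical $t$ gives $\|\log Z^*\|_F\le 2\|\nabla f(I_p)\|_F/\lambda=O(\eps\sqrt p/\lambda)$ directly. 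In particular the existence of $Z^*$ is itself part of what must be proved (it is equivalent to scalability of $\Phi$), and it falls out of this boundary argument rather than being an assumption. With that lemma in place, the rest of your write-up (the gradient bound, the Hessian computation, and the passage to $L$, $R$ with the determinant normalization) goes through as you describe.
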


\subsection{Organization of the paper}

In \cref{sec:samples} we prove \cref{thm:elliptical,thm:elliptical-frob} on the convergence of Tyler's M-estimator using quantum expansion. In \cref{sec:quantum-expanders} we show \cref{thm:frob-improvement} on condition number of scalings and \cref{thm:expander-sinkhorn} on the fast convergence of Sinkhorn's algorithm for quantum expanders, and use this to show that Tyler's iterative procedure converges quickly (\cref{thm:fast-sinkhorn-elliptical}). Both \cref{sec:samples} and \cref{sec:quantum-expanders} rely on \cref{thm:random-expander} on the quantum expansion of the operator $\Phi_{\vec v}$, the proof of which we delay until \cref{sec:random}. In the appendix we show relationships between approximate quantum expansion and the spectral gap defined in \cite{KLR19}, and extend \cref{thm:fast-sinkhorn-elliptical} and \cref{thm:elliptical} to the finite precision setting.

\section{Sample complexity bounds via quantum expansion}\label{sec:samples}


First, we observe that we only need to consider the accuracy of Tyler's M-estimator when the shape is the identity.
\begin{obs}\label{obs:identity} Suppose $x_i = u_i \Sigma^{1/2} v_i$. Then $Y$ solves \cref{eq:tyler} for $\vec x$ if and only if $Z = \Sigma^{-1/2}Y \Sigma^{-1/2}$ solves \cref{eq:tyler} for $\vec v$, and clearly
$$ \| I_p - Z^{-1}\|_{op} = \|I_p -  \Sigma^{1/2}Y^{-1} \Sigma^{1/2}\|_{op}.$$
\end{obs}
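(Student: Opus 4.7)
The plan is to verify the observation by a direct algebraic substitution into \cref{eq:tyler}, exploiting the fact that each summand $x_i x_i^\dagger / (x_i^\dagger Y^{-1} x_i)$ is homogeneous of degree zero in $x_i$, so the scalars $u_i$ drop out automatically, and what remains is equivariant under conjugation by $\Sigma^{1/2}$.

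First, I would substitute $x_i = u_i \Sigma^{1/2} v_i$ and compute the two constituent quantities $x_i x_i^\dagger = u_i^2 \Sigma^{1/2} v_i v_i^\dagger \Sigma^{1/2}$ and $x_i^\dagger Y^{-1} x_i = u_i^2 \, v_i^\dagger \Sigma^{1/2} Y^{-1} \Sigma^{1/2} v_i$. The factor $u_i^2$ cancels between the numerator and denominator of the $i$th summand, so the scalar part of the elliptical distribution plays no role in the fixed-point equation.

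Next, setting $Z = \Sigma^{-1/2} Y \Sigma^{-1/2}$ (equivalently $Y = \Sigma^{1/2} Z \Sigma^{1/2}$), we have $\Sigma^{1/2} Y^{-1} \Sigma^{1/2} = Z^{-1}$, so the $i$th summand becomes $\Sigma^{1/2} v_i v_i^\dagger \Sigma^{1/2}/(v_i^\dagger Z^{-1} v_i)$. Factoring the outer $\Sigma^{1/2}$ factors out of the sum and comparing with $Y = \Sigma^{1/2} Z \Sigma^{1/2}$, I can left- and right-multiply by $\Sigma^{-1/2}$ to conclude that \cref{eq:tyler} for $\vec x$ holds with solution $Y$ if and only if $\tfrac{p}{n}\sum_i v_i v_i^\dagger/(v_i^\dagger Z^{-1} v_i) = Z$, i.e. \cref{eq:tyler} for $\vec v$ holds with solution $Z$. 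The reverse direction is identical. (Here I am reading ``solves \cref{eq:tyler}'' as satisfying the fixed-point equation; the trace normalization is merely a choice of scalar multiple and is irrelevant to the statement being proved.)

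Finally, the norm identity is immediate from the definition: $Z^{-1} = \Sigma^{1/2} Y^{-1} \Sigma^{1/2}$, so $\|I_p - Z^{-1}\|_{op} = \|I_p - \Sigma^{1/2} Y^{-1} \Sigma^{1/2}\|_{op}$. There is no substantive obstacle; the entire content is that Tyler's equation transforms covariantly under the linear change of variable $v_i \mapsto u_i \Sigma^{1/2} v_i$, with the solution transforming by $Z \mapsto \Sigma^{1/2} Z \Sigma^{1/2}$.
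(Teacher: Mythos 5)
Your proof is correct and is exactly the direct substitution the paper leaves implicit (it states the observation without proof, as immediate): the scalars $u_i$ cancel by degree-zero homogeneity, $Z^{-1}=\Sigma^{1/2}Y^{-1}\Sigma^{1/2}$ gives both the transformed fixed-point equation and the norm identity, and your reading of ``solves \cref{eq:tyler}'' as the scale-invariant fixed-point equation matches the paper's usage. Nothing is missing.
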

This means we need only show that Tyler's M-estimator is accurate for the elliptical distribution in which $u_i = 1$ and $\Sigma = I_p$. This does not violate our assumption that we do not know $\Sigma$, because will never need \emph{access} to the $v_i$'s. 

\begin{proof}[Proof of \cref{thm:elliptical} and \cref{thm:elliptical-frob}]
We first show \cref{thm:elliptical}. As $X$ is a centered elliptical distribution, $x_i = u_i \Sigma v_i$ for $v_1, \dots, v_n$ equal to Haar random unit vectors. Let $\lambda$ be a small constant as in \cref{thm:random-expander}. By \cref{thm:random-expander}, with probability at least $1  - O(e^{-q(p,n,\eps/\log p)}) = 1 - \exp( - \Omega( n \eps^2/\log^2 p))$ for $n \geq C p \log^2 p/\eps^2$ the operator $\Phi_{\vec v}$ is an $(\eps/\log p, 1 - \lambda)$-quantum expander. 

By \cref{thm:lapchi}, there is a solution $Z$ to \cref{eq:tyler} for $\vec v$ that satisfies $\|Z^{-1} - I_p\|_{op} = C\eps' \log p$. By \cref{obs:identity}, $Z = \Sigma^{-1/2} Y \Sigma^{-1/2}$ where $Y$ solves \cref{eq:tyler} for $\vec x$ and in particular 

\begin{gather}\|I_p -  \Sigma^{1/2} Y^{-1} \Sigma^{1/2}\|_{op} = O(\eps).\label{eq:pre-normalize}\end{gather}
Tyler's M-estimator is $\widehat{\Sigma} = p Y/\tr Y$. From \cref{eq:pre-normalize}, we have 
$ Y \in (1 + O(\eps)) \Sigma$
and as $\tr \Sigma = p$ we have $\tr Y = p(1 + O(\eps))$ so that 
\begin{gather*}\|I_p -  \Sigma^{1/2} \widehat{\Sigma}^{-1} \Sigma^{1/2}\|_{op} = O(\eps).\end{gather*}
By replacing $\eps$ by a suitable constant multiple we may assume the error bound is $\eps$ rather than $O(\eps)$. To show \cref{thm:elliptical-frob}, apply \cref{thm:random-expander} with $\eps/\sqrt{p}$ rather than $\eps/\log p$ and use \cref{thm:frob-improvement} in place of \cref{thm:lapchi}.
\end{proof}

\section{Fast Sinkhorn for quantum expanders}\label{sec:quantum-expanders}

As discussed in the introduction, quantum expansion is a natural way to bound sizes of scalings. Because scalings are optimizers of the objective function $f^\Phi$ of \cref{dfn:function} in \cref{subsec:opscal}, and bounding the size of optimizers is frequently accomplished through strong convexity, one is led to investigate the relationship between quantum expansion and strong convexity. 

In this section we prove such a relationship and use it to show that Tyler's iterative procedure converges linearly. Because Tyler's iterative procedure can be considered a descent method which makes progress proportional to the gradient of $f^\Phi$, it's straightforward to verify (see \cref{subsec:sinkhorn-sc}) that Sinkhorn scaling converges linearly provided $f^\Phi$ is strongly convex along the \emph{entire trajectory} of the procedure. We show that this is the case in \cref{subsec:strong-convexity} by showing that $f^{\Phi}$ is strongly convex in a suitably large \emph{sublevel set}. Finally in \cref{subsec:sinkhorn-el} we specialize to Tyler's iterative procedure.

\subsection{Strong convexity from quantum expansion}\label{subsec:strong-convexity}

By \cref{eq:invar} of $f^{\Phi}$, we can only hope for for strong geodesic convexity on the manifold of positive definite matrices with determinant $1$, which we call $\PD_1(p)$. 

\begin{rem}[Normalization]
The usual convention is $\tr \Sigma =  \tr \widehat{\Sigma} = p$, but to avoid more calculations like those at the end of the proof of \cref{thm:elliptical} it will more convenient to assume $\Sigma, \widehat{\Sigma}$ and the Sinkhorn iterates are in $\PD_1(p)$. For this reason if $A \in \PD(p)$ we let $A_1:=\det(A)^{-1/p} A  \in \PD_1(p)$. An elementary calculation (\cref{lem:trace-det}) shows that switching between the two normalizations for $\Sigma, \widehat{\Sigma}$ only incurs a constant factor error in \cref{eq:mahal} provided it is at most a small constant. If $Z_i$ are a sequence of Sinkhorn iterates, then the elements of the sequence $(Z_i)_1$ are called the \emph{normalized} Sinkhorn iterates. \end{rem}

We say a function $f:\PD_1(p) \to \RR$ is geodesically $\gamma$-strongly convex at $Z$ if for all Hermitian $X$ with $\tr X = 0$, we have
$$\partial_t^2 f\left(\sqrt{Z}e^{tX}\sqrt{Z}\right) \geq \gamma \|X\|_F^2.$$
A key feature of this definition is that for $f = f^{\Phi}$, different points $Z$ correspond to scalings of $\Phi$. More precisely, if $f^\Phi$ is $\gamma$ strongly convex at $Z$ if and only if $f^{\Phi_{Z, I_n}}$ is strongly convex at $I_p$.

\begin{lem}[Strong convexity from expansion] There are constants $C,c > 0$ such that for all $\eps < c$, if $\Phi:\Mat(p) \to \Mat(n)$ is a $(\eps, 1 -\lambda)$-quantum expander, then the function $X \mapsto  \log\det(\Phi(X))$ from $\PD_1(p)$ to $\RR$ is geodesically
$$\frac{n}{p} \left(2\lambda - \lambda^2 - C\eps \right)$$ strongly convex at the origin. Note that $2\lambda - \lambda^2\geq \lambda.$
\end{lem}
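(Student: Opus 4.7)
The plan is to compute the second derivative of $g(t) := \log\det \Phi(e^{tX})$ at $t = 0$ directly, for $X \in \Herm(p)$ with $\tr X = 0$, and then to bound the first (``curvature'') term from below using $\epsilon$-double balance and the second (``drift'') term from above using the quantum expansion hypothesis.

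By standard matrix calculus (using linearity of $\Phi$ and $\partial_t e^{tX} = X e^{tX}$),
\[
g'(t) = \tr\!\bigl(\Phi(e^{tX})^{-1}\,\Phi(Xe^{tX})\bigr),
\]
and differentiating again with $\partial_t M^{-1} = -M^{-1}M'M^{-1}$ and evaluating at $t=0$ yields
\[
g''(0) = \tr\!\bigl(\Phi(I_p)^{-1}\Phi(X^2)\bigr) \;-\; \tr\!\bigl(\Phi(I_p)^{-1}\Phi(X)\,\Phi(I_p)^{-1}\Phi(X)\bigr).
\]
The $-\log\det Z$ part of $f^{\Phi}$ contributes nothing along this geodesic because $\log\det e^{tX} = t\tr X = 0$, so it suffices to strongly-convex-lower-bound $g''(0)$.

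Now I would use the $\epsilon$-balance hypothesis to replace $\Phi(I_p)$ and $\Phi^*(I_n)$ by their ideal values. Writing $s = s(\Phi)$, we have $\Phi(I_p) = (s/n)I_n + E_1$ with $\|E_1\|_{op} \le s\epsilon/n$, so (for $\epsilon$ smaller than a constant) $\Phi(I_p)^{-1} = (n/s)(I_n + G)$ with $\|G\|_{op} = O(\epsilon)$; similarly $\Phi^*(I_n) = (s/p)I_p + E_2$ with $\|E_2\|_{op}\le s\epsilon/p$. For the first (curvature) term, rewriting $\tr\Phi(X^2) = \tr(\Phi^*(I_n)X^2)$ and using the dual balance gives $\tr\Phi(X^2) = (s/p)\|X\|_F^2 + O(s\epsilon/p)\|X\|_F^2$, and carrying the $G$ error through yields
\[
\tr\!\bigl(\Phi(I_p)^{-1}\Phi(X^2)\bigr) \;=\; \tfrac{n}{p}\|X\|_F^2 \;+\; O\!\bigl(\tfrac{n\epsilon}{p}\bigr)\|X\|_F^2.
\]
For the second (drift) term, I would combine $\|\Phi(I_p)^{-1}\|_{op} \le (n/s)(1+O(\epsilon))$ with the expansion hypothesis $\|\Phi(X)\|_F \le (1-\lambda)(s/\sqrt{np})\|X\|_F$ (which applies since $\tr X = 0$) to bound
\[
\tr\!\bigl(\Phi(I_p)^{-1}\Phi(X)\,\Phi(I_p)^{-1}\Phi(X)\bigr) \;\le\; \|\Phi(I_p)^{-1}\|_{op}^{2}\,\|\Phi(X)\|_F^{2} \;\le\; \tfrac{n}{p}(1-\lambda)^2\bigl(1+O(\epsilon)\bigr)\|X\|_F^2.
\]
Subtracting the two bounds gives $g''(0) \ge (n/p)\bigl(2\lambda - \lambda^2 - C\epsilon\bigr)\|X\|_F^2$, which is exactly the claimed strong convexity constant.

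The main obstacle is purely bookkeeping: tracking the $O(\epsilon)$ perturbations introduced by $G$, $E_1$, $E_2$ through both traces and verifying that no hidden factor of $p$ or $n$ multiplies them. Everything works because in the drift term one pairs an operator-norm bound on $\Phi(I_p)^{-1}$ with a Frobenius-norm bound on $\Phi(X)$ via the inequality $\tr(AB^\dagger AB^\dagger) \le \|A\|_{op}^2 \|B\|_F^2$ for Hermitian $A,B$, rather than absorbing the dimension elsewhere; and in the curvature term one pushes $X^2$ through the dual to land on $\Phi^*(I_n)$ so that the error is controlled by $\|E_2\|_{op}\tr X^2$. The inequality $2\lambda - \lambda^2 \ge \lambda$ is immediate for $\lambda \in [0,1]$.
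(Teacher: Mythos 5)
Your proposal is correct and follows essentially the same route as the paper: the identical second-derivative formula $\tr\Phi(I_p)^{-1}\Phi(X^2)-\tr\Phi(I_p)^{-1}\Phi(X)\Phi(I_p)^{-1}\Phi(X)$, with the curvature term lower-bounded by passing $\tr\Phi(X^2)$ to $\tr X^2\,\Phi^*(I_n)$ and invoking $\eps$-double balance, and the drift term upper-bounded by combining the operator-norm bound on $\Phi(I_p)^{-1}$ with the expansion bound $\|\Phi(X)\|_F\le(1-\lambda)s(\Phi)/\sqrt{np}\,\|X\|_F$. Your phrasing of the balance errors as multiplicative perturbations rather than Loewner monotonicity is only cosmetic, so no changes are needed.
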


\begin{proof}Consider $X$ Hermitian with $\tr X = 0$. Observe that 
$$\partial_t^2 \log\det(\Phi(e^{tX}))|_{t = 0} =  \tr \Phi(I_p)^{-1} \Phi(X^2) - \tr \Phi(I_p)^{-1}\Phi(X)\Phi(I_p)^{-1}\Phi(X).$$
We lower bound the first term and upper bound the second. Because $\Phi$ is $\eps$-doubly balanced, $\Phi(I_p) \succeq (1 - \eps) (s(\Phi)/n) I_n,$ so using the monotonicity of the trace inner product in each argument under the Loewner ordering we have
\begin{align*}
\tr \Phi(I_p)^{-1}\Phi(X)\Phi(I_p)^{-1}\Phi(X)  \leq (s(\Phi)/n)^{-2}(1 - \eps)^{-1} \tr \Phi(X)^2.
\end{align*}
on the other hand, $\Phi^*(I_n) \succeq (1 - \eps)(s(\Phi)/p)I_p$ and $\Phi(I_p) \preceq (1 + \eps) (s(\Phi)/n) I_p$, so 
\begin{align*}\tr\Phi(I_p)^{-1}\Phi(X^2) &\geq (1 + \eps)^{-1}(s(\Phi)/n)^{-1}\tr \Phi(X^2)\\
&= (1 + \eps)^{-1}(s(\Phi)/n)^{-1}\tr \Phi(X^2)I_n\\
& = (1 + \eps)^{-1}(s(\Phi)/n)^{-1}\tr X^2 \Phi^*(I_n)\\
& \geq (1 + \eps)^{-1}\frac{n}{p}(1 - \eps)\tr X^2.
\end{align*} Combining these two bounds and using the quantum expansion of $\Phi$ and that $\eps < c$, we have
\begin{align*}\partial_t^2 \log\det(\Phi(e^{tX})) &\geq \frac{n}{p}\|X\|_F^2\left((1 + \eps)^{-1}(1 - \eps) - (1 - \lambda)^2(1 - \eps)^{-1}\right).\\
&\geq \frac{n}{p} \|X\|_F^2\left((2\lambda - \lambda^2)(1 - C\eps) - C\eps\right).\qedhere\end{align*}
\end{proof}
Because the function $Z \mapsto \log\det Z$ is geodesically linear, we have the following corollary.
\begin{cor}\label{cor:expansion-convexity}
If $\eps < c$ and $\Phi$ is a $(\eps, \lambda)$-quantum expander, then the function
$f^{\Phi}:\PD_1(p) \to \RR$ is geodesically $\lambda - C\eps$-strongly convex at the origin. Here $C, c> 0$ are absolute constants.
\end{cor}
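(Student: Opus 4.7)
The plan is to reduce the corollary directly to the preceding lemma by observing that the remaining summand in $f^{\Phi}$ has vanishing geodesic Hessian. Recall from \cref{dfn:function} that
\[
f^{\Phi}(Z) \;=\; \tfrac{p}{n}\log\det\Phi(Z) \;-\; \log\det Z.
\]
Along any geodesic emanating from the identity, namely $t \mapsto \sqrt{Z}\,e^{tX}\sqrt{Z}$ with $Z = I_p$, the term $-\log\det Z$ becomes $-\log\det(e^{tX}) = -t\tr X$, which is affine in $t$ and has identically zero second $t$-derivative. (On the submanifold $\PD_1(p)$ we restrict to $X$ with $\tr X = 0$, making this term itself vanish, but only linearity is actually needed.) This is the precise sense in which $Z \mapsto \log\det Z$ is geodesically linear, alluded to in the sentence preceding the corollary.

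Consequently, at $Z = I_p$ and for any Hermitian $X$ with $\tr X = 0$,
\[
\partial_t^2 f^{\Phi}\bigl(e^{tX}\bigr)\big|_{t=0}
\;=\; \tfrac{p}{n}\,\partial_t^2 \log\det\Phi\bigl(e^{tX}\bigr)\big|_{t=0}.
\]
Applying the preceding lemma, whose hypotheses coincide with those of the corollary (after identifying the spectral gap parameter with $1-\lambda$), the right-hand side is at least
\[
\tfrac{p}{n} \cdot \tfrac{n}{p}\bigl(2\lambda - \lambda^2 - C\eps\bigr)\|X\|_F^2
\;=\; \bigl(2\lambda - \lambda^2 - C\eps\bigr)\|X\|_F^2.
\]
Finally, since $2\lambda - \lambda^2 \geq \lambda$ for $\lambda \in [0,1]$, this is at least $(\lambda - C\eps)\|X\|_F^2$, which is exactly the desired geodesic $\gamma$-strong convexity with $\gamma = \lambda - C\eps$ at the origin of $\PD_1(p)$.

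There is essentially no obstacle here: the corollary is a bookkeeping step that repackages the lemma in terms of $f^{\Phi}$ rather than the bare $\log\det\Phi$. The only two ingredients one must remember to invoke are (i) geodesic linearity (equivalently, affineness) of $\log\det$, which removes the second summand from the Hessian computation, and (ii) the elementary inequality $2\lambda - \lambda^2 \geq \lambda$, which yields the clean constant $\lambda - C\eps$ stated in the corollary (possibly after enlarging $C$).
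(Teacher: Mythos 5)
Your proposal is correct and follows essentially the same route as the paper, whose proof of this corollary is exactly the one-line observation that $Z \mapsto \log\det Z$ is geodesically linear, so the strong convexity of $f^{\Phi}$ at $I_p$ follows from the preceding lemma after multiplying by $\tfrac{p}{n}$ and using $2\lambda - \lambda^2 \geq \lambda$. Your handling of the parameter mismatch (reading the hypothesis as a $(\eps, 1-\lambda)$-quantum expander, consistent with the lemma) matches the paper's intent.
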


While it is nice to know that $f^\Phi$ is geodesically strongly convex at $I_p$, to deduce bounds on the sizes of scalings we need to show that the function is strongly convex \emph{near} $I_p$. The next lemma is the first step in this direction.

\begin{lem}[Expansion under scaling]\label{lem:expansion-scaling} If $\Phi$ is an $(\eps < 1, 1 - \lambda)$-quantum expander and $\|L^\dagger L  - I_p\|_{op} \leq \delta$ and $\|R^\dagger R - I_n\|_{op} \leq \delta$ for $\delta < c$, then $\Phi_{L,R}$ is an $(\eps + C\delta, 1 - \lambda + C\delta)$-quantum expander. Here $C,c>0$ are absolute constants.

\end{lem}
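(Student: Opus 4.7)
The plan is to verify both defining properties of an approximate quantum expander for $\Phi_{L,R}$ directly by perturbation, treating $E := L^\dagger L - I_p$ and $E' := LL^\dagger - I_p$ as small (both of operator norm at most $\delta$, since $L$ and $L^\dagger$ share singular values), and likewise $F := R^\dagger R - I_n$, $F' := RR^\dagger - I_n$ for $R$. Writing $\Phi(I_p) = (s(\Phi)/n) I_n + G$ with $\|G\|_{op} \leq s(\Phi)\eps/n$ encodes the $\eps$-balanced hypothesis. The auxiliary tool I will use repeatedly is that for any Hermitian $H$, splitting $H = H_+ - H_-$ into its positive and negative parts and invoking monotonicity of $\Phi$ on the PSD cone yields $\|\Phi(H)\|_{op} \leq 2\|H\|_{op}\|\Phi(I_p)\|_{op}$ and $|\tr \Phi(H)| \leq 2\|H\|_{op}\, s(\Phi)$; these control every perturbative error that arises.

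For the doubly-balanced condition, I first expand
\[
\Phi_{L,R}(I_p) = R\,\Phi(I_p + E)\,R^\dagger = (s(\Phi)/n)\,RR^\dagger + R G R^\dagger + R\,\Phi(E)\,R^\dagger.
\]
Using $\|RR^\dagger - I_n\|_{op} \leq \delta$, $\|G\|_{op} \leq s(\Phi)\eps/n$, the auxiliary estimate $\|\Phi(E)\|_{op} = O(\delta s(\Phi)/n)$, and $\|R\|_{op}^2 \leq 1+\delta$, each of the three summands is controlled in operator norm. Combined with the direct identity $s(\Phi_{L,R}) = \tr R^\dagger R\,\Phi(L^\dagger L) = s(\Phi)(1+O(\delta))$ (obtained by expanding the product and applying the two auxiliary bounds), the desired inequality $\|\Phi_{L,R}(I_p) - (s(\Phi_{L,R})/n) I_n\|_{op} \leq (\eps + C\delta)\, s(\Phi_{L,R})/n$ drops out. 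The other half of the doubly-balanced condition follows by the same argument applied to $(\Phi_{L,R})^* = \Phi^*_{R,L}$.

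For the expansion bound, take trace-zero Hermitian $X \in \Herm(p)$. The one real subtlety is that $L^\dagger X L$ need not be traceless, so I would decompose $L^\dagger X L = Y + c\,I_p$ with $\tr Y = 0$. Since $\tr(L^\dagger X L) = \tr(L L^\dagger X) = \tr(E' X)$ when $\tr X = 0$, Cauchy--Schwarz gives $|c| \leq \delta \|X\|_F/\sqrt{p}$; together with $\|L^\dagger X L\|_F \leq \|L^\dagger L\|_{op}\|X\|_F \leq (1+\delta)\|X\|_F$ (via the polar decomposition of $L$), this yields $\|Y\|_F \leq (1 + O(\delta))\|X\|_F$. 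I would then apply the expander hypothesis to the traceless $Y$, control $c\,\Phi(I_p)$ using $\|\Phi(I_p)\|_F = O(s(\Phi)/\sqrt{n})$ (immediate from the balanced hypothesis), sum the two contributions, and conjugate by $R$ using $\|R\|_{op}^2 \leq 1+\delta$. After rescaling by $s(\Phi_{L,R})/\sqrt{np}$, which absorbs the $O(\delta)$ gap between $s(\Phi)$ and $s(\Phi_{L,R})$, the advertised $(1-\lambda + C\delta)$ factor emerges.

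The argument is largely bookkeeping once these ingredients are assembled. The only conceptual point I expect to require any care is that although $X$ is traceless, its conjugate $L^\dagger X L$ is not, so the expander hypothesis cannot be invoked directly; handling the resulting ``drift'' $c I_p$ is what forces the Frobenius-norm bound on $\Phi(I_p)$ rather than the cruder approximation $\Phi(I_p) \approx (s(\Phi)/n) I_n$.
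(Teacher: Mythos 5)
Your proposal is correct and takes essentially the same route as the paper: the heart of the argument---decomposing $L^\dagger X L$ into its traceless part plus $\bigl(\tr(LL^\dagger X)/p\bigr)I_p$, bounding the drift via Cauchy--Schwarz against $\|LL^\dagger - I_p\|$ and the balancedness bound $\|\Phi(I_p)\|_F = O(s(\Phi)/\sqrt{n})$, then conjugating by $R$ and absorbing the $O(\delta)$ change in $s(\Phi_{L,R})$---is exactly the paper's proof. The only difference is cosmetic: for the doubly-balanced condition you expand additively in $E$, $G$ with a positive/negative-part estimate, whereas the paper sandwiches $\Phi(L^\dagger L)$ directly in the Loewner order; both are routine bookkeeping.
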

\begin{proof}
We first show the balanced-ness condition. Because $(1 - \delta) I_p\leq L^\dagger L \leq (1 + \delta) I_p$ and $(1 - \delta) I_p \leq  R R^\dagger \leq (1 + \delta) I_n$, we have 
\begin{align}
(1 - \delta)^2 s(\Phi)\leq s(\Phi_{L,R}) \leq (1 + \delta)^2 s(\Phi).\label{eq:size}
\end{align}
Furthermore, 
\begin{align*}R \Phi(L^\dagger I_p L)R^\dagger 
 & \preceq (1 + \delta)^2 (1 + \eps)\frac{s(\Phi)}{n} I_n.\\
 & \preceq \frac{(1 + \delta)^2}{(1 - \delta)^2} (1 + \eps)\frac{s(\Phi_{L,R})}{n} I_n \end{align*}
 The other condition is similar. 
 
We next verify the expansion condition. We seek to bound the operator norm of $\Phi_{L,R}$ restricted to the traceless Hermitians. First note that the map $Y \to R^\dagger Y R$ has operator norm at most $(1 + \delta)$, so the desired bound will be the operator norm of $X \mapsto \Phi(L^\dagger X L)$ restricted to the traceless Hermitians multiplied by $(1 + \delta)$. Let $\pi(Z) = Z - (\tr Z)I_p/p$ denote the projection of $Z$ to its traceless part, and let $\Psi:X \mapsto \pi(L^\dagger X L)$. Note that $\Psi$ maps the traceless Hermitians to the traceless Hermitians, and $\Psi$ has operator norm at most $(1 +\delta)$. Then we may write 
 $$ \Phi(L^\dagger X L) = \Phi \circ \Psi (X) + \Phi(I_p) \tr (L^\dagger X L)/p.$$
By the triangle inequality, the operator norm of $X \mapsto \Phi(L^\dagger X L)$ restricted to the traceless Hermitian matrices is at most the sum of that of  $\Phi \circ \Psi $ and that of $\Gamma:X \mapsto \Phi(I_p) \tr (L^\dagger X L)/p$. The former is immediately seen to be at most $(1 - \lambda) (1 + \delta)s(\Phi)/\sqrt{np}$, and because $\tr X =0$ we have
\begin{align*}
\|\Gamma\|_{op} &\leq \frac{1}{p} \|\Phi(I_p)\|_F \sup_{\tr X = 0} \frac{\tr L^\dagger X L}{\|X\|_F}\\
& =  \frac{1}{p} \|\Phi(I_p)\|_F \sup_{\tr X = 0} \frac{\tr LL^\dagger X}{\|X\|_F}\\
& \leq \frac{1}{p} \|\Phi(I_p)\|_F \sup_{\tr X = 0} \frac{\tr (L^\dagger L - I_p) X}{\|X\|_F}\\
& \leq \frac{1}{p} \|\Phi(I_p)\|_F \sup_{\tr X = 0} \delta \sqrt p\\
&\leq (1 + \eps) s(\Phi) \delta/\sqrt{np},
\end{align*} 
where in the last condition we used that $\Phi$ is $\eps$-doubly balanced. Using \cref{eq:size} again, we have
$$
 \sup_{X \in \Herm(n), \tr X = 0} \frac{\|\Phi_{L,R}(X)\|_F^2}{\|X\|_F^2} \leq \frac{s(\Phi_{L,R})}{\sqrt{np} (1 - \delta)^2} \left((1 - \lambda)(1 + \delta) + (1 + \eps) \delta \right).
$$
Because $\delta \leq c$ and $\eps < 1$, we have the claim.
\end{proof}

\begin{cor}\label{cor:convex-ball}
There is an absolute constant $c$ such that the following holds. Suppose $\Phi$ is a $(\eps, 1 - \lambda)$-quantum expander and $\eps \leq c \lambda$. Then $f^\Phi$ is geodesically $\lambda/2$-strongly convex on the geodesic ball of radius $c\lambda$ about $I_p$.
\end{cor}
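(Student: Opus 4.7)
The plan is to reduce to the case $Z = I_p$ already handled by \cref{cor:expansion-convexity}, via a scaling argument. The key observation (made in the discussion preceding \cref{lem:expansion-scaling}) is that $f^\Phi$ is geodesically $\gamma$-strongly convex at $Z \in \PD_1(p)$ if and only if $f^{\Phi_{\sqrt Z, I_n}}$ is geodesically $\gamma$-strongly convex at $I_p$: along the geodesic $t \mapsto \sqrt Z\, e^{tX} \sqrt Z$ through $Z$ in the traceless Hermitian direction $X$, the two functions differ only by the additive constant $-\log\det Z$, using $\tr X = 0$.

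First I would fix $Z \in \PD_1(p)$ in the geodesic ball of radius $c\lambda$ about $I_p$, i.e., $\|\log Z\|_F \leq c\lambda$. Since $\|\cdot\|_{op} \leq \|\cdot\|_F$, this yields $\|\log Z\|_{op} \leq c\lambda$, and for $c\lambda$ small enough $\|Z - I_p\|_{op} \leq e^{c\lambda} - 1 \leq 2c\lambda$. I then apply \cref{lem:expansion-scaling} with $L = \sqrt Z$, $R = I_n$, and $\delta \leq 2c\lambda$ to conclude that $\Phi_{\sqrt Z, I_n}$ is an $(\eps + O(c\lambda),\, 1 - \lambda + O(c\lambda))$-quantum expander. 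Combined with the hypothesis $\eps \leq c\lambda$, this is an $(\eps', 1 - \lambda')$-quantum expander with $\eps' = O(c\lambda)$ and $\lambda' \geq \lambda(1 - O(c))$.

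Finally, \cref{cor:expansion-convexity} applied to $\Phi_{\sqrt Z, I_n}$ gives that $f^{\Phi_{\sqrt Z, I_n}}$ is geodesically $(\lambda' - C\eps')$-strongly convex at $I_p$, and this quantity exceeds $\lambda/2$ once $c$ is small enough. The scaling equivalence above then transports this to geodesic $\lambda/2$-strong convexity of $f^\Phi$ at $Z$, as required. The only obstacle is constant juggling: a single small universal constant $c$ must simultaneously (i) make $\|Z - I_p\|_{op}$ small enough for \cref{lem:expansion-scaling}, (ii) make $\eps'$ small enough for \cref{cor:expansion-convexity}, and (iii) ensure $\lambda' - C\eps' \geq \lambda/2$. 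Since these are independent single-variable inequalities in $c$ with no circular dependence, a single choice suffices.
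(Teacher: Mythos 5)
Your proposal is correct and follows essentially the same route as the paper: reduce strong convexity at $Z$ to strong convexity of $f^{\Phi_{\sqrt Z, I_n}}$ at $I_p$, apply \cref{lem:expansion-scaling} to control the expander parameters of the scaled map, invoke \cref{cor:expansion-convexity}, and choose the constant $c$ small enough. The only difference is cosmetic: you spell out the conversion $\|\log Z\|_F \leq c\lambda \Rightarrow \|Z - I_p\|_{op} \leq 2c\lambda$ and the additive-constant justification of the scaling equivalence slightly more explicitly than the paper does.
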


\begin{proof}
The strong convexity of $f^\Phi$ at a point $Z$ equals the strong convexity of $g:=f^{\Phi_{\sqrt{Z}, I_n}}$ at $I_p$.  For $\kappa > 0$ to be determined shortly, by \cref{lem:expansion-scaling}, if $\|Z - I_p\|_{op} \leq \kappa,$ then $\Phi_{\sqrt{Z}, I_n}$ is an $(\eps + C \kappa, 1 - \lambda + C \kappa$)-quantum expander. By \cref{cor:expansion-convexity}, for $C'$ a large enough absolute constant, $ f^{\Phi}$ is $\lambda - C'\eps - C'\kappa$-strongly convex at $Z$ provided $\|Z - I_p\|_{op} \leq \kappa,$ which is implied by $\|\log Z\|_F \leq c'\kappa$ for $\kappa,c'$ at most a small absolute constant.

If we take $\kappa = c \lambda$ for $c$ a small enough constant we conclude that $f^\Phi$ is $\lambda/2$ strongly convex on the geodesic ball of radius $c\lambda$ about $I_p$.
\end{proof}

We now include two lemmas with standard results from geodesically convex optimization. The first will imply a sharper bound for estimating the shape matrix in Frobenius norm. 
\begin{lem}\label{lem:frob-grad}
Suppose that $f:\PD_1(p) \to \RR$ is geodesically $\lambda$-strongly convex on the geodesic ball of radius $\kappa$ about $Z_0$ and that $\|\nabla f(Z_0) \|_F < \lambda \kappa $. 
Then there is a unique optimizer $Z^*$ of $f$ satisfying 
\begin{gather} \|\log Z_0^{-1/2} Z^* Z_0^{-1/2} \|_F \leq \frac{\|\nabla f(Z_0)\|_F}{\lambda}\label{eq:distance-bound} \end{gather}
and
\begin{gather}\quad f(Z^*) \geq f(Z_0) - \frac{1}{2\lambda} \|\nabla f(Z_0)\|_F^2.\label{eq:function-bound}\end{gather} 
\end{lem}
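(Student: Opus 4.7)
The plan is the standard ``gradient-controls-distance-to-optimum'' argument for strongly convex functions, adapted to the Riemannian manifold $\PD_1(p)$. For a traceless Hermitian $X$ write $\gamma_X(t) := \sqrt{Z_0}\, e^{tX}\sqrt{Z_0}$, the geodesic through $Z_0$ in direction $X$. By \cref{eq:geodesic-gradient} and the strong convexity hypothesis, along any such geodesic staying inside the ball (i.e.\ for $|t|\,\|X\|_F \leq \kappa$) we have $\partial_t f(\gamma_X(t))|_{t=0} = \tr(\nabla f(Z_0)\,X)$ and $\partial_t^2 f(\gamma_X(t)) \geq \lambda\|X\|_F^2$. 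Integrating twice gives the quadratic lower bound
$$ f(\gamma_X(t)) \;\geq\; f(Z_0) + t\,\tr(\nabla f(Z_0)\,X) + \tfrac{\lambda}{2}\,t^2\|X\|_F^2,$$
which, combined with Cauchy-Schwarz on the linear term and the reparametrization $s = |t|\,\|X\|_F$, yields
$$ f(Y) \;\geq\; f(Z_0) - s\|\nabla f(Z_0)\|_F + \tfrac{\lambda}{2}s^2 \qquad (\star)$$
for every $Y$ at geodesic distance $s \leq \kappa$ from $Z_0$.

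First I would localize the minimizer. Setting $s = \kappa$ in $(\star)$ gives $f(Y) \geq f(Z_0) - \kappa\|\nabla f(Z_0)\|_F + \tfrac{\lambda}{2}\kappa^2 > f(Z_0)$ on the boundary sphere, using $\|\nabla f(Z_0)\|_F < \lambda\kappa$. Continuity of $f$ on the compact closed ball produces a minimizer $Z^*$, and this strict inequality on the boundary forces $Z^*$ into the interior, so $\nabla f(Z^*) = 0$. Applying $(\star)$ at $Y = Z^*$ and using that the quadratic $s \mapsto -s\|\nabla f(Z_0)\|_F + \tfrac{\lambda}{2}s^2$ has minimum value $-\|\nabla f(Z_0)\|_F^2/(2\lambda)$ over $s \geq 0$ gives
$$f(Z^*) \;\geq\; f(Z_0) - \tfrac{1}{2\lambda}\|\nabla f(Z_0)\|_F^2,$$
which is \cref{eq:function-bound}.

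For the distance bound \cref{eq:distance-bound}, I would play the same game from $Z^*$ instead of $Z_0$. The geodesic from $Z^*$ to $Z_0$ stays inside the ball by geodesic convexity of balls in $\PD_1(p)$ under the trace metric, so integrating the strong convexity inequality along it and using $\nabla f(Z^*) = 0$ gives $f(Z_0) \geq f(Z^*) + \tfrac{\lambda}{2}d^2$ with $d := \|\log Z_0^{-1/2}Z^*Z_0^{-1/2}\|_F$. Combining with the function-value bound yields $\tfrac{\lambda}{2}d^2 \leq \tfrac{1}{2\lambda}\|\nabla f(Z_0)\|_F^2$, i.e.\ $d \leq \|\nabla f(Z_0)\|_F/\lambda$. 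Uniqueness of $Z^*$ within the ball is immediate because $(\star)$ is strict whenever the direction $X$ is nonzero. There is no substantive obstacle --- the only point requiring care is that the geodesic from $Z^*$ back to $Z_0$ lies entirely in the region of strong convexity, which is automatic from geodesic convexity of the ball around $Z_0$.
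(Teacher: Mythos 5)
Your argument is essentially the paper's: the same twice-integrated strong-convexity inequality gives the quadratic lower bound, and the function-value bound \cref{eq:function-bound} is read off by minimizing the quadratic, exactly as in the paper. The one genuine difference is the distance bound: you obtain \cref{eq:distance-bound} by re-centering strong convexity at $Z^*$ (where the gradient vanishes), getting $f(Z_0) \geq f(Z^*) + \tfrac{\lambda}{2}d^2$ and combining with \cref{eq:function-bound}, whereas the paper reads it directly off the integrated first-derivative inequality: at the interior minimizer $Z^* = e^{t^* H}$ one has $0 \geq -\|\nabla f(Z_0)\|_F + \lambda t^*$, so $t^* \leq \|\nabla f(Z_0)\|_F/\lambda$. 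Both routes give the same constant; yours costs an extra appeal to $\nabla f(Z^*)=0$ and to strong convexity along the geodesic from $Z^*$ to $Z_0$, but that geodesic is just the radial geodesic through $Z^*$ traversed backwards, so it stays in the ball automatically --- no appeal to convexity of geodesic balls is needed.

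Two caveats. First, your boundary localization as written does not follow from the hypothesis: $-\kappa\|\nabla f(Z_0)\|_F + \tfrac{\lambda}{2}\kappa^2 > 0$ requires $\|\nabla f(Z_0)\|_F < \lambda\kappa/2$, not $< \lambda\kappa$ (the paper's prose makes the same slip when it evaluates its \cref{eq:zero-deriv} at $t=\kappa$). The clean fix, implicit in the paper, is to use the once-integrated inequality: along every radial geodesic, $\tfrac{d}{dt}f(e^{tH}) \geq -\|\nabla f(Z_0)\|_F + \lambda t > 0$ once $t > \|\nabla f(Z_0)\|_F/\lambda < \kappa$, so the minimizer of $f$ over the closed ball sits at radius at most $\|\nabla f(Z_0)\|_F/\lambda$, hence in the interior, and $\nabla f(Z^*)=0$ as you need. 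Second, uniqueness is not "immediate because $(\star)$ is strict": $(\star)$ only lower-bounds $f$ relative to $Z_0$ and says nothing about two distinct minimizers. The correct one-line argument is that if there were two minimizers in the ball, $f$ would be $\lambda$-strongly convex along the geodesic joining them (which stays in the ball), forcing a strictly smaller value at its midpoint --- a contradiction. With these repairs your proof is complete and matches the paper's bounds.
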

\begin{proof} By replacing the function with $X \mapsto f(\sqrt{Z_0} X \sqrt{Z_0})$, it is enough to prove the assertions in the case $Z_0 = I_p$. Let $B$ be the geodesic ball of radius $\kappa$ about $I_p$. By geodesic convexity on $B$, we have
\begin{align}\frac{d^2}{dt^2} f(e^{t H}) = \frac{d^2}{ds^2} f(e^{\frac{1}{2} t H} e^{s H} e^{\frac{1}{2} t H}) \geq \lambda \label{eq:second-deriv}.
\end{align}for all $t  \leq \kappa, \|H\|_F = 1$. Integrating \cref{eq:second-deriv} once, we have 
\begin{align}\frac{d}{dt} f(e^{t H}) \geq \frac{d}{dt} f(e^{t H})|_{t = 0} +  \lambda t  \geq -\|\nabla f(I_p)\|_F + \lambda t,\label{eq:first-deriv}
\end{align}
for $0 \leq t \leq \kappa, \|H\|_F= 1$. The last inequality is by Cauchy-Schwarz. Integrating again, we find that \begin{align}f(e^{t H}) \geq -t\|\nabla f(I_p)\|_F + \frac{\lambda}{2} t^2.\label{eq:zero-deriv}\end{align}
$0 \leq t \leq \kappa, \|H\|_F= 1$. Evaluating \cref{eq:zero-deriv} with $t = \kappa$ shows that on the boundary of $B$, the function value is at least $f(I_p)$. As a consequence, a local minimum $Z^*$ must occur in $B$, and by strong convexity it must be unique. Solving \cref{eq:first-deriv} for $t$ with $e^{tH} = Z^*$, and hence left-hand side equal to zero, shows $t \leq \|\nabla f(I_p)\|_F/\lambda \leq \kappa$, proving \cref{eq:distance-bound}. \cref{eq:zero-deriv} is minimum at this value of $t$; evaluating at the minimum implies \cref{eq:function-bound}. \end{proof}
We now apply the previous lemma to $f^{\Phi}$ to conclude \cref{thm:frob-improvement}.
\begin{proof}[Proof of \cref{thm:frob-improvement}] We first show, using the assumption that $\Phi$ is $\eps$-doubly balanced, that $\|\nabla f^{\Phi}(I_p)\|_F = O(\eps \sqrt{p} )$. Recall that $\nabla f^{\Phi}(I_p) = \frac{p}{n} \Phi^*(\Phi(I_p)^{-1}) - I_n.$ By $\eps$-balancedness and $\eps \leq c$, $\|(n\Phi(I_p)/s(\Phi))^{-1} - I_n \|_{op} = O(\eps)$, so $\|A\|_{op}:= \|\Phi(I_p)^{-1} - n I_p/s(\Phi) \|_{op} = O(n \eps/s(\Phi)).$ Now 
\begin{align*}
\|\frac{p}{n} \Phi^*(\Phi(I_p)^{-1}) - I_n\|_{op} &\leq \left\|\frac{p}{n} \Phi^*(n I_p/s(\Phi)) - I_n\right\|_{op} + \|A\|_{op}\\
&= O( \eps). 
\end{align*}
Thus $\|\nabla f^{\Phi}\|_F \leq \sqrt{p} \|\nabla f^{\Phi}\|_{op} = O(\eps \sqrt{p})$. By \cref{cor:convex-ball}, we may now apply \cref{lem:frob-grad} with $\kappa = c \lambda$ and $Z_0 = I_p$ to conclude that there is a unique optimizer $Z^*$ for $f^{\Phi}$ satisfying \begin{gather} \|\log Z^* \|_F = O (\sqrt{p} \eps/\lambda). \label{eq:op-distance-bound} \end{gather}
Because $\sqrt{p} \eps/\lambda \leq c$, $\| I - Z \|_F = O(\sqrt{p}\eps/\lambda).$ We may take $L = \sqrt{Z^*}$. It is a straightforward calculation to show that, by $\eps$-balancedness of $\Phi$, the scaling $R$ may be taken as some multiple of $\Phi(Z^*)^{-1/2}$. \end{proof}

The next lemma us useful for showing that the iterative algorithm converges quickly. In what follows we assume that the identity is the optimizer. The results are sharper in this case, and we may assume this is the case by using \cite{KLR19} to ``translate" the optimizer to the identity.
\begin{lem}\label{lem:grad} Suppose that $f:\PD_1(p) \to \RR$ is geodesically $\lambda$-strongly convex on the geodesic ball of radius $\kappa$ about $Z_0$, and that $\nabla f(Z_0) = 0$. If $\|\nabla f(Y)\|_F < \lambda\kappa/8$, then $Y$ is contained in a sublevel set of $f$ on which $f$ is geodesically $\lambda$-strongly convex. 
\end{lem}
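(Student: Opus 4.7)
The plan is to show first that the gradient bound forces $Y$ to lie well inside the ball $B(Z_0, \kappa)$, and then to exhibit a sublevel set of $f$ that contains $Y$ and is strictly contained in this ball, where the $\lambda$-strong convexity is already known.

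For the distance estimate, I would parameterize by arc length the unit-speed geodesic $\gamma$ from $Z_0$ to $Y$, and set $T = d(Z_0, Y)$ and $g(t) = \tfrac{d}{dt} f(\gamma(t))$. Since $\nabla f(Z_0) = 0$, $g(0) = 0$. Strong convexity on $B(Z_0, \kappa)$ gives $g'(t) \geq \lambda$ for $t \in [0, \min(T, \kappa)]$, so $g(t) \geq \lambda t$ there. Beyond $t = \kappa$, I would invoke the global geodesic convexity of $f = f^{\Phi}$ recorded in \cref{subsec:opscal} to conclude $g'(t) \geq 0$ there as well, yielding $g(T) \geq \lambda \min(T, \kappa)$ in all cases. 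Cauchy--Schwarz gives $|g(T)| \leq \|\nabla f(Y)\|_F < \lambda \kappa / 8$, which forces $T < \kappa/8$.

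A first-order geodesic convexity inequality at $Y$ then gives $f(Y) - f(Z_0) \leq \|\nabla f(Y)\|_F \cdot T < \lambda \kappa^2 / 64$. On the other hand, applying the lower bound on $g$ from the previous step along the geodesic from $Z_0$ to any $Z$ with $d(Z_0, Z) \geq \kappa$ and integrating yields $f(Z) - f(Z_0) \geq \lambda \kappa^2 / 2$. Setting $S = \{Z : f(Z) \leq f(Y)\}$, the point $Y$ belongs to $S$ by definition, while every $Z$ at geodesic distance at least $\kappa$ from $Z_0$ satisfies $f(Z) - f(Z_0) \geq \lambda \kappa^2 / 2 > \lambda\kappa^2/64 > f(Y) - f(Z_0)$, so $Z \notin S$. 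Hence $S \subseteq B(Z_0, \kappa)$, the region on which $f$ is $\lambda$-strongly convex, as desired.

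The main subtlety is that the argument secretly uses the \emph{global} geodesic convexity of $f$, not just the strong convexity on the ball, in order to propagate the monotonicity of $g$ past the boundary of $B(Z_0, \kappa)$; without it, a point $Y$ with small gradient could sit far from $Z_0$ in an unrelated region and foil the sublevel-set argument. In the intended application $f = f^{\Phi}$ this convexity holds on all of $\PD_1(p)$, so the lemma as stated causes no trouble. Beyond this conceptual point the proof is a clean bookkeeping exercise with the constant $1/8$ chosen to give slack between the value bound $\lambda\kappa^2/64$ at $Y$ and the boundary value $\lambda\kappa^2/2$.
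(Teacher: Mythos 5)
Your argument is correct, and it uses the same basic ingredients as the paper's proof -- integrating $\lambda$-strong convexity along unit-speed geodesics emanating from $Z_0$ (after the reduction to $Z_0=I_p$, this is exactly \cref{eq:first-deriv,eq:zero-deriv}) to get the boundary estimate $f(Z)-f(Z_0)\ge \lambda\kappa^2/2$ at distance $\kappa$, plus Cauchy--Schwarz to relate directional derivatives to $\|\nabla f(Y)\|_F$ -- but it is organized differently. The paper argues contrapositively: it takes $S=\{f\le h\}$ with $h=\min_{\partial B}f$ (the largest sublevel set inside the ball) and shows via the intermediate/mean value theorems and monotonicity of the directional derivative that any $Y\notin S$ has $\|\nabla f(Y)\|_F\ge \lambda\kappa/2$. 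You instead argue directly: the vanishing gradient at $Z_0$ and the small gradient at $Y$ force $d(Z_0,Y)<\kappa/8$, hence $f(Y)-f(Z_0)<\lambda\kappa^2/64$, and then the sublevel set $\{f\le f(Y)\}$ containing $Y$ cannot reach the boundary value $\lambda\kappa^2/2$, so it sits inside the ball. Your version buys a slightly cleaner quantitative picture (an explicit distance bound on $Y$ and an explicit level), while the paper's version never needs to locate $Y$. Your remark about the hidden hypothesis is well taken and applies equally to the paper's own proof: propagating the monotonicity of the directional derivative past $\partial B$ (in the paper, the step ``by convexity'' evaluating the derivative at $Y$ possibly outside the ball) uses plain geodesic convexity of $f$ beyond the ball, which is not literally in the hypotheses of the lemma but does hold for the intended $f=f^{\Phi}$ as recorded in \cref{subsec:opscal}; it would be cleaner to add global geodesic convexity of $f$ as an explicit assumption. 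Note also that the paper's argument actually yields the stronger threshold $\lambda\kappa/2$, so the constant $1/8$ in the statement leaves exactly the kind of slack your bookkeeping consumes.
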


\begin{proof}By replacing the function with $X \mapsto f(\sqrt{Z_0} X \sqrt{Z_0})$, it is enough to prove the lemma when $Z_0 = I_p$. Let $S$ be the largest level set contained in the geodesic ball $B$ of radius $\kappa$ about $I_p$. That is, if $h := \min_{X \in \partial B} f(X)$, then 
$$S = \{X:f(X) \leq h \}.$$ 
It suffices to show that if $Y$ is outside $S$, then $\|\nabla f(Y)\|_F \geq \kappa/8 \lambda$. Let $X$ be a point in $\partial B$, i.e. a point with $\|\log X\|_F = \kappa$, with $f(X) = h$. Let $H = \log X/\|\log X\|_F$.  By \cref{eq:zero-deriv}, we have $f(X) = h \geq \lambda \kappa^2/2$. Now let $Y \not \in S$. By continuity, there is a value $t_0$ between $0$ and $\kappa$ such that $f(e^{t K}) = h$ where $K = \log Y/\|\log Y\|_F$. By the mean value theorem, there is a value $t$ between $0$ and $\kappa$ such that $\frac{d}{dt} f(e^{tX}) \geq h/t_0 \geq \kappa/2$. By convexity and Cauchy-Schwarz, $\|\nabla f(Y)\|_F \geq \frac{d}{dt} f(e^{tX})|_{t = \kappa} \geq h/t_0 \geq \kappa/2$. \end{proof}

By \cref{cor:convex-ball}, we may apply \cref{lem:grad} with $\kappa = c \lambda$ and $Z_0 = I_p$.
\begin{cor}\label{lem:level-set}There is an absolute constant  $c > 0$ so that the following holds. Suppose $\Phi$ is a $(1 - \lambda)$-quantum expander. If 
$$\|\nabla f^{\Phi}(Z)\|_F \leq c\lambda^2,$$ then $f^\Phi: \PD_1(p) \to \RR$ is geodesically $\lambda/2$-strongly convex on a sublevel set $S$ of $f^\Phi$ containing $Z$.
\end{cor}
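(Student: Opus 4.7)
The plan is to derive this corollary as a direct consequence of Lemma \ref{lem:grad}, with Corollary \ref{cor:convex-ball} supplying the strong convexity of $f^\Phi$ near $I_p$ and the observation that $I_p$ is a critical point of $f^\Phi$ supplying the vanishing-gradient hypothesis. First, I would verify that $\nabla f^\Phi(I_p) = 0$. Since $\Phi$ is a $(1-\lambda)$-quantum expander it is $0$-doubly balanced, so $\Phi(I_p) = (s(\Phi)/n)I_n$ and $\Phi^*(I_n) = (s(\Phi)/p)I_p$. Substituting $Z = I_p$ into the gradient formula \cref{eq:operator-gradient} then gives
\[
\nabla f^\Phi(I_p) \;=\; \tfrac{p}{n}\Phi^*\!\left(\tfrac{n}{s(\Phi)} I_n\right) - I_p \;=\; \tfrac{p}{n} \cdot \tfrac{n}{s(\Phi)} \cdot \tfrac{s(\Phi)}{p} I_p - I_p \;=\; 0.
\]

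Next, by Corollary \ref{cor:convex-ball} applied with $\eps = 0$, there is an absolute constant $c_0 > 0$ such that $f^\Phi$ is geodesically $\lambda/2$-strongly convex on the geodesic ball $B$ of radius $c_0 \lambda$ around $I_p$. This sets up the hypotheses of Lemma \ref{lem:grad} with the base point $Z_0 = I_p$, the strong convexity parameter $\lambda/2$ (playing the role of $\lambda$ in the lemma statement), and radius $\kappa = c_0 \lambda$. The gradient threshold in Lemma \ref{lem:grad} is $\lambda \kappa/8 = (\lambda/2)(c_0 \lambda)/8 = (c_0/16)\lambda^2$.

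Therefore, choosing the absolute constant $c$ in the statement of the corollary to be at most $c_0/16$, the hypothesis $\|\nabla f^\Phi(Z)\|_F \leq c\lambda^2$ puts $Z$ below the gradient threshold of Lemma \ref{lem:grad}. The lemma then concludes that $Z$ lies in a sublevel set $S$ of $f^\Phi$ on which $f^\Phi$ is geodesically $\lambda/2$-strongly convex, exactly as desired. There is no real obstacle in this argument; the only point requiring care is the bookkeeping that turns the product $\lambda \cdot \kappa = \lambda \cdot (c_0 \lambda)$ into the $\lambda^2$ dependence appearing in the statement.
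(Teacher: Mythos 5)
Your proposal is correct and follows essentially the same route as the paper, which likewise derives the corollary by combining \cref{cor:convex-ball} (giving $\lambda/2$-strong convexity on the geodesic ball of radius $c\lambda$ about $I_p$) with \cref{lem:grad} applied at $Z_0 = I_p$ with $\kappa = c\lambda$. Your explicit verification that $\nabla f^\Phi(I_p) = 0$ (from $0$-double balancedness) and the constant bookkeeping $c \le c_0/16$ are details the paper leaves implicit, and they check out.
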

Applying \cref{lem:frob-grad} with $\kappa = c \lambda$ and $Z_0 = I_p$ and using geodesic convexity yields the next corollary.
\begin{cor}\label{cor:grad-bound}
Suppose $f$ is a $(1- \lambda)$-quantum expander. Then $\|\nabla f^{\Phi}(Z)\|_F \geq \lambda \min\{c \lambda, \|\log Z\|_F\}$.
\end{cor}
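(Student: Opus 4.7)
The plan is to integrate strong convexity along the unit-speed geodesic from $I_p$ to $Z$, exploiting that $\nabla f^\Phi$ vanishes at $I_p$ by balancedness of $\Phi$. By the scale-invariance $\nabla f^\Phi(Z) = \nabla f^\Phi(\alpha Z)$ noted right after \cref{eq:operator-gradient}, I may assume $Z \in \PD_1(p)$, so $H := \log Z$ is a traceless Hermitian matrix; the case $H = 0$ is trivial (both sides vanish), so assume $\|H\|_F > 0$ and set $U := H/\|H\|_F$, a traceless Hermitian with $\|U\|_F = 1$.

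Consider the geodesic $\gamma(t) := e^{tU}$ and the scalar function $\phi(t) := f^\Phi(\gamma(t))$, so that $\gamma(0) = I_p$ and $\gamma(\|H\|_F) = Z$. Since $U$ commutes with $\gamma(t)^{1/2} = e^{(t/2)U}$, the identity $\gamma(t)^{1/2} e^{sU} \gamma(t)^{1/2} = \gamma(t+s)$ holds, so the definition of the geodesic gradient in \cref{eq:geodesic-gradient} yields
\[\phi'(t) = \langle \nabla f^\Phi(\gamma(t)), U \rangle_{HS}.\]
Evaluating \cref{eq:operator-gradient} at $I_p$ and using that $\Phi$ is balanced (so $\Phi(I_p) = (s(\Phi)/n) I_n$ and $\Phi^*(I_n) = (s(\Phi)/p) I_p$) gives $\nabla f^\Phi(I_p) = 0$, and hence $\phi'(0) = 0$.

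By \cref{cor:convex-ball} applied with $\eps = 0$, $f^\Phi$ is geodesically $\lambda/2$-strongly convex on the geodesic ball of radius $c\lambda$ about $I_p$; since $\gamma(t)$ lies in this ball for $t \in [0, c\lambda]$ and $\|U\|_F = 1$, we obtain $\phi''(t) \geq \lambda/2$ on this interval. Integrating from $0$ and using $\phi'(0) = 0$ yields $\phi'(t) \geq (\lambda/2)t$ on $[0, c\lambda]$. For $t > c\lambda$, the global geodesic convexity of $f^\Phi$ on $\PD_1(p)$ (noted among the properties listed after \cref{thm:gurvits}) makes $\phi'$ nondecreasing, so $\phi'(t) \geq (\lambda/2)(c\lambda)$ in that regime. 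In either case, $\phi'(t) \geq (\lambda/2)\min\{c\lambda, t\}$.

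Evaluating at $t = \|H\|_F$ and applying Cauchy--Schwarz to the formula for $\phi'$ gives
\[\|\nabla f^\Phi(Z)\|_F \;\geq\; \phi'(\|H\|_F) \;\geq\; (\lambda/2)\min\{c\lambda, \|\log Z\|_F\},\]
and absorbing the factor $1/2$ into the constant $c$ yields the stated bound. There is no serious obstacle; the one point requiring care is the reparametrization identity $\gamma(t)^{1/2} e^{sU} \gamma(t)^{1/2} = \gamma(t+s)$ that lets us identify $\phi'$ with a geodesic directional derivative, which works precisely because $H$ is a scalar multiple of $U$ so all matrices in sight commute. Without this commutation one would need a more delicate computation to relate $\phi'$ to the geodesic gradient of $f^\Phi$.
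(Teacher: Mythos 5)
Your proposal is correct and is essentially the paper's own argument: the paper derives this corollary by applying \cref{lem:frob-grad} with $\kappa = c\lambda$ and $Z_0 = I_p$ together with geodesic convexity, and the proof of that lemma is exactly your computation --- integrate the $\lambda/2$-strong convexity from \cref{cor:convex-ball} along the unit-speed geodesic $e^{tU}$ out of $I_p$ (where the gradient vanishes by balancedness), use ordinary geodesic convexity to make $\phi'$ nondecreasing beyond radius $c\lambda$, and finish with Cauchy--Schwarz. The only cosmetic discrepancy is the factor $\lambda/2$ versus $\lambda$ in front of the min, a constant-factor slack that is equally present in the paper's own derivation and is harmless for how the corollary is used.
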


\subsection{Fast Sinkhorn from strong convexity}\label{subsec:sinkhorn-sc}
The goal of this section is to prove \cref{thm:expander-sinkhorn}. In \cref{subsec:sinkhorn-el}, we will use \cref{thm:expander-sinkhorn} in conjunction with \cref{thm:random-expander} to show that Tyler's iterative procedure converges linearly.

The theorem follows immediately from \cref{lem:level-set} and \cref{lem:sinkhorn-exponential}, which we will prove shortly. The next lemma is the classic, and completely standard, analysis of progress in the Sinkhorn scaling algorithm.

\begin{lem}\label{lem:sinkhorn-progress}Suppose $Z_0$, $Z$ are successive Sinkhorn scaling iterates. If $\|\nabla f^\Phi(Z_0)\|_F^2$ is at most $1$, then we have 
$$f^\Phi(Z) \leq f^\Phi(Z_0) - \frac{1}{6}\|\nabla f^\Phi(Z_0)\|_F^2.$$
\end{lem}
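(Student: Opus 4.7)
The plan is to use the explicit formula for the Sinkhorn iterate together with concavity of $\log\det$ and an elementary scalar inequality. Setting $G := \nabla f^\Phi(Z_0)$, the defining relation $Z^{-1} = \frac{p}{n}\Phi^*(\Phi(Z_0)^{-1})$ combined with \eqref{eq:operator-gradient} produces the key identity $I_p + G = \sqrt{Z_0}\,Z^{-1}\sqrt{Z_0}$. Since $Z \succ 0$, $G$ is Hermitian with spectrum strictly greater than $-1$, and a short calculation using adjointness of $\Phi,\Phi^*$ gives $\tr G = \tfrac{p}{n}\tr(\Phi(Z_0)\Phi(Z_0)^{-1}) - p = 0$.

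I would then split $f^\Phi(Z) - f^\Phi(Z_0)$ into its two log-det pieces. For the $-\log\det$ part, the identity above immediately gives $\log\det Z_0 - \log\det Z = \log\det(I_p + G)$. For the $\tfrac{p}{n}\log\det\Phi$ part, I would use concavity of $\log\det$ on positive definite matrices to obtain
\[
\log\det\Phi(Z) - \log\det\Phi(Z_0) \;\leq\; \tr\bigl(\Phi(Z_0)^{-1}\Phi(Z - Z_0)\bigr) \;=\; \tr\bigl(\Phi^*(\Phi(Z_0)^{-1})(Z-Z_0)\bigr).
\]
Multiplying by $p/n$ and invoking the Sinkhorn definition converts the right-hand side to $\tr(Z^{-1}(Z-Z_0)) = p - \tr(Z^{-1}Z_0)$, and $\tr(Z^{-1}Z_0) = \tr(\sqrt{Z_0}Z^{-1}\sqrt{Z_0}) = \tr(I_p + G) = p$ forces this to vanish. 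Combining the two pieces yields $f^\Phi(Z) - f^\Phi(Z_0) \leq \log\det(I_p + G)$.

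It remains to show $\log\det(I_p + G) \leq -\tfrac{1}{6}\|G\|_F^2$ under the assumption $\|G\|_F^2 \leq 1$. Letting $\lambda_1,\dots,\lambda_p$ denote the eigenvalues of $G$, we have $\sum_i \lambda_i = 0$ and $\sum_i \lambda_i^2 \leq 1$, so each $\lambda_i$ lies in $(-1, 1]$ (the lower bound is strict because $I_p + G \succ 0$). I would then establish the scalar inequality $\log(1+x) \leq x - \tfrac{1}{6}x^2$ for $x \in (-1, 1]$ by an elementary calculus step: differentiating $h(x) := x - \tfrac{1}{6}x^2 - \log(1+x)$ produces $h'(x) = x(2-x)/(3(1+x))$, which is negative on $(-1,0)$ and non-negative on $[0,1]$, so $h$ attains its minimum on $(-1,1]$ at $x = 0$ with $h(0) = 0$. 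Summing the scalar inequality over the $\lambda_i$ and using $\sum_i \lambda_i = 0$ gives $\log\det(I_p + G) = \sum_i \log(1+\lambda_i) \leq -\tfrac{1}{6}\sum_i \lambda_i^2 = -\tfrac{1}{6}\|G\|_F^2$, completing the proof. The only mildly delicate point is that the scalar inequality must remain valid all the way out to $x = 1$, so naive Taylor truncations of $\log(1+x)$ are not sharp enough without an additional argument; the direct sign analysis of $h'$ above handles this cleanly.
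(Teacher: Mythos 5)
Your proof is correct, and its skeleton coincides with the paper's: both reduce the claim to the intermediate bound $f^\Phi(Z) \leq f^\Phi(Z_0) + \log\det\bigl(\sqrt{Z_0}\,Z^{-1}\sqrt{Z_0}\bigr)$, identify $\sqrt{Z_0}\,Z^{-1}\sqrt{Z_0} = I_p + \nabla f^\Phi(Z_0)$ with trace $p$, and then bound the log-determinant of a trace-$p$ positive matrix by $-\tfrac16\|\nabla f^\Phi(Z_0)\|_F^2$. The two local differences are worth noting. First, where the paper gets the intermediate bound from Jensen's inequality in the form $\tfrac1n\log\det M \leq \log(\tfrac1n\tr M)$ applied to $M = \Phi(Z)\Phi(Z_0)^{-1}$, you use the tangent-line (first-order concavity) bound on $\log\det$ at $\Phi(Z_0)$ together with adjointness and the fixed-point relation $\tfrac{p}{n}\Phi^*(\Phi(Z_0)^{-1}) = Z^{-1}$; the two are interchangeable here, and your route makes the cancellation $\tr(Z^{-1}(Z-Z_0)) = p - \tr(I_p+G) = 0$ fully explicit. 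Second, for the final step the paper cites the robust AM--GM inequality (Lemma 5.1 of \cite{garg2017operator}), stated there as $\log\det W \leq -\tfrac16\min\{1,\|W-I_p\|_F\}$ (where the squared norm is evidently intended, and the hypothesis $\|\nabla f^\Phi(Z_0)\|_F^2\leq 1$ is what collapses the minimum), whereas you prove the needed special case from scratch via the scalar inequality $\log(1+x)\leq x-\tfrac16 x^2$ on $(-1,1]$, using $\tr G = 0$, $\|G\|_F\leq 1$ (so every eigenvalue lies in $[-1,1]$), and $I_p+G\succ 0$. Your sign analysis of $h'(x) = x(2-x)/(3(1+x))$ is right, so the argument is self-contained and slightly more elementary than the paper's, at the cost of proving only the regime $\|G\|_F\leq 1$ of the cited lemma --- which is exactly the regime the statement assumes.
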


\begin{proof}

Recall \cref{dfn:sinkhorn-alg}. Set $X =\Phi(Z_{0})^{-1}$ so that $\frac{p}{n} Z \Phi^*(X) = I_p$.
Now 
\begin{align*}
f^\Phi(Z)&= \frac{p}{n}\log \det\Phi(Z) - \log \det (Z)\\
&= \frac{p}{n} \left(\log \det (\Phi(Z)X) - \log \det(X)\right) - \log \det (Z)\\
& \leq p  \log \left(\frac{1}{n} \tr \Phi(Z)X\right) - \frac{p}{n}\log \det(X) - \log \det (Z)\\
&= p  \log \left(\frac{1}{n} \tr Z\Phi^*(X)\right) - \frac{p}{n}\log \det(X) - \log \det (Z)\\
& = f^{\Phi}(Z_0) + \log \det \left(W\right),
\end{align*}
for $W:=\sqrt{Z_0} Z^{-1} \sqrt{Z_0}$. Here the inequality is Jensen's inequality and the last equality is by our choice of $Z_0$ and $Z$. Compute $$\nabla f^\Phi(Z_0) = \frac{p}{n}\sqrt{Z_0} \Phi^*(\Phi(Z_0)^{-1}) \sqrt{Z_0} -I_p = \frac{p}{n} \sqrt{Z_0}\Phi^*(X)\sqrt{Z_0} -I_p =   W -I_p.$$ 
Note that $\tr W = \tr Z_0 Z^{-1} = \frac{p}{n} \tr Z_0 \Phi^*(X)  = \frac{p}{n} \tr \Phi(Z_0) X = p$. By Lemma 5.1 of \cite{garg2017operator} (a robust version of the AM-GM inequality) 
\begin{gather*}\log \det(W) \leq - \frac{1}{6}\min\{1, \left\|W -I_p\right\|_F\} =
- \frac{1}{6}\|\nabla f^{\Phi}(Z_0)\|_F^2. \qedhere
\end{gather*}
\end{proof}

\begin{lem}\label{lem:sinkhorn-exponential} Suppose $\|\nabla f^{\Phi}(Z)\|_F \leq 1$. If $f^{\Phi}:\PD_1(p) \to \RR$ is geodesically $\lambda$-strongly convex in a sublevel set of $f$ containing $Z_0$, the $T^{th}$ Sinkhorn iterate $Z_T$ starting at $Z_0$ satisfies
$$\|\nabla f^{\Phi}(Z_T)\|_F^2 \leq 2^{- \lambda T/24}.$$
\end{lem}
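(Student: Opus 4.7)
The plan is to combine a Polyak--Lojasiewicz (PL) inequality, implied by geodesic strong convexity in the sublevel set $S$, with the one-step Sinkhorn progress bound of \cref{lem:sinkhorn-progress}. Because $f^\Phi$ is geodesically $\lambda$-strongly convex on $S$, there is a unique minimizer $Z^* \in S$, and integrating the strong-convexity lower bound along the geodesic from an arbitrary $Z \in S$ to $Z^*$---exactly as in the proof of \cref{lem:frob-grad}---yields the PL inequality
$$ f^\Phi(Z) - f^\Phi(Z^*) \leq \frac{1}{2\lambda}\|\nabla f^\Phi(Z)\|_F^2, \qquad Z \in S. $$

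Since \cref{lem:sinkhorn-progress} shows the Sinkhorn iteration is monotone in $f^\Phi$, the entire trajectory $\{Z_t\}$ stays inside the (geodesically convex) sublevel set $S$, so the PL estimate and the progress bound both apply at every step. Writing $D_t := f^\Phi(Z_t) - f^\Phi(Z^*)$, combining $D_{t+1} \leq D_t - \tfrac{1}{6}\|\nabla f^\Phi(Z_t)\|_F^2$ with $\|\nabla f^\Phi(Z_t)\|_F^2 \geq 2\lambda D_t$ gives the contraction
$$ D_{t+1} \leq (1 - \lambda/3)\, D_t. $$
Iterating from $D_0 \leq \tfrac{1}{2\lambda}\|\nabla f^\Phi(Z_0)\|_F^2 \leq \tfrac{1}{2\lambda}$ (PL at $Z_0$ with the hypothesis $\|\nabla f^\Phi(Z_0)\|_F \leq 1$) yields $D_T \leq (1-\lambda/3)^T/(2\lambda)$.

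To convert back to a gradient bound I apply \cref{lem:sinkhorn-progress} once more at step $T$: $\|\nabla f^\Phi(Z_T)\|_F^2 \leq 6(D_T - D_{T+1}) \leq 6 D_T \leq (3/\lambda)(1-\lambda/3)^T$. Finally, $(1-\lambda/3)^T \leq e^{-\lambda T/3} = 2^{-\lambda T/(3\ln 2)}$, and absorbing the $3/\lambda$ prefactor into the exponent---which is precisely why the statement has the generous constant $24$ rather than the sharper $3\ln 2 \approx 2.08$---gives $\|\nabla f^\Phi(Z_T)\|_F^2 \leq 2^{-\lambda T/24}$.

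The step I expect to require the most care is verifying that \cref{lem:sinkhorn-progress} actually applies at every $Z_t$, i.e.\ that $\|\nabla f^\Phi(Z_t)\|_F^2 \leq 1$ propagates along the iteration. The hypothesis controls only $t=0$, and the a priori bound $\|\nabla f^\Phi(Z_t)\|_F^2 \leq 6D_t \leq 3/\lambda$ from the Sinkhorn--PL combination is too loose for small $\lambda$. I would close this gap by using strong convexity and \cref{cor:grad-bound} to bound the gradient norm in terms of the geodesic distance $\|\log(Z^{*-1/2} Z_t Z^{*-1/2})\|_F$, which is controlled uniformly in $t$ by \cref{lem:frob-grad} and the monotone decay of $D_t$, so that once the base case holds the invariant persists throughout the trajectory.
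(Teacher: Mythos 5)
Your route is in substance the paper's route: the paper also extracts the PL-type bound $f(Z_0)-f(Z^*)\le \tfrac{1}{2\lambda}\|\nabla f^\Phi(Z_0)\|_F^2$ from \cref{eq:zero-deriv}, uses the monotone decrease from \cref{lem:sinkhorn-progress} to keep the normalized iterates in the sublevel set, and combines the two. The place where your bookkeeping genuinely differs is also where it breaks: the final step ``absorbing the $3/\lambda$ prefactor into the exponent'' is not valid uniformly in $T$. The inequality $\tfrac{3}{\lambda}(1-\lambda/3)^T \le 2^{-\lambda T/24}$ fails for small $T$ (already at $T=0$ the left side is at least $3$ while the right side is $1$), so your chain $\|\nabla f^\Phi(Z_T)\|_F^2 \le 6D_T \le \tfrac{3}{\lambda}(1-\lambda/3)^T$ only yields the stated bound once $T\gtrsim \lambda^{-1}\log(1/\lambda)$. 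The paper avoids the $1/\lambda$ prefactor by running the argument on the gradient norm rather than on the function gap: starting from $\|\nabla f^\Phi(Z_0)\|_F^2=\eps\le 1$, while the squared gradient exceeds $\eps/2$ each Sinkhorn step decreases $f^\Phi$ by at least $\eps/12$, whereas the total available decrease is $O(\eps/\lambda)$ by the PL bound; hence the squared gradient halves within $O(1/\lambda)$ steps, and iterating over powers of $2$ gives $2^{-\lambda T/24}$ directly. If you keep your function-gap contraction, you need this kind of threshold bookkeeping (or an extra argument for small $T$) to recover the lemma as stated.

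On the issue you flag at the end: your instinct that the hypothesis $\|\nabla f^\Phi(Z_t)\|_F^2\le 1$ needs attention is reasonable, but the repair you sketch does not work, because \cref{cor:grad-bound} is a \emph{lower} bound on the gradient in terms of $\|\log Z\|_F$, and neither it nor \cref{lem:frob-grad} provides the upper bound on $\|\nabla f^\Phi(Z_t)\|_F$ you would need (that would require a smoothness estimate the paper never establishes). The simpler resolution is that no propagation is needed: the robust AM-GM step inside the proof of \cref{lem:sinkhorn-progress} gives a per-step decrease of at least $\tfrac16\min\{1,\|\nabla f^\Phi(Z_t)\|_F^2\}$ with no hypothesis on the gradient, and by the PL inequality $\min\{1,\|\nabla f^\Phi(Z_t)\|_F^2\}\ge \min\{1,2\lambda D_t\}=2\lambda D_t$, since $2\lambda D_t\le 2\lambda D_0\le 1$ by monotonicity; so the contraction $D_{t+1}\le(1-\lambda/3)D_t$ holds whether or not intermediate gradients exceed $1$.
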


\begin{proof} 
Define $f:=f^{\Phi}$. If $f$ is geodesically $\lambda$-strongly convex in a sublevel set $S \subset \PD_1(p)$ of $f$ containing $Z$ then by \cref{eq:zero-deriv},
$$f(Z^*) \geq f(Z_0) - \frac{1}{2\lambda} \|\nabla f(Z_0)\|_F^2.$$

By \cref{lem:sinkhorn-progress}, each step of Sinkhorn iteration decreases $f$. Because $S$ is a sublevel set, the normalized Sinkhorn iterates $Z_i$ remain in $S$. Recall that $f$ and its gradient are the same at the normalized and unnormalized Sinkhorn iterates. Moreover, if $\|\nabla f(Z_0)\|_F^2 = \eps> 0$, then by \cref{lem:sinkhorn-progress} the number of steps $T$ before $\|\nabla f(Z)\|_F^2 \leq \eps/2$ must satisfy 
$$ f(Z_0) - T \eps/12 \geq  f(Z^*) \geq f(Z_0) - \frac{1}{\lambda} \eps,$$
or $T \leq 12/\lambda$. Repeating this argument for any power of $2$ tells us that after $T$ steps $\|\nabla f(Z_T)\|_F^2 \leq 2^{- \lambda T/24}.$ 
\end{proof}

\begin{proof}[Proof of \cref{thm:expander-sinkhorn}] By \cref{lem:level-set} $f^{\Phi}$ is geodesically convex on a level set containing $Z_0$. By \cref{lem:sinkhorn-exponential}, the $T^{th}$ Sinkhorn iterate $Z$ starting at $Z_0$ satifies $ \|\nabla f^{\Phi}(Z)\|_F = \exp( - O(\lambda T))$. \end{proof}
\begin{rem}
\cref{lem:sinkhorn-exponential} applies for any descent method, such as geodesic gradient descent, satisfying the conclusion of \cref{lem:sinkhorn-progress}.
\end{rem}

\subsection{Fast Sinkhorn for elliptical distributions.}\label{subsec:sinkhorn-el}

Here we prove \cref{thm:fast-sinkhorn-elliptical}. First we show that the gradient quickly becomes less than a small constant on the iterates of Sinkhorn's algorithm. For short-hand, define 
$$ f_{\vec v}:=f^{\Phi_{\vec v}}.$$

\begin{lem}\label{lem:sinkhorn-initial} Suppose $\Phi$ is a $1- \lambda$-quantum expander. Then for some 
$$ T = O\left(\frac{1}{\eps^2}(f^\Phi(Z_0) - p \log (s(\Phi)/n))\right),$$
 the output $Z_T$ of Sinkhorn's algorithm starting at $Z_0$ satisfies 
$ \| \nabla f^{\Phi} (Z)\|_F \leq \eps.$
\end{lem}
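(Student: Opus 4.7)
The plan is to combine the per-step progress estimate of \cref{lem:sinkhorn-progress} with a lower bound on $\inf_Z f^\Phi(Z)$. A standard telescoping argument will then bound the number of iterations required before $\|\nabla f^\Phi\|_F$ first falls below $\eps$.

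The key preliminary step is to establish that $\inf_Z f^\Phi(Z) = p \log(s(\Phi)/n)$. Since $\Phi$ is a $(1-\lambda)$-quantum expander, it is in particular $0$-doubly balanced, so $\Phi(I_p) = (s(\Phi)/n) I_n$ and $\Phi^*(I_n) = (s(\Phi)/p) I_p$. Substituting into \cref{eq:operator-gradient} gives $\nabla f^\Phi(I_p) = 0$, and by the geodesic convexity of $f^\Phi$ together with its scale invariance \cref{eq:invar} (which lets us restrict to $\PD_1(p)$), this critical point is a global minimizer. A direct evaluation then yields $f^\Phi(I_p) = \frac{p}{n} \log\det\bigl((s(\Phi)/n) I_n\bigr) = p \log(s(\Phi)/n)$.

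The main argument is then a short telescoping. Let $T^\star$ be the first iteration at which $\|\nabla f^\Phi(Z_{T^\star})\|_F \leq \eps$; we may assume $\eps \leq 1$. For each $t < T^\star$, \cref{lem:sinkhorn-progress} guarantees that $f^\Phi(Z_{t+1}) \leq f^\Phi(Z_t) - \tfrac{1}{6}\eps^2$. Summing these decrements and using the lower bound above gives
\[
  T^\star \cdot \tfrac{\eps^2}{6} \;\leq\; f^\Phi(Z_0) - f^\Phi(Z_{T^\star}) \;\leq\; f^\Phi(Z_0) - p \log(s(\Phi)/n),
\]
which is exactly the advertised bound on $T^\star$, and by construction $Z_{T^\star}$ satisfies $\|\nabla f^\Phi(Z_{T^\star})\|_F \leq \eps$.

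The only mild bookkeeping issue is that \cref{lem:sinkhorn-progress} as stated requires $\|\nabla f^\Phi(Z_0)\|_F \leq 1$. In iterations where the gradient is larger than $1$, the underlying robust AM--GM inequality from \cite{garg2017operator} still yields a constant per-step decrease of at least $\tfrac{1}{6}$, which dominates $\tfrac{1}{6}\eps^2$ whenever $\eps \leq 1$, so the telescoping goes through unchanged. Beyond this routine bookkeeping I do not anticipate any real obstacle: the proof is essentially ``gradient descent in a sublevel set plus a balancedness-based lower bound on the minimum.''
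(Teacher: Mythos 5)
Your proposal is correct and follows essentially the same route as the paper: the paper's proof simply cites the standard potential-function analysis of Sinkhorn's algorithm from \cite{garg2017operator} for the telescoping bound $T = O(\eps^{-2}(f^\Phi(Z_0) - \inf_Z f^\Phi(Z)))$ and then notes, exactly as you do, that double balancedness forces the infimum to be attained at $I_p$ with value $p\log(s(\Phi)/n)$. Your explicit telescoping via \cref{lem:sinkhorn-progress}, including the observation that the robust AM--GM bound still gives a constant $\tfrac16$ decrease when the gradient exceeds $1$, is just an unpacking of that cited standard analysis.
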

\begin{proof}
By standard analyses of Sinkhorn's algorithm \cite{garg2017operator}, $T$ can be taken to be on the order of  
\begin{gather}\frac{1}{\eps^2} \left(f^\Phi(Z_0) - \inf_Z f^\Phi(Z)\right).\label{eq:potential}
\end{gather} 
Because $\Phi$ is doubly balanced, the infimum occurs at $Z = I_p$ and we have $\inf_Z f^\Phi(Z)= \frac{p}{n} \log \det  s(\Phi) I_n/n = p \log(s(\Phi)/n)$.
\end{proof}

Next, we show that a doubly balanced scaling of a sufficiently good approximate quantum expander is  a quantum expander. This is the expander upon which we will be applying \cref{thm:expander-sinkhorn}.
\begin{lem}\label{lem:scaled-expander}Suppose $\delta \leq c  \lambda^2/\log p$ and that $\Phi$ is a $(\delta, 1 -\lambda)$-quantum expander. Then $\Phi$ is scalable and for any $L, R$ such that $\Phi_{L,R}$ is doubly balanced, 
$\Phi_{L,R}$ is a $1 - \lambda + O(\delta \log p/\lambda)$ quantum expander.
\end{lem}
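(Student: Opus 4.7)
The plan is to combine Theorem~\ref{thm:lapchi} with Lemma~\ref{lem:expansion-scaling}. The hypothesis $\delta \leq c\lambda^2/\log p$ makes $\delta \log p/\lambda^2$ small enough to apply Theorem~\ref{thm:lapchi} with $\eps = \delta$. This immediately establishes that $\Phi$ is scalable, and produces a specific pair $L_0, R_0$ with $\det L_0 = \det R_0 = 1$ such that $\Phi_{L_0, R_0}$ is doubly balanced and
$$ \|I_p - L_0^\dagger L_0\|_{op},\ \|I_n - R_0^\dagger R_0\|_{op} = O\!\left(\delta \log p/\lambda\right).$$
Call this common bound $\delta' = O(\delta \log p/\lambda)$. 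The hypothesis, together with $\lambda \leq 1$, gives $\delta' \leq c' \lambda$, so $\delta'$ is below the absolute constant required in Lemma~\ref{lem:expansion-scaling}.

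Applying Lemma~\ref{lem:expansion-scaling} to $\Phi$ with these $L_0, R_0$ shows that $\Phi_{L_0, R_0}$ is an $(\delta + C\delta',\, 1 - \lambda + C\delta')$-quantum expander. But we chose $L_0, R_0$ so that $\Phi_{L_0, R_0}$ is \emph{exactly} doubly balanced, which overrides the $(\delta + C\delta')$ bound on the balance parameter. Hence $\Phi_{L_0, R_0}$ is a genuine $(1 - \lambda + O(\delta \log p/\lambda))$-quantum expander, which is the desired conclusion for this specific choice of scaling factors.

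To promote this from the specific pair $(L_0, R_0)$ to \emph{any} $L, R$ balancing $\Phi$, I will verify that the size, the balance conditions, and the expansion ratio of $\Phi_{L, R}$ depend on $L$ and $R$ only through the positive factors $L^\dagger L$ and $R^\dagger R$. Writing the polar decompositions $L = U_L |L|$, $R = U_R |R|$ and substituting $Y = U_L^\dagger X U_L$ in the definitions, one sees that unitary conjugation preserves tracelessness and the Frobenius norm, that $R \Phi(L^\dagger L) R^\dagger$ is unitarily conjugate to $|R|\Phi(L^\dagger L)|R|$, and similarly for $\Phi_{L,R}^*(I_n)$; thus the $\eps$-doubly balanced and expansion conditions are unitary-invariant on each side. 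Scalar rescalings $L \mapsto \alpha L$, $R \mapsto \beta R$ multiply $\Phi_{L,R}$ by $\alpha^2\beta^2$, leaving the balance and expansion ratios invariant. Finally, the doubly balanced scaling is unique up to unitaries and scalars: it corresponds to the unique minimizer of the strictly geodesically convex function $f^{\Phi}$ on $\PD_1(p)$, which in the approximate-expander regime follows from the strong convexity of Corollary~\ref{cor:expansion-convexity} and Corollary~\ref{cor:convex-ball}. Hence for any balancing $L, R$, we have $L^\dagger L \propto L_0^\dagger L_0$ and $R^\dagger R \propto R_0^\dagger R_0$, and $\Phi_{L,R}$ inherits the expansion bound.

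The main obstacle is cleanly handling the ``for any $L,R$'' clause, which requires invoking the (essentially standard) uniqueness of the doubly balanced scaling. The first two paragraphs are routine applications of results already in the paper; the only care needed is verifying that the bound $\delta' = O(\delta\log p/\lambda)$ from Theorem~\ref{thm:lapchi} satisfies the hypothesis of Lemma~\ref{lem:expansion-scaling}, which is immediate from the assumption $\delta \leq c\lambda^2/\log p$.
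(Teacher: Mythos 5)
Your proposal is correct and follows essentially the same route as the paper: invoke \cref{thm:lapchi} to get a doubly balanced scaling with $\|I_p - L_0^\dagger L_0\|_{op}, \|I_n - R_0^\dagger R_0\|_{op} = O(\delta\log p/\lambda)$, apply \cref{lem:expansion-scaling} to transfer expansion to the scaled map, and then use uniqueness of the balanced scaling (up to scalars) together with the unitary/scalar invariance of the expansion and balance conditions to cover arbitrary $L,R$. Your justification of uniqueness via the strong-convexity corollaries and of the invariance via polar decomposition only spells out steps the paper asserts more tersely.
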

\begin{proof}
By \cref{thm:lapchi}, a unique minimizer $Y \in \PD_1(p)$ to $f^{\Phi}$ exists (i.e., a matrix $Y$ such that $Y^{1/2}$ and $\Phi(Y)^{-1/2}$ are scaling solutions) satisfying $\|I_p - Y\|_{op}, \|I_p - \Phi(Y)^{-1}\|_{op} = O( \delta \log p /\lambda)$. By \cref{lem:expansion-scaling}, the map $\Psi:=\Phi_{\sqrt{Y}, \Phi(Y)^{-1/2}}$ is a $(0, 1 - \lambda + O(\delta \log p/\lambda))$-quantum expander. By uniqueness of $Y$, $L^\dagger L, R^\dagger R$ match $Y, \Phi(Y)^{-1}$ up to scalar multiples, and it is easy to verify that expansion of $\Phi_{L,R}$ depends only on the equivalence class of $L^\dagger L, R^\dagger R$ under scalar multiplication.
\end{proof}

\begin{cor}\label{cor:scaled-expander} Let $\lambda$ be a small constant as in \cref{thm:random-expander}. With probability $1 - O(e^{-q(p,n,c\lambda^2/\log p)})$, 
\begin{gather}\left(\Phi_{\vec x}\right)_{\widehat{\Sigma}^{-1/2},\Phi_{\vec x}(\widehat{\Sigma}^{-1})^{-1/2}}\label{eq:scaled-vectors}
\end{gather} is a $1 - .5\lambda$-quantum expander. In other words, the operator $\Phi_{\vec z}$ where 
$$z_i = \widehat{\Sigma}^{-1/2} x_i/\|\widehat{\Sigma}^{-1/2} x_i\|$$ is a $1 - .5 \lambda$ quantum expander.
\end{cor}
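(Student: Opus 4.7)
The plan is straightforward: apply \cref{thm:random-expander} to the Haar operator $\Phi_{\vec v}$ with expansion parameter $\delta = c\lambda^2/\log p$, so that, outside a bad event of probability $O(e^{-q(p,n,c\lambda^2/\log p)})$, $\Phi_{\vec v}$ is an $(\delta,1-\lambda)$-quantum expander. On this good event, \cref{lem:scaled-expander} guarantees that every doubly balanced scaling of $\Phi_{\vec v}$ is a $(1 - \lambda + O(\delta\log p/\lambda))$-quantum expander; for $c$ small enough this is at least a $1 - 0.5\lambda$-quantum expander, which is the claim.

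The one thing to verify is that $\Phi_{\vec z}$ is itself a doubly balanced scaling of $\Phi_{\vec v}$ (not just of $\Phi_{\vec x}$). From $x_i = u_i \Sigma^{1/2} v_i$ one reads off $\Phi_{\vec x} = (\Phi_{\vec v})_{\Sigma^{1/2}, \diag(u_i)}$, and since scalings compose, $\Phi_{\vec z} = (\Phi_{\vec x})_{\widehat\Sigma^{-1/2}, \Phi_{\vec x}(\widehat\Sigma^{-1})^{-1/2}}$ is also a scaling of $\Phi_{\vec v}$. For the balance condition, $\Phi_{\vec z}(I_p) = \diag(\|z_i\|^2) = I_n$ since the $z_i$ are unit vectors; on the dual side, $\Phi_{\vec z}^*(I_n) = \sum_i z_i z_i^\dagger = \widehat\Sigma^{-1/2} \bigl(\sum_i x_i x_i^\dagger / (x_i^\dagger \widehat\Sigma^{-1} x_i)\bigr) \widehat\Sigma^{-1/2}$, and Tyler's equation \cref{eq:tyler} identifies the bracketed sum with $(n/p)\widehat\Sigma$, so $\Phi_{\vec z}^*(I_n) = (n/p) I_p$.

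There is no real obstacle: the corollary essentially repackages \cref{thm:random-expander} and \cref{lem:scaled-expander} via the dictionary between Tyler's equation and the Sinkhorn fixed-point condition from \cref{subsec:opscal}. The one point that could be mishandled is applying \cref{lem:scaled-expander} to $\Phi_{\vec x}$ directly rather than to $\Phi_{\vec v}$ — \cref{thm:random-expander} only controls expansion of the Haar operator, so the scaling must be exhibited relative to $\Phi_{\vec v}$, as above.
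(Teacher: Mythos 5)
Your proposal is correct and follows essentially the same route as the paper: apply \cref{thm:random-expander} with parameter $c\lambda^2/\log p$ and then invoke \cref{lem:scaled-expander} for the doubly balanced scaling, which the paper asserts (and you verify explicitly via $\Phi_{\vec z}(I_p)=I_n$ and Tyler's equation for the dual side) is exactly the operator $\Phi_{\vec z}$ viewed as a scaling of $\Phi_{\vec v}$. Your extra verification that $\Phi_{\vec z}$ is a doubly balanced scaling of $\Phi_{\vec v}$ is a detail the paper leaves implicit, not a different argument.
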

\begin{proof}
Let $\lambda$ be a small constant as in \cref{thm:random-expander}. With probability at least $1 - O(e^{-q(p,n,c\lambda^2/\log p)}) = 1 - O(e^{ - \Omega(n/(p \log^2 p))})$ for $n \geq C p \log^2 p$, the operator $\Phi_{\vec v}$ is an $(c\lambda^2/\log p, 1-\lambda)$ quantum expander. As the operator in \cref{eq:scaled-vectors} is a doubly balanced scaling of $\Phi_{\vec v}$, \cref{lem:scaled-expander} implies the claim.
\end{proof}

We can now prove \cref{thm:fast-sinkhorn-elliptical}.

\begin{proof}
Let $\lambda$ be a small constant as in \cref{thm:random-expander}. By \cref{cor:scaled-expander}, with probability at least $1 - O(e^{-q(p,n,c\lambda^2/\log p)}) = 1 - O(e^{ - \Omega(n/(p \log^2 p))})$ for $n \geq C p \log^2 p$, the operator
$$\Psi:=\left(\Phi_{\vec x}\right)_{\widehat{\Sigma}^{-1/2},\Phi_{\vec x}(\widehat{\Sigma}^{-1})^{-1/2}}$$
is a $(1 -.5 \lambda)$-quantum expander. Moreover, as reasoned in the proof of \cref{thm:elliptical}, $\|\Sigma^{1/2} \widehat{\Sigma}^{-1} \Sigma^{1/2} - I_p\|_{op} = O( c\lambda^2)$. Because $\Psi$ is a suitable scaling of $\Phi_{\vec x}$, if the normalized Sinkhorn iterates of $\Phi_{\vec x}$ starting at $I_p$ are $Z_i$ then the normalized Sinkhorn iterates of $\Psi$ starting at $\widehat{\Sigma}$ are $Z'_i =\widehat{\Sigma}^{1/2} Z_i \widehat{\Sigma}^{1/2}$. By \cref{lem:sinkhorn-initial}, for some $T$ as in \cref{lem:sinkhorn-initial} we have $\|\nabla f^{\Psi}(Z'_{T})\|_F \leq \delta_0$ where $\delta_0 = c  \lambda^2$.

Because $\Psi$ is doubly balanced and $\tr \widehat{\Sigma} = p$,
\begin{align*}f^{\Psi}(\widehat{\Sigma}) - \log(s(\Phi)/n) &\leq p \log \left(\frac{1}{n}\tr \Psi(\widehat{\Sigma}) \right) - \log \det(\widehat{\Sigma}) - \log(s(\Phi)/n)\\
& = - \log \det (\widehat{\Sigma}),
\end{align*}
where the inequality is Jensen's. Next, using that $\widehat{\Sigma} \in \Sigma(1 + O(1))$ we have $|\log\det(\widehat{\Sigma})| = O(p) + |\log \det(\Sigma)|,$ so $T = O((p + |\log \det(\widehat{\Sigma})|/\delta_0^2) = O(p + |\log\det(\Sigma)|)$. 

By \cref{thm:expander-sinkhorn}, the number of further steps $T$ to obtain a normalized Sinkhorn iterate $Z'$ with $\|\nabla f^{\Psi}(Z')\|_F \leq \eps$ is $O(\log (1/\eps))$. We may assume $\eps \leq c \lambda$, because otherwise $\log (1/\eps) = \Theta(1)$. Next we apply \cref{cor:grad-bound} to $\Psi$ at $I_p$ to see that $\|\log (Z')\|_F = 2\eps/\lambda = O(\eps)$, so the corresponding normalized Sinkhorn iterate $Z$ satisfies 
 $$\|\log \widehat{\Sigma}_1^{1/2} Z \widehat{\Sigma}_1^{1/2}\|_F = O(\eps).$$ 
 Applying \cref{lem:trace-det} completes the proof. \end{proof}

\section{Quantum expansion for random unit vectors}\label{sec:random}

Here we prove \cref{thm:random-expander}. We prove the theorem in \cref{subsec:random-cheeger} after stating the main technical ingredient of the proof, \cref{lem:random-cheeger}.

\subsection{A Cheeger constant for operators.}

\begin{lem}\label{lem:unitaries} Let $\vec v = (v_1, \dots, v_n)  \in (\CC^p)^n$ be a tuple of vectors. Then $\Phi_{\vec v}$ is an $(\eps, \lambda)$-quantum expander if and only if $\Phi_{\vec v}$ is $\eps$-doubly balanced and for all unitaries $U$, the matrix $B^U \in \Mat(p, n)$ given by 
$$ B^U_{i,j} = B^U_{i,j} = |(Uv_i)_j|^2$$
satisfies 
$$\sup_{x \in \RR^n, \langle x, \vec 1_p \rangle = 0} \frac{\|B^Ux\|}{\|x\|} \leq (1 - \lambda) s(\Phi_v)/\sqrt{np}.$$
\end{lem}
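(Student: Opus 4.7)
The plan is to reduce the operator-norm supremum in the definition of quantum expansion, which ranges over all traceless Hermitian matrices $X \in \Herm(p)$, to a double supremum: an outer supremum over unitaries $U$ (coming from diagonalization of $X$) and an inner supremum over traceless real vectors (the diagonals). The doubly balanced hypothesis appears verbatim on both sides of the equivalence, so the content of the lemma lies entirely in matching the expansion conditions.

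\textbf{Step 1 (diagonalize the input).} Every traceless Hermitian $X \in \Herm(p)$ can be written as $X = U^\dagger D U$ with $U$ unitary and $D = \diag(d)$ for some $d \in \RR^p$ satisfying $\langle d, \vec 1 \rangle = \tr D = \tr X = 0$. This gives a bijection, modulo a negligible choice of $U$, between the traceless Hermitians and pairs $(U, d)$ with $d$ traceless. Also $\|X\|_F = \|D\|_F = \|d\|_2$.

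\textbf{Step 2 (compute $\Phi_{\vec v}(X)$).} Since $\Phi_{\vec v}(X) = \diag(\langle v_i, X v_i\rangle : i \in [n])$, I would expand
\begin{equation*}
\langle v_i, U^\dagger D U v_i \rangle \;=\; \langle U v_i, D (Uv_i)\rangle \;=\; \sum_{j=1}^p d_j \,|(Uv_i)_j|^2 \;=\; (B^U d)_i,
\end{equation*}
so $\Phi_{\vec v}(X) = \diag(B^U d)$ and $\|\Phi_{\vec v}(X)\|_F = \|B^U d\|_2$. Combining with Step 1,
\begin{equation*}
\sup_{\substack{X \in \Herm(p)\\ \tr X = 0}} \frac{\|\Phi_{\vec v}(X)\|_F}{\|X\|_F} \;=\; \sup_{U \text{ unitary}} \;\sup_{\substack{d \in \RR^p \\ \langle d,\vec 1\rangle=0}} \frac{\|B^U d\|_2}{\|d\|_2}.
\end{equation*}

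\textbf{Step 3 (finish).} The definition of $(\eps,\lambda)$-quantum expander asks precisely that the left-hand side is at most $(1-\lambda)s(\Phi_{\vec v})/\sqrt{np}$ and that $\Phi_{\vec v}$ is $\eps$-doubly balanced. The right-hand-side bound is exactly the hypothesis of the lemma quantified over all unitaries $U$, so the two sets of conditions coincide, completing the equivalence.

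\textbf{Expected obstacle.} There is no serious obstacle; the only thing to be careful about is bookkeeping of dimensions (the matrix $B^U$ acts as a linear map $\RR^p \to \RR^n$, and the trace-zero constraint lives on the input side in $\RR^p$) and verifying that $\tr X = 0$ corresponds exactly to $\langle d, \vec 1\rangle = 0$ under diagonalization. Both are automatic from $\tr X = \tr D$.
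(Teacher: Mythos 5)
Your proposal is correct and matches the paper's proof essentially line for line: the paper likewise diagonalizes a traceless Hermitian $X$ as $U^\dagger \diag(x) U$, computes $\langle Uv_i, \diag(x)\, Uv_i\rangle = (B^U x)_i$, and rewrites the expansion supremum as a double supremum over unitaries and traceless real vectors. Your bookkeeping of which space carries the trace-zero constraint is in fact slightly cleaner than the paper's own statement, which mixes up the roles of $\RR^p$ and $\RR^n$ in its notation.
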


\begin{proof}
Write
\begin{align*}\sup_{X \in\Herm(p), \tr X = 0} \frac{\|\Phi(X)\|_F^2}{\|X\|_F^2} &= \sup_{U \in U(p)} \sup_{x \in \RR^n: \langle x, \vec 1_p \rangle = 0}  \frac{\|\Phi(U^\dagger\diag(x) U)\|_F^2}{\|x\|^2}\\
&= \sup_{U \in U(p)} \sup_{x \in \RR^n: \langle x, \vec 1_p \rangle = 0} \frac{\sum_{i = 1}^n \langle Uv_i ,\diag(x) Uv_i \rangle^2 }{\|x\|^2}\\
&=\sup_{U \in U(p)} \sup_{x \in \RR^n: \langle x, \vec 1_p \rangle = 0} \frac{\sum_{i = 1}^n |(B^Ux)_i|^2 }{\|x\|^2}.\qedhere
\end{align*}
\end{proof}

\begin{dfn}
Let $B \in \Mat(p,n)$ be a nonnegative matrix. The Cheeger constant $\ch(B)$ of the weighted bipartite graph associated to $B$ is given by 
$$\ch(B):=\min_{T \subset [p], |T| \leq p/2, S \subset [n]} \phi(S,T)$$
where
$$\phi(S,T) := \frac{\cut(S, T)}{\min\{\vol(S,T), \vol(\overline{S}, \overline{T})\}}$$
where 
$$ \vol(S,T):= \sum_{i \in T, j \in [n]} b_{ij} + \sum_{i \in [p], j \in S} b_{ij} \textrm{ and } \cut(S, T):= \sum_{i \not\in T, j  \in S} b_{ij} + \sum_{i \in T, j \not\in S} b_{ij}.$$
\end{dfn}

\begin{lem}\label{lem:graph-cheeger}Let $B \in \Mat(p,n)$ be a nonnegative matrix. Suppose $B$ is $\eps$-doubly balanced. Then for $\eps < c \ch(B)^2$,
$$ \sup_{ x \in \RR^n, \langle x, \vec 1_p \rangle=0 } \frac{\|Bx\|_2}{\|x\|_2} \leq \max\left\{1/2, 1 - \ch(B)^2 + C \frac{\eps}{\ch(B)^2}\right\} \frac{s(B)}{\sqrt{np}}.$$ 
Here $C, c$ are absolute constants.
\end{lem}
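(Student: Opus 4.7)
The plan is to reduce the claim to the classical edge Cheeger inequality for the weighted bipartite graph whose biadjacency is $B$. Let $\vec r := B\vec 1_n \in \RR^p$ and $\vec c := B^\dagger \vec 1_p \in \RR^n$ be the row-sum and column-sum vectors, and set $R := \diag(\vec r)$, $C := \diag(\vec c)$. Consider the normalized biadjacency $\tilde B := R^{-1/2} B C^{-1/2}$. A direct check ($\tilde B \sqrt{\vec c} = R^{-1/2} B \vec 1_n = R^{-1/2} \vec r = \sqrt{\vec r}$) shows that $\sqrt{\vec r}$ and $\sqrt{\vec c}$ are, up to normalization, top left and right singular vectors of $\tilde B$ with singular value $1$, and the symmetric $(p+n)\times (p+n)$ block matrix with off-diagonal block $\tilde B$ is the normalized adjacency $D^{-1/2} A D^{-1/2}$ of the weighted bipartite graph on $[p]\sqcup [n]$ with degree vector $(\vec r, \vec c)$.

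Next, I would observe that the quantity $\ch(B)$ as defined in the paper is precisely the edge conductance of this bipartite graph: for $S \subset [n]$ and $T \subset [p]$, $\vol(S,T)$ is the total degree of the vertex set $S \cup T \subset [n] \sqcup [p]$, while $\cut(S,T)$ is the total weight of edges with exactly one endpoint in $S\cup T$, so $\phi(S,T)$ is the usual conductance of the cut $(S\cup T,\, \overline{S\cup T})$. The classical edge Cheeger inequality applied to the reversible random walk then yields
\[ \sigma_2(\tilde B) \leq 1 - \tfrac{1}{2}\ch(B)^2. \]

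The final step is to transfer this spectral gap back to $B$. Writing $B = R^{1/2}\tilde B C^{1/2}$ and setting $y := C^{1/2} x$, we have $\|Bx\|_2 \leq \sqrt{r_{\max}}\,\|\tilde B y\|_2$ and $\|x\|_2 \geq \|y\|_2/\sqrt{c_{\max}}$. By $\eps$-balancedness, $r_{\max} \leq (1+\eps)\,s(B)/p$ and $c_{\max} \leq (1+\eps)\,s(B)/n$, so $\sqrt{r_{\max}\, c_{\max}} \leq (1+O(\eps))\,s(B)/\sqrt{np}$. The subtle point is that the constraint $\langle x, \vec 1_n\rangle = 0$ only gives $\langle y, C^{-1/2}\vec 1_n\rangle = 0$, which is close but not equal to $y \perp \sqrt{\vec c}$. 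Writing $c_j^{\pm 1/2} = \bar c^{\pm 1/2}(1 + O(\eps))$ for $\bar c := s(B)/n$ and expanding to first order shows $|\langle y, \sqrt{\vec c}\rangle|/\|\sqrt{\vec c}\| = O(\eps)\|y\|_2$, so the component of $y$ along the top right singular vector $\sqrt{\vec c}/\|\sqrt{\vec c}\|$ of $\tilde B$ has size $O(\eps)\|y\|_2$. Hence $\|\tilde B y\|_2 \leq (\sigma_2(\tilde B) + O(\eps))\|y\|_2$, and combining the pieces gives
\[ \frac{\|Bx\|_2}{\|x\|_2} \leq \left(1 - \tfrac{1}{2}\ch(B)^2 + O(\eps)\right)\frac{s(B)}{\sqrt{np}}, \]
which fits into the claimed $\max\{1/2,\, 1 - \ch(B)^2 + C\eps/\ch(B)^2\}$ bound under the hypothesis $\eps < c\,\ch(B)^2$.

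The main obstacle I expect is this last step: while the Cheeger inequality is itself classical, one must carefully track how the $\eps$-imbalance propagates through both the change of variables $y = C^{1/2} x$ and the transfer of the orthogonality condition. The crucial technical point is showing that $y$ has only an $O(\eps)\|y\|_2$ component in the direction $\sqrt{\vec c}$ given only the constraint $\langle y, C^{-1/2}\vec 1_n\rangle = 0$; this relies on a first-order Taylor expansion of the diagonal scalings around $\sqrt{\bar c}\, I_n$ and careful accounting of the leading-order cancellation.
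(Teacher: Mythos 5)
Your route is genuinely different from the paper's: the paper simply invokes the Cheeger-type inequality of \cite{KLR19} for $\eps$-doubly balanced matrices to bound the second singular value of $B$ itself, and then passes to the bound over $x \perp \vec 1$ via the spectral-gap-to-expansion translation in \cref{lem:translator}; you instead re-derive everything from the classical graph Cheeger inequality applied to the degree-normalized biadjacency $\tilde B = R^{-1/2} B C^{-1/2}$. Much of your bookkeeping is sound: $\ch(B)$ is indeed the conductance of the associated weighted bipartite graph (the restriction $|T| \leq p/2$ is harmless by complement symmetry), $\sqrt{\vec r}, \sqrt{\vec c}$ form the top singular pair of $\tilde B$ with $\sigma_1 = 1$, the transfer factor $\sqrt{r_{\max} c_{\max}} \leq (1+\eps)s(B)/\sqrt{np}$ is correct, and your argument that the constraint $\langle y, C^{-1/2}\vec 1_n\rangle = 0$ leaves only an $O(\eps)\|y\|_2$ component of $y$ along $\sqrt{\vec c}$ is right.

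However, there is a genuine gap at the final step. The classical Cheeger inequality only gives $1 - \sigma_2(\tilde B) \geq \ch(B)^2/2$, so your chain of estimates yields $\|Bx\|_2/\|x\|_2 \leq \left(1 - \ch(B)^2/2 + O(\eps)\right)s(B)/\sqrt{np}$, and this does \emph{not} fit under the stated bound $\max\{1/2,\ 1 - \ch(B)^2 + C\eps/\ch(B)^2\}$: the hypothesis $\eps < c\,\ch(B)^2$ gives no lower bound on $\eps$, so for $\eps$ much smaller than $\ch(B)^4$ and $\ch(B)$ bounded away from $1$ (say $\eps \to 0$ and $\ch(B) = 1/2$) your bound is $7/8$ while the lemma asserts $3/4$. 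The coefficient $1$ on $\ch(B)^2$ in the lemma is precisely what the paper imports from the sharper Cheeger-type inequality of \cite{KLR19} for $\eps$-doubly balanced matrices, and it cannot be recovered from the vanilla $h^2/2$ graph Cheeger bound; your last sentence claiming the bound ``fits into'' the stated maximum is therefore incorrect. The weaker inequality you prove would in fact still suffice for the downstream uses (\cref{cor:cheeger-expander} and \cref{thm:random-expander} only need expansion $1-\lambda$ for some absolute constant $\lambda$), but as written your argument establishes a strictly weaker statement than the lemma, so either the constant in the lemma must be weakened to match your proof or the sharper input from \cite{KLR19} must be used.
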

\begin{proof}
By \cite{KLR19}, the spectral gap of $B$ is at most $\ch(B)^2 - 3 \eps.$ By \cref{lem:translator}, $B$ is a quantum expander with the desired parameters.
\end{proof}
\cref{lem:unitaries} and \cref{lem:graph-cheeger} lead us to consider the \emph{minimum} value 
$$\ch(\vec v):=\inf_{U} \ch(B^U),$$
and immediately imply the following:

\begin{cor} \label{cor:cheeger-expander}

Let $\eps < c \ch(\vec v)^2$. If the operator $\Phi_{\vec v}$ is $\eps$-doubly balanced, then $\Phi_{\vec v}$ is an
$$ \left(\eps, \max\left\{\frac12, 1 - \ch(\vec v)^2 + C \frac{\eps}{\ch(\vec v)^2}\right\}\right)$$
quantum expander. Here $C, c'$ are absolute constants.
\end{cor}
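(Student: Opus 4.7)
The plan is to deduce the corollary as an immediate consequence of \cref{lem:unitaries} and \cref{lem:graph-cheeger}, applied uniformly over unitaries $U$. The first step is to verify that if $\Phi_{\vec v}$ is $\eps$-doubly balanced, then each matrix $B^U$ inherits the balance property in the matrix sense. The row sums of $B^U$ are $\sum_j |(Uv_i)_j|^2 = \|v_i\|^2$, so the hypothesis $\|\Phi_{\vec v}(I_p) - s(\Phi_{\vec v})I_n/n\|_{op} \leq s(\Phi_{\vec v})\eps/n$ precisely states that they lie within $s(\Phi_{\vec v})\eps/n$ of $s(\Phi_{\vec v})/n$. The column sums are the diagonal entries of $U\Phi_{\vec v}^*(I_n)U^\dagger$; by unitary invariance of the operator norm, this matrix differs from $s(\Phi_{\vec v})I_p/p$ in operator norm by at most $s(\Phi_{\vec v})\eps/p$, which bounds every diagonal entry uniformly in $U$. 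Hence $B^U$ is $\eps$-doubly balanced in the matrix sense for every $U$, with $s(B^U) = s(\Phi_{\vec v})$.

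Next, by the definition $\ch(\vec v) = \inf_U \ch(B^U)$, we have $\ch(B^U) \geq \ch(\vec v)$ for every $U$; combined with $\eps < c\ch(\vec v)^2 \leq c\ch(B^U)^2$, this lets me apply \cref{lem:graph-cheeger} to each $B^U$, yielding
$$
\sup_{x \in \RR^n,\, \langle x, \vec 1 \rangle = 0} \frac{\|B^U x\|_2}{\|x\|_2} \leq \max\!\left\{\tfrac12,\; 1 - \ch(B^U)^2 + C\frac{\eps}{\ch(B^U)^2}\right\} \frac{s(\Phi_{\vec v})}{\sqrt{np}}.
$$
I would then observe that $t \mapsto 1 - t^2 + C\eps/t^2$ has derivative $-2t - 2C\eps/t^3 < 0$ on $t > 0$, so it is strictly decreasing; replacing $\ch(B^U)$ by the potentially smaller $\ch(\vec v)$ only enlarges the right-hand side, producing a bound uniform in $U$. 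Invoking the reverse direction of \cref{lem:unitaries} with $1 - \lambda = \max\{1/2,\, 1 - \ch(\vec v)^2 + C\eps/\ch(\vec v)^2\}$ then exactly gives the claimed quantum expander conclusion.

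I do not anticipate any substantive obstacle; the argument is essentially bookkeeping, requiring only that one recognize that conjugation by $U$ preserves both the operator-norm balance of $\Phi_{\vec v}^*(I_n)$ (hence the diagonal, hence the column sums of $B^U$) and the total size $s(\Phi_{\vec v})$. The monotonicity of the Cheeger-type bound in $\ch(B^U)$ is an elementary calculus check, and the passage from an operator-norm bound on $\Phi_{\vec v}^*(I_n)$ to a uniform diagonal bound on every unitary conjugate is the only point that is more than symbol-pushing. No ingredients beyond the two cited lemmas and the definition of $\ch(\vec v)$ should be required.
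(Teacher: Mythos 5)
Your proposal is correct and follows exactly the route the paper intends: the paper states the corollary as an immediate consequence of \cref{lem:unitaries} and \cref{lem:graph-cheeger} together with the definition $\ch(\vec v)=\inf_U \ch(B^U)$, and you have simply written out the bookkeeping (uniform $\eps$-double-balancedness of every $B^U$, $s(B^U)=s(\Phi_{\vec v})$, and monotonicity of $t\mapsto 1-t^2+C\eps/t^2$) that the paper leaves implicit. No gaps.
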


\begin{rem}[General operators] If we define $$\ch(\Phi):= \inf_{U \in U(p), V \in U(n)} \ch(B^{U,V})$$ where $B^{U,V}_{i,j} := e_j^\dagger \Phi_{U,V}(e_i e_i^\dagger)e_j$, then it is not hard to show \cref{cor:cheeger-expander} holds with $\Phi_{\vec v}$ replaced by $\Phi$ and $\ch(\vec v)$ replaced by $\ch(\Phi)$. Moreover, $\ch(\vec v) = \ch(\Phi_{\vec v})$.

\end{rem}

We next show how to express $\ch(\vec v)$ as an infimum over projections rather than unitaries. This will be easier to control in the random setting. 

\begin{dfn}[Conductance for projections]
Suppose $\vec v \in (\RR^p)^n$, $\pi:\RR^p \to \RR^p$ is an orthogonal projection, and $S \subset [n]$. We think of $S,\pi$ as a cut, and the quantity $\phi$ defined below as the conductance of the cut. Define 
$$ \phi(S, \pi) =  \frac{\cut(S, \pi)}{\min\{\vol(S, \pi), \vol(\overline{S}, I_p - \pi)\}.}$$
where 
$$
\vol(S, \pi) = \sum_{i \in n} \|\pi v_i \|^2 + \sum_{i \in S} \|v_i\|^2,
\textrm{ and } \cut(S, \pi) = \sum_{i \in S} \|(I_p - \pi) v_i \|^2 + \sum_{i \not\in S} \|\pi v_i \|^2 .$$

\end{dfn}

\begin{lem}\label{lem:operator-cheeger}
$\ch(\vec v) =  \min_{S, \pi:\rk \pi\leq p/2} \phi(S, \pi).$
\end{lem}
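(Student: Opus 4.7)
The plan is to set up a bijection between pairs $(U, T)$, with $U$ a $p \times p$ unitary and $T \subseteq [p]$ satisfying $|T| \leq p/2$, and orthogonal projections $\pi$ on $\RR^p$ of rank at most $p/2$, under which the Cheeger conductances on $B^U$ and the projection-conductances $\phi(S, \pi)$ coincide cut by cut. Given $(U, T)$, set $\pi := U^\dagger P_T U$ where $P_T = \diag(\vec 1_T)$ is the coordinate projection onto $T$; this is an orthogonal projection of rank exactly $|T|$. Conversely, every orthogonal projection of rank $k$ arises this way by diagonalizing it in an orthonormal eigenbasis with the eigenvectors of eigenvalue $1$ listed first, so the map is surjective onto projections of rank at most $p/2$.

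Next I would verify the conductance identity by unwinding the definitions. Using that $U$ is an isometry and $\|P_T U v\|^2 = \langle v, U^\dagger P_T U v\rangle = \|\pi v\|^2$, one obtains
$$\vol_{B^U}(S, T) = \sum_{i \in [n]} \|P_T U v_i\|^2 + \sum_{i \in S} \|U v_i\|^2 = \sum_{i \in [n]} \|\pi v_i\|^2 + \sum_{i \in S} \|v_i\|^2 = \vol(S, \pi).$$
The same identity applied with $I_p - P_T$ in place of $P_T$ (so $I_p - \pi$ in place of $\pi$) gives $\vol_{B^U}(\bar S, [p] \setminus T) = \vol(\bar S, I_p - \pi)$, and splitting the cut into its $S$ and $\bar S$ halves likewise yields $\cut_{B^U}(S, T) = \cut(S, \pi)$. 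Hence $\phi_{B^U}(S,T) = \phi(S,\pi)$ for every cut, and taking infima over $(U, S, T)$ on one side and $(S, \pi)$ on the other gives $\ch(\vec v) = \min_{S, \pi: \rk \pi \leq p/2} \phi(S, \pi)$.

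The only subtlety, which is not a substantive obstacle, arises if one reads $\ch(\vec v)$ as allowing complex $U$ in the real setting $\vec v \in (\RR^p)^n$: then $U^\dagger P_T U$ acts on real vectors through its real part $M_\RR$, which is PSD with $0 \preceq M_\RR \preceq I_p$ and $\tr M_\RR = |T|$ but need not itself be a projection. Writing $M_\RR$ as a convex combination of rank-$|T|$ real orthogonal projections (these being the extreme points of $\{A \succeq 0 : A \preceq I_p,\ \tr A = |T|\}$) and applying the mediant inequality extracts a real projection $\pi$ with $\phi(S, \pi) \leq \phi_{B^U}(S, T)$, so allowing complex $U$ does not decrease $\ch(\vec v)$. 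The main work is really just careful bookkeeping; \cref{lem:unitaries} has already reduced quantum expansion of $\Phi_{\vec v}$ to a bipartite Cheeger condition on $B^U$, and the above correspondence is just a re-parametrization of the cuts.
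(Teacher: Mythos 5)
Your main argument is exactly the paper's proof: the correspondence $(U,T)\leftrightarrow\pi=U^\dagger P_T U$ together with the cut-by-cut identities $\vol_{B^U}(S,T)=\vol(S,\pi)$, $\cut_{B^U}(S,T)=\cut(S,\pi)$, and then taking infima, so the core of the proposal is correct and essentially identical to the paper. One caution about your aside: the mediant-inequality step does not actually work as stated, because the denominator of $\phi$ is a \emph{minimum} of two volumes, and $\min\{\sum_j\lambda_j V_1(\pi_j),\sum_j\lambda_j V_2(\pi_j)\}\geq\sum_j\lambda_j\min\{V_1(\pi_j),V_2(\pi_j)\}$ goes the wrong way (one ends up comparing a min-max with a max-min), so you cannot extract from the convex decomposition of $\re(U^\dagger P_T U)$ a single real projection with no larger conductance by that route; the paper glosses over this real/complex mismatch as well, and for real $\vec v$ the cleaner fix is to observe that in \cref{lem:unitaries} the supremum over traceless Hermitian $X$ is attained at real symmetric $X$ (since $\Phi_{\vec v}(X)=\Phi_{\vec v}(\re X)$ and $\|\re X\|_F\leq\|X\|_F$), so only real orthogonal $U$ are ever needed and the correspondence with real projections applies verbatim.
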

\begin{proof}
The conductance $\phi(S, T)$ the cut $S, T$ of $B^U$ is exactly the same as the conductance of $\phi(S, \pi)$ of $\vec v$ where $\pi$ is the orthogonal projection to the span $\{U e_i : i \in T\}$, which ranges over all orthogonal projections as $T, U$ range over $\binom{[p]}{\leq p/2}\times U(p)$. Hence, 
$$\min_U \min_{S, T: |T| \leq p/2} \phi(S, T) = \min_{S, \pi:\rk \pi \leq p/2} \phi(S, \pi).\qedhere$$
\end{proof}

\subsection{Probabilistic preliminaries}

Recall the Bernstein condition, which can be used to show from the moments of a random variable that it is subexponential. 
\begin{dfn}[Subexponential variables and Bernstein's condition]
Recall that random variable $X$ with mean $\mu$ is said to be $(\nu, b)$-\emph{subexponential} if 
$$ \EE[e^{\lambda( X - \mu)}] \leq e^{\frac{1}{2} \nu^2\lambda^2} \textrm{ for all } |\lambda| \leq 1/b .$$

Say a random variable $X$ with mean $\mu$ and variance $\sigma^2$ satisfies the \emph{Bernstein condition with parameter }$b$ if for all integers $k \geq 3$ we have
\begin{align}
|\EE[(X - \mu)^k]| \leq \frac{1}{2} k! \sigma^2 b^{k-2}.
\end{align}
It is known that a random variable satisfying the Bernstein condition with parameter $b$ is $(2\sigma, 2b)$-subexponential.
\end{dfn}
Recall that the $\operatorname{Beta}(\alpha, \beta)$ distribution has mean $\mu = \alpha/(\alpha + \beta)$, variance $\sigma^2 = \alpha \beta /(\alpha + \beta)^2(\alpha + \beta + 1)$, and $k^{th}$ raw moment 
$$ \prod_{r = 0}^{k - 1}\frac{\alpha + r}{\alpha + \beta + r}.$$ We will be concerned with the setting when $2\alpha \leq 2\beta$ are positive integers adding to $p$.

\begin{lem}\label{lem:bern} If $X$ is $\bbeta(1/2, p - 1/2)$, then $X$ is $(\nu,\beta)$-sub-exponential where $\nu, b =O(p^{-1})$.
\end{lem}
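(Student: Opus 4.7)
The plan is to verify Bernstein's condition directly using the explicit raw-moment formula, and then invoke the stated implication that Bernstein $\Rightarrow$ sub-exponential.

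First, I would record the basic parameters. For $X \sim \bbeta(1/2, p - 1/2)$ the mean is $\mu = \frac{1}{2p}$ and, using $\alpha = 1/2$, $\beta = p - 1/2$, the variance is $\sigma^2 = \frac{(1/2)(p - 1/2)}{p^2(p+1)}$, which satisfies $\sigma^2 \leq \frac{1}{2p^2}$ for all $p \geq 1$. So both $\mu$ and $\sigma$ are $O(p^{-1})$, which is the right ballpark for the target subexponential parameters.

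Next I would bound the raw moments. The formula $\EE[X^k] = \prod_{r=0}^{k-1} \frac{1/2 + r}{p + r}$ rewrites as
$$
\EE[X^k] \;=\; \frac{(2k-1)!!}{2^k\,\prod_{r=0}^{k-1}(p+r)} \;=\; \frac{(2k)!}{4^k\, k!\,\prod_{r=0}^{k-1}(p+r)} \;=\; \binom{2k}{k}\frac{k!}{4^k\,\prod_{r=0}^{k-1}(p+r)}.
$$
Since $\binom{2k}{k} \leq 4^k$ and $\prod_{r=0}^{k-1}(p+r) \geq p^k$, this gives the clean bound $\EE[X^k] \leq k!/p^k$. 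Then I would pass to centered moments via the elementary estimate $|X - \mu|^k \leq 2^k(X^k + \mu^k)$, which combined with $\mu^k = (2p)^{-k} \leq p^{-k} \leq k!/p^k$ yields
$$
\EE\!\bigl[(X - \mu)^k\bigr] \leq \EE\!\bigl[|X - \mu|^k\bigr] \leq 2^{k+1}\,\frac{k!}{p^k}.
$$

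Finally I would check Bernstein's condition with $b = C/p$ for a sufficiently large absolute constant $C$. The requirement $\EE[(X-\mu)^k] \leq \tfrac{1}{2}k!\,\sigma^2 b^{k-2}$ for $k \geq 3$ becomes, after substituting the lower bound $\sigma^2 \geq c_0/p^2$ (valid for $p$ large; the small-$p$ cases can be absorbed into constants since $X \in [0,1]$ gives the condition for free), a comparison of the form $C^{k-2}/(2 c_0^{-1}) \geq 2^{k+1}$, which holds for all $k \geq 3$ once $C$ is a large enough constant (e.g. $C = 8$). Thus $X$ satisfies Bernstein's condition with parameter $b = O(p^{-1})$, and the cited fact that Bernstein($b$) implies $(2\sigma, 2b)$-sub-exponential immediately yields the claim with $\nu, b = O(p^{-1})$.

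There is essentially no hard step here: the only thing to watch is that the factor $2^k$ arising from the centering inequality is absorbed by $b^{k-2}$ rather than by $\sigma^2$, which forces $b$ to be chosen as a \emph{constant} multiple of $1/p$ (not smaller). For very small $p$ one can fall back on the trivial bound $|X - \mu| \leq 1$ and absorb into constants, so the estimate is uniform in $p$.
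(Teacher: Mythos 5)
Your proposal is correct and takes essentially the same approach as the paper: both verify Bernstein's condition directly from the Beta raw-moment product formula and then invoke the stated Bernstein $\Rightarrow$ sub-exponential implication, the only cosmetic difference being that the paper bounds $|\EE[(X-\mu)^k]| \leq \EE[X^k]$ via pointwise comparison while you bound $|X-\mu|^k \leq 2^{k-1}(X^k+\mu^k)$ and absorb the $2^k$ factor into $b$. One tiny numerical nit: with $\sigma^2 \geq p^{-2}/8$ your displayed comparison at $k=3$ forces $C$ on the order of $2^8$, so the parenthetical choice $C=8$ is too small, but since the claim is only $b = O(p^{-1})$ with an unspecified absolute constant this does not affect correctness.
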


\begin{proof}
As $X$ has $\sigma^2 = O(p^{-2})$, it suffices to show that $X$ satisfies the Bernstein property with parameter $b = O(1/\beta)$.

Because $\mu$ and $X$ are nonnegative, $-\mu^k \leq (X - \mu)^k \leq X^k$. Thus, 
$$-\mu^k \leq \EE[(X - \mu)^k] \leq \EE[X^k].$$
Thus, $|\EE[(X - \mu)^k| \leq \max\{\mu^k, \EE[X^k]\} = \EE[X^k]$. The Bernstein property is satisfied if for $k \geq 3$ we have
\begin{align*}b^{k - 2} &\geq \frac{2}{k!} \EE[X^k]/\sigma^2\\
& =  \frac{2}{k!} \frac{(\alpha + \beta)^2 (\alpha + \beta + 1)}{\alpha\beta}\prod_{r = 0}^{k - 1}\frac{\alpha + r}{\alpha + \beta + r}\\
&= \frac{2}{k!} \frac{(\alpha + \beta) (\alpha + 1)}{\beta} \prod_{r = 2}^{k - 1}\frac{\alpha + r}{\alpha + \beta + r} \\
&= O\left( \frac{1}{k!} \prod_{r = 2}^{k - 1}\frac{\frac{1}{2} + r}{p + r}\right) \\
&= O\left(\frac{1}{k!} \frac{1}{p^{k-2}}\prod_{r = 2}^{k - 1}(r+1) \right)\\
&= O(1/p^{k-2}),
\end{align*}
so we may take $b = O(1/p)$. \end{proof}

\begin{lem}\label{lem:na} Let $v$ be a uniformly random element of $\Ess^{p-1}$, and define 
$$X_k = \sum_{i = 1}^k |v_i|^2 \textrm{ and } Y = \sum_{i = k+1}^l |v_i|^2$$ 
for $1 \leq k < l \leq p$. That is, $X_k$ and $Y$ are sums of squares of disjoint sets of coordinates of $v$. Then $X_k$ and $Y$ are negatively associated.

\end{lem}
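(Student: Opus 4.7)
The plan is to reduce to the classical Dirichlet representation of $(|v_1|^2,\ldots,|v_p|^2)$ and apply two standard facts about negative association.

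First I would use the rotational invariance of the uniform measure on $\Ess^{p-1}$: let $g = (g_1,\dots,g_p)$ be a vector of i.i.d.\ standard Gaussians. Then $v$ has the same distribution as $g/\|g\|$, so
\[ (|v_1|^2,\dots,|v_p|^2) \;\stackrel{d}{=}\; \left(\frac{g_i^2}{\sum_{j=1}^p g_j^2}\right)_{i=1}^p. \]
The variables $Z_i := g_i^2$ are i.i.d.\ and nonnegative.

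Second I would invoke the theorem of Joag-Dev and Proschan, which states that if $Z_1,\ldots,Z_n$ are independent nonnegative random variables, then the normalized family $(Z_i/\sum_{j} Z_j)_{i=1}^n$ is negatively associated. Applied to $Z_i = g_i^2$, this shows that the full coordinate-squared vector $(|v_1|^2,\dots,|v_p|^2)$ is negatively associated.

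Finally, I would invoke the standard closure property of negative association: if $(W_1,\dots,W_p)$ is negatively associated and $A, B \subset [p]$ are disjoint, then any nondecreasing functions $f(W_A)$ and $g(W_B)$ of the two disjoint blocks form a negatively associated pair. Taking $A = \{1,\dots,k\}$, $B = \{k+1,\dots,l\}$, and $f = g = $ the sum, we obtain the negative association of $X_k$ and $Y$. The only real work here is citing the right closure/representation lemmas; no computation is needed, and no step is expected to be an obstacle beyond locating the correct reference for the Joag-Dev--Proschan fact.
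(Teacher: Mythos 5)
Your route is genuinely different from the paper's. The paper argues directly about the pair $(X_k,Y)$: it represents $X_k$ and $Y$ via independent chi-squared sums $X',Y',Z'$, reduces negative association of a pair to the quadrant inequality $\Pr[X_k\geq a \wedge Y\geq b]\leq \Pr[X_k\geq a]\Pr[Y\geq b]$, and verifies that by an elementary conditioning argument on $Y'$. You instead go through negative association of the entire normalized vector $(|v_1|^2,\dots,|v_p|^2)$ and then apply the closure of negative association under nondecreasing functions of disjoint index sets; that closure property is indeed one of the standard properties in Joag-Dev--Proschan, and the block-sum step is fine.

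The one real problem is the key lemma as you state it: ``if $Z_1,\dots,Z_n$ are independent nonnegative random variables then $(Z_i/\sum_j Z_j)_{i=1}^n$ is negatively associated'' is not a theorem of Joag-Dev and Proschan and is false in that generality. For a counterexample take $n=3$, $Z_1,Z_2$ i.i.d.\ uniform on $[1,2]$ and $Z_3\in\{0,M\}$ with probability $1/2$ each: for large $M$ the normalized coordinates $W_1=Z_1/S$ and $W_2=Z_2/S$ (with $S=Z_1+Z_2+Z_3$) are both near $1/2$ when $Z_3=0$ and both near $0$ when $Z_3=M$, so $\mathrm{Cov}(W_1,W_2)>0$. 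What Joag-Dev and Proschan actually prove is that the \emph{Dirichlet} distribution is negatively associated, and that is exactly the case you need: with $Z_i=g_i^2$ the normalized vector is $\mathrm{Dirichlet}(1/2,\dots,1/2)$ by the gamma representation, so replacing your general claim with the Dirichlet result repairs the argument and the remaining steps go through unchanged. With that substitution your proof is correct; it is shorter but leans on an external reference, whereas the paper's conditioning argument is self-contained and establishes only what is needed, namely negative quadrant dependence (hence negative association) of the single pair $(X_k,Y)$.
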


\begin{proof} Recall that it suffices to show $\Pr[X \geq a \wedge Y \geq b] \leq \Pr[X \geq a] \Pr[Y \geq b]$ for all $a, b \in (0,1)$, or equivalently that $\Pr[Y \geq b|X \geq a] \leq \Pr[ Y \geq b]$ for all $a, b \in (0,1)$. 

We may sample $X$ and $Y$ by sampling independent $\chi$-squared random variables $Z_1, \dots, Z_p$ and setting $ X = X'/(X' + Y' + Z'), Y = Y'/(X' + Y' + Z')$ where $X' = \sum_{i = 1}^k Z_i, Y' = \sum_{i = k+1}^l Z_i,$ and $Z' = \sum_{i = l+1}^p Z_i$. In particular, $X', Y', Z'$ are independent. 

Now $X \geq a$ if and only if $X' \leq a(X' + Y' + Z')$, or $Y' \leq f(X', Z') := - Z' + (1 - a)X'/a $, and $Y' \geq b$ if and only if $Y' \leq b(X' + Y' + Z')$ or $Y' \geq g(X',Z'):=b(X' + Z')/(1 - b) \geq 0$. Thus, we seek to show that 
$$\Pr[Y' \geq g(X', Z')| Y' \leq f(X', Z')] \leq \Pr[ Y' \geq g(X', Z')].$$
To prove this, it suffices to show that $\Pr[Y' \geq \alpha | Y' \leq \beta] \leq \Pr[ Y' \geq \alpha]$ for all numbers $0 \leq \alpha < \beta$. This is true because the event $Y' \geq \alpha$ contains the complement of $Y' \leq \beta$.  \end{proof}

\begin{lem}\label{lem:subexp} Let $X_k$ be as in \cref{lem:na}, i.e. $X_k = \bbeta(k - 1/2, p - k - 1/2)$. Then $X_k$ is $(\nu = O(k^{1/2} p^{-1}), b = O(p^{-1}))$-subexponential.
\end{lem}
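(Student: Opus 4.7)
The plan is to write $X_k = \sum_{i=1}^k |v_i|^2$ as a sum of $k$ individual squared coordinates, each of which is marginally distributed as $\bbeta(1/2, p-1/2)$, and then exploit \emph{negative association} of the family $(|v_i|^2)_{i=1}^p$ together with \cref{lem:bern} to get the claimed subexponential parameters.

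First, by rotational symmetry of the uniform measure on $\Ess^{p-1}$, each coordinate satisfies $|v_i|^2 \sim \bbeta(1/2, p-1/2)$ with mean $\mu = 1/p$. Hence by \cref{lem:bern} the centered variable $|v_i|^2 - \mu$ is $(\nu_0, b_0)$-subexponential with $\nu_0, b_0 = O(p^{-1})$, i.e. $\EE[\exp(\lambda(|v_i|^2 - \mu))] \leq \exp(\nu_0^2 \lambda^2/2)$ for $|\lambda| \leq 1/b_0$.

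Next I would upgrade \cref{lem:na} from a pairwise statement about two disjoint sums to the full multivariate negative association of the family $(|v_1|^2, \dots, |v_p|^2)$. The same coupling used in \cref{lem:na} works: represent $|v_i|^2 = Z_i / (Z_1 + \cdots + Z_p)$ where the $Z_i$ are i.i.d.\ $\chi^2_1$; each $|v_i|^2$ is a coordinatewise increasing function of $Z_i$ and decreasing of the remaining $Z_j$'s, and one checks via the conditional argument in \cref{lem:na} (now applied to arbitrary disjoint collections of indices) that the family is negatively associated. This is also the classical fact that Dirichlet-distributed vectors are negatively associated.

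With negative association in hand, for any $\lambda$ the function $x \mapsto e^{\lambda x}$ is monotone in $x$ (all increasing if $\lambda\ge 0$, all decreasing if $\lambda<0$), so the product inequality for negatively associated variables gives
\[
\EE\bigl[e^{\lambda (X_k - k\mu)}\bigr] \;=\; \EE\Bigl[\prod_{i=1}^k e^{\lambda (|v_i|^2 - \mu)}\Bigr] \;\leq\; \prod_{i=1}^k \EE\bigl[e^{\lambda (|v_i|^2 - \mu)}\bigr] \;\leq\; \exp\!\bigl(k \nu_0^2 \lambda^2 / 2\bigr)
\]
for $|\lambda| \leq 1/b_0$. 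This is exactly the $(\sqrt k\,\nu_0, b_0)$-subexponential bound, giving $\nu = O(\sqrt k / p)$ and $b = O(1/p)$ as required.

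The only nontrivial step is the multivariate strengthening of \cref{lem:na}; the other pieces are bookkeeping. If one preferred to avoid proving full negative association, an alternative is to compute $\EE[X_k^m]$ directly from the $\bbeta(k/2,(p-k)/2)$ moments and verify the Bernstein condition with $\sigma^2 = O(k/p^2)$ and $b = O(1/p)$ term by term, mirroring the estimate in \cref{lem:bern}; I would keep this as a fallback in case the negative-association route proved cumbersome to write cleanly.
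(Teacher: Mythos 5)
Your proposal is in substance the paper's own argument: both reduce to the single-coordinate bound of \cref{lem:bern} and then use negative dependence to bound the moment generating function of $X_k$ by the $k$-th power of that of $X_1$. The one place you diverge is the negative-dependence input. The paper never proves (or needs) full multivariate negative association of $(|v_1|^2,\dots,|v_p|^2)$: it splits off one coordinate at a time, writing $X_k = X_{k-1}+Y$ with $Y$ marginally distributed as $X_1$, and applies the two-block statement of \cref{lem:na} inside an induction, so only the factorization $\EE[e^{\lambda(X_{k-1}-\mu_{k-1})}e^{\lambda(Y-\mu_1)}]\le \EE[e^{\lambda(X_{k-1}-\mu_{k-1})}]\,\EE[e^{\lambda(Y-\mu_1)}]$ for one pair of monotone functions is ever used. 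Your stronger claim, that the squared coordinates form a negatively associated family, is true (it is the classical Joag-Dev--Proschan fact that Dirichlet vectors are negatively associated), but be careful with the assertion that ``the same coupling used in \cref{lem:na} works'': the conditional argument there establishes only negative quadrant dependence of two disjoint sums (a statement about threshold events), and upgrading that to negative association for arbitrary coordinatewise monotone functions of disjoint blocks takes genuinely more work or an external citation. Since your application only needs the product bound for exponentials of disjoint partial sums, the cleanest repair is exactly the paper's induction; your fallback of verifying the Bernstein condition directly from the $\bbeta(k/2,(p-k)/2)$ moments would also work, at the cost of messier moment estimates.
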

\begin{proof} By \cref{lem:bern}, it suffices to show $\EE[e^{\lambda (X_k - \mu_k)} ] \leq \EE[e^{\lambda (X_1 - \mu_k)} ]^k,$ for then 
$$ \EE[e^{\lambda (X_k - \mu_k)}] \leq e^{k \lambda^2 \nu^2/2}.$$
 This we show by induction. Clearly the claim holds for $k = 1$. For $k \geq 2,$ note that $X_k = X_{k - 1} + Y$ where $Y$ is as in the previous lemma. Note that the marginal distribution of $Y$ is that of $X_1$. 

Let $\EE[X_k] = \mu_k$. The function $e^{\lambda (X - \mu_k)}$ is either monotone increasing or decreasing in $X$; by \cref{lem:na}, $X_{k-1}$ and $Y$ are negatively associated and so 
$$\EE[ e^{\lambda (X_k - \mu_k}] =\EE[ e^{\lambda (X_{k -1} - \mu_{k-1})} e^{\lambda (Y - \mu_{1})}] \leq \EE[ e^{\lambda (X_{k -1} - \mu_{k-1})} ] \EE[e^{\lambda (X_{1} - \mu_{1})}].$$
This completes the proof of the inductive hypothesis.\end{proof}

\begin{thm}[Subexponential tail bound, \cite{wainwright2019high}]\label{thm:subexp}
If $Y_i$ are i.i.d $(\nu_i, b)$ subexponential, and $Y = \sum_{i = 1}^n Y_i$, then 
$$\Pr[|Y - \mu| \geq nt] \leq 
\left\{\begin{array}{cc} 
2e^{- n t^2/\nu_*^2} & \textrm{for }0 \leq t \leq b/\nu_*\\
2e^{- n t/2b} & \textrm{for } t > b/\nu_*\\
\end{array}\right.$$
for $\nu_* = \sqrt{\sum_{i = 1}^n \nu_i^2/n}.$

\end{thm}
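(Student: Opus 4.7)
The plan is to apply the classical Chernoff (exponential Markov) bound, which is essentially the only tool needed once the definition of $(\nu, b)$-subexponentiality is in hand; this is precisely the route followed in \cite{wainwright2019high}, so in principle we simply cite the result, but for completeness I sketch the three steps.

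First, by independence of the $Y_i$ and the $(\nu_i, b)$-subexponential hypothesis, for every $|\lambda| \leq 1/b$,
\begin{align*}
\EE[e^{\lambda(Y - \mu)}] = \prod_{i=1}^n \EE[e^{\lambda(Y_i - \mu_i)}] \leq \prod_{i=1}^n e^{\nu_i^2 \lambda^2/2} = e^{n \nu_*^2 \lambda^2/2},
\end{align*}
where the last equality uses $\sum_{i=1}^n \nu_i^2 = n\nu_*^2$. Applying Markov's inequality to $e^{\lambda(Y - \mu)}$ then yields, for every $\lambda \in [0, 1/b]$,
\begin{align*}
\Pr[Y - \mu \geq nt] \leq \exp\!\left(n \nu_*^2 \lambda^2/2 - \lambda n t\right).
\end{align*}

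Next, I would optimize the right-hand side over $\lambda \in [0,1/b]$. The unconstrained minimizer is $\lambda^{\star} = t/\nu_*^2$. In the small-deviation regime where this is feasible, plugging $\lambda^{\star}$ back in gives a sub-Gaussian bound of the form $\exp(-nt^2/\nu_*^2)$ (up to the usual factor $1/2$); in the complementary large-deviation regime one sets $\lambda = 1/b$ and, after dropping the positive $\nu_*^2 \lambda^2/2$ contribution, obtains the sub-exponential bound $\exp(-nt/(2b))$. The transition threshold is determined by the condition $\lambda^{\star} = 1/b$, which, modulo bookkeeping of constants, matches the cutoff stated in the theorem. Finally, repeating the argument with $-\lambda$ (equivalently, applying it to $-Y$) and union-bounding introduces the factor of $2$ in front of both tails.

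The main (and essentially only) obstacle is the careful bookkeeping of absolute constants and of the precise transition threshold so as to match the exact functional form stated. This is a standard and slightly tedious chore but not a conceptual one; since we quote the theorem verbatim from \cite{wainwright2019high}, we need not reproduce it in the paper.
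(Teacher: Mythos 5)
Your Chernoff-bound sketch (multiply the MGF bounds over the independent $Y_i$, apply Markov, optimize $\lambda\in[0,1/b]$ in the two regimes, then union-bound for the two-sided tail) is exactly the standard proof of this result in \cite{wainwright2019high}; the paper itself gives no proof and simply cites that reference, so your approach coincides with it. One small remark: optimizing $\lambda^{\star}=t/\nu_*^2$ against the constraint $\lambda\le 1/b$ gives the transition at $t=\nu_*^2/b$ (the form in Wainwright), so the cutoff $b/\nu_*$ in the statement is best read as a transcription slip rather than something your derivation needs to reproduce literally.
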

By applying the previous theorem with $t =\eps  k/p $, $\nu_i = \nu= O(\sqrt{k}/p)$ and $b = O(1/p)$ we have the following corollary. 
\begin{cor}\label{cor:subexp} Let $Y_1, \dots, Y_n$ be independently distributed with each $Y_i \sim X_k$ for $X_k$ as in \cref{lem:subexp}. Then $Y = Y_1 + \dots + Y_n$ satisfies 
$$ \Pr[ (1 - \eps) nk/p \leq  Y \leq (1 + \eps) nk/p] \leq 2 e^{- \Omega( kn \eps ^2)} $$
for all $\eps \leq c$, where $c$ is a small enough constant.

\end{cor}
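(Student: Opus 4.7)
The plan is to apply the subexponential tail bound \cref{thm:subexp} essentially as a black box, with the subexponential parameters supplied by \cref{lem:subexp}. I assume the intended statement concerns the probability of the \emph{complement} of the event $\{(1-\eps)nk/p \le Y \le (1+\eps)nk/p\}$, so that the desired bound reads as an upper tail estimate.

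First I would set up the parameters. By \cref{lem:subexp}, each $Y_i$ is $(\nu, b)$-subexponential with $\nu = O(\sqrt{k}/p)$ and $b = O(1/p)$, and the mean of $Y_i$ is $\mu = k/p$, so $\EE Y = nk/p$. Taking $t = \eps k/p$ in \cref{thm:subexp}, the event to be bounded is exactly $\{|Y - \EE Y| \ge nt\}$. Because all $Y_i$ have the same $\nu$, we also have $\nu_* = \nu = O(\sqrt{k}/p)$.

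Next I would compute the two regime quantities. The transition point is $b/\nu_* = \Theta(1/\sqrt{k})$, and we have
\begin{align*}
\frac{t^2}{\nu_*^2} &= \Theta\!\left(\frac{(\eps k/p)^2}{k/p^2}\right) = \Theta(\eps^2 k), &
\frac{t}{2b} &= \Theta\!\left(\frac{\eps k/p}{1/p}\right) = \Theta(\eps k).
\end{align*}
In the subgaussian regime $t \le b/\nu_*$ (equivalently $\eps \sqrt{k} = O(1)$), \cref{thm:subexp} yields a bound of $2\exp(-nt^2/\nu_*^2) = 2\exp(-\Omega(n\eps^2 k))$, which is exactly the desired estimate. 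In the complementary subexponential regime $t > b/\nu_*$ (i.e.\ $\eps\sqrt{k} = \Omega(1)$), \cref{thm:subexp} gives $2\exp(-nt/2b) = 2\exp(-\Omega(n\eps k))$, and since $\eps \le c \le 1$ we have $n\eps k \ge n\eps^2 k$, so this bound is even stronger than claimed.

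Since every step is just an appeal to \cref{lem:subexp} and \cref{thm:subexp} with the substitution $t = \eps k/p$, there is no real obstacle: the only thing to verify is that the parameter choice interpolates correctly between the two regimes, and this is immediate from the calculations above. The hypothesis $\eps \le c$ is used only to avoid having to worry about the trivial endpoint case (and to ensure the rate in the subexponential regime is dominated by the subgaussian rate); no additional scale condition on $k$ relative to $p$ is required.
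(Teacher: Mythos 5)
Your proposal is correct and is exactly the paper's argument: the paper proves the corollary in one line by applying \cref{thm:subexp} with $t=\eps k/p$, $\nu_i=O(\sqrt{k}/p)$, $b=O(1/p)$, and you simply spell out the two regimes (and correctly read the statement as a bound on the complement event, fixing the paper's typo). The only tiny blemish is your parenthetical characterization of the regime boundary as $\eps\sqrt{k}=O(1)$ (with the paper's threshold $b/\nu_*$ it is $\eps k^{3/2}=O(p)$), but this is harmless since you verify that both regimes yield a bound of at least $\exp(-\Omega(nk\eps^2))$.
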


Finally, we use a standard result in random matrix theory. If $V$ is a Haar-random element of $\mathbb{S}^{p-1}$, then the distribution $\sqrt{p} V$ is in isotropic position and is $O(1)$ subgaussian \cite{marchal2017sub}, so the following theorem follows from Theorem 5.39 of  (as rewritten in Equation 5.25) of \cite{vershynin2010introduction}.
\begin{thm}[\cite{vershynin2010introduction}]\label{thm:vershinyn} Let $v_1,\dots, v_n$ independent, Haar random unit vectors. There are absolute constants $C, c>0$ such that for all $C\sqrt{p/n}\leq c\delta$, $\delta \leq 1$ we have
$$\left\|\frac{p}{n} \sum_{i = 1}^n v_i v_i^\dagger - I_p \right\|_{op} \leq \delta.$$
with probability at least 
$$1 - 2\exp\left( - (c\sqrt{n} \delta - C\sqrt{p})^2\right).$$
\end{thm}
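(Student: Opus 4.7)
The plan is to reduce the statement to a direct invocation of the standard concentration inequality for sample covariance of isotropic sub-gaussian vectors, as the authors indicate in the sentence immediately preceding the statement. Let $w_i := \sqrt{p}\,v_i$, so that the desired inequality becomes $\|\frac{1}{n}\sum_{i=1}^n w_i w_i^\dagger - I_p\|_{op} \leq \delta$. This is precisely the setup of Theorem 5.39 of \cite{vershynin2010introduction} once we verify (a) that the $w_i$ are isotropic and (b) that they are sub-gaussian with an absolute constant.

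For isotropy, I would observe that by rotational invariance of the Haar measure on $\mathbb{S}^{p-1}$, the matrix $\mathbb{E}[v v^\dagger]$ commutes with every element of $\mathrm{O}(p)$ and is therefore a multiple of $I_p$; the normalization $\mathbb{E}\|v\|^2 = 1$ forces $\mathbb{E}[v v^\dagger] = I_p/p$, so $\mathbb{E}[w w^\dagger] = I_p$. For the sub-gaussian estimate, for any unit $u \in \mathbb{R}^p$ the projection $\langle u, w\rangle = \sqrt{p}\,\langle u, v\rangle$ has, by rotational symmetry, the same law as $\sqrt{p}\,v_1$, which by \cite{marchal2017sub} is sub-gaussian with an absolute constant. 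Thus $\|w\|_{\psi_2} = O(1)$ independent of $p$.

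Having established these two properties, I would invoke Vershynin's concentration bound in the form of Equation~5.25 of \cite{vershynin2010introduction}, which states that for i.i.d.\ isotropic sub-gaussian vectors with absolute sub-gaussian constant, and for any $t \geq 0$,
$$\Bigl\|\tfrac{1}{n}\sum_{i=1}^n w_i w_i^\dagger - I_p\Bigr\|_{op} \leq \max\{\eta, \eta^2\}, \qquad \eta := C\sqrt{p/n} + t/\sqrt{n},$$
with probability at least $1 - 2\exp(-c t^2)$. Setting $\eta = \delta$ gives $t = c'(\sqrt{n}\,\delta - C\sqrt{p})$, which is nonnegative under the hypothesis $C\sqrt{p/n} \leq c\delta$; the assumption $\delta \leq 1$ collapses $\max\{\delta,\delta^2\}$ to $\delta$. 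Substituting and absorbing constants produces the announced probability $1 - 2\exp(-(c\sqrt{n}\,\delta - C\sqrt{p})^2)$.

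The only ingredient with nontrivial content is the sub-gaussianity of one-dimensional marginals of a uniform unit vector with an absolute constant; this is exactly the content of \cite{marchal2017sub} and can alternatively be derived from Levy's concentration of measure on the sphere. Everything else is a direct quotation of Vershynin's covariance-estimation theorem, so there is no real obstacle beyond carefully matching constants between the two formulations.
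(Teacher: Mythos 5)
Your proposal is correct and matches the paper's own justification, which consists precisely of the sentence preceding the statement: rescale to $\sqrt{p}\,v_i$, note isotropy and $O(1)$ sub-gaussianity via \cite{marchal2017sub}, and invoke Theorem 5.39 (Equation 5.25) of \cite{vershynin2010introduction}. You have simply spelled out the same reduction in more detail, including the correct substitution $t \asymp \sqrt{n}\,\delta - C\sqrt{p}$ and the use of $\delta \leq 1$ to handle the $\max\{\delta,\delta^2\}$ term.
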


\subsection{The Cheeger constant of random unit vectors}\label{subsec:random-cheeger}
The purpose of this section is to prove the following lemma, which in conjunction with \cref{cor:cheeger-expander} implies \cref{thm:random-expander}.

 \begin{lem}\label{lem:random-cheeger}For $\lambda$ at most some absolute constant, 
if $v_1, \dots, v_n$ are sampled i.i.d. from $\Ess^{p-1}$, then $\ch(\vec v) \geq \lambda$ with probability at least $1 - e^{ - \Omega(n) + O( p \log p)}$. 
\end{lem}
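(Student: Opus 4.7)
The plan is to apply a union bound over pairs $(S,\pi)$ with $|S|=s$ and $\rk\pi=k\leq p/2$. By \cref{lem:operator-cheeger}, it suffices to lower bound $\phi(S,\pi)$ for every such pair. First, direct computation with $\alpha=k/p$, $\beta=s/n$ gives
\[
\EE\cut(S,\pi)=n(\alpha+\beta-2\alpha\beta),\quad \EE\vol(S,\pi)=n(\alpha+\beta),\quad \EE\vol(\overline S,I-\pi)=2n-\EE\vol(S,\pi),
\]
and a short case analysis using $\alpha\leq 1/2$ shows the expected conductance $(\alpha+\beta-2\alpha\beta)/\min(\alpha+\beta,\,2-\alpha-\beta)$ is at least $1/4$ in all regimes.

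For a fixed $(S,\pi)$, I would write $\cut(S,\pi)-\EE\cut(S,\pi)=\sum_i c_i(\|\pi v_i\|^2-k/p)$ where $c_i\in\{\pm 1\}$ tracks membership in $S$, and analogous identities for the two volumes. Each $\|\pi v_i\|^2\sim\bbeta(k/2,(p-k)/2)$, so a straightforward generalization of \cref{lem:bern} and \cref{lem:subexp} using the negative association of \cref{lem:na} shows these variables are $(\nu,b)$-subexponential with $\nu=O(\sqrt{k}/p)$ and $b=O(1/p)$. Applying \cref{thm:subexp} with deviation $\eta n$ for $\eta$ a small absolute constant, the relevant minimum in the subexponential tail is $\Omega(\eta np)$ because $k\leq p/2$, so the cut and both volumes all stay within $\eta n$ of their expectations with probability at least $1-e^{-\Omega(np)}$.

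To convert this concentration into a conductance bound I would split into two regimes. If $\alpha+\beta$ is bounded below by a small absolute constant, then $\EE\cut$ and $\EE\min\{\vol(S,\pi),\vol(\overline S,I-\pi)\}$ are both $\Theta(n)$, so an additive error $\eta n$ translates into a small multiplicative error and yields $\phi(S,\pi)\geq\lambda$ for a constant $\lambda$. When $\alpha+\beta$ is small, the same subexponential tail bound is sharper (the relevant variance scales with the subset size), and a careful accounting shows deviations of order $\eta(\alpha+\beta)n$ with failure probability at most $e^{-\Omega(\min(nk,sp))}$, again giving $\phi(S,\pi)\geq\lambda$. This case analysis is the main technical bookkeeping.

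For the union bound, there are $\binom{n}{s}$ choices of $S$ of size $s$ and, for each rank $k$, a constant-granularity net on the Grassmannian of $k$-planes of size $e^{O(kp)}$ suffices; Lipschitzness of $\cut$ and $\vol$ in $\pi$ with constant proportional to $\|N\|_{op}=O(n/p)$ follows from \cref{thm:vershinyn}. Summing the per-pair tail $e^{-\Omega(np)}$ against $\binom{n}{s}\cdot e^{O(kp)}$ over $(s,k)$ and using the sharper tail in the extreme regime absorbs the combinatorial and net factors and collapses to a total failure probability of $\exp(-\Omega(n)+O(p\log p))$ for $n\gtrsim p\log p$. The main obstacle is precisely this extreme regime $\alpha+\beta\ll 1$, where the naive additive concentration $\eta n$ is too weak and one must exploit the finer tail structure in \cref{thm:subexp}.
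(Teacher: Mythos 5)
Your overall architecture matches the paper's: reduce to lower-bounding $\phi(S,\pi)$ via \cref{lem:operator-cheeger}, prove concentration of the cut and volumes using the Beta/subexponential machinery (\cref{lem:bern,lem:na,lem:subexp,thm:subexp}), and union bound over subsets $S$ and a net of rank-$k$ projections. Your net bookkeeping (constant-granularity net of size $e^{O(kp)}$, Lipschitzness through $\|\sum_i v_iv_i^\dagger\|_{op}=O(n/p)$) is a reasonable, even slightly cleaner, variant of the paper's $c\lambda/p$-net of size $e^{O(pk\log p)}$ combined with \cref{lem:net-suffices}.

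However, there is a genuine gap exactly where you flagged "the main technical bookkeeping": the sparse regime. Your per-pair failure probability $e^{-\Omega(\min(nk,\,sp))}$ comes from demanding that \emph{both} halves of the cut (the $S$-side sum $\sum_{i\in S}\|(I_p-\pi)v_i\|^2$ and the complement-side sum $\sum_{i\notin S}\|\pi v_i\|^2$) concentrate, and it is too weak to close the union bound at the claimed rate. Concretely, for $s=1,k=1$ your exponent is only $\Omega(\min(n,p))=\Omega(p)$, while there are $n$ choices of $S$; unioning gives at best something like $e^{-\Omega(p)+\log n}$, nowhere near the stated $e^{-\Omega(n)+O(p\log p)}$ when $n\gg p$ (and the union over $\binom{n}{s}\approx e^{s\log(n/s)}$ sets does not even obviously close when $p\lesssim\log(n/s)$). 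The fix, which is what the paper's \cref{lem:fixed-proj} does, is to lower bound the cut by whichever \emph{single} side is actually needed relative to the denominator $\approx s+nk/p$: when $s\lesssim nk/p$ the complement-side sum alone gives $\cut\gtrsim nk/p$ with failure $e^{-\Omega(nk)}$ \emph{independent of $s$}, and when $s\gtrsim nk/p$ one uses the $S$-side sum with failure $e^{-\Omega(sp)}\leq e^{-\Omega(nk)}$ since then $sp\gtrsim nk$. Thus the correct per-pair exponent in the sparse regime is $\Omega(nk)$, not $\Omega(\min(nk,sp))$; combined with restricting the sparse case to $|S|\leq\alpha n$ so the subset entropy $\binom{n}{\leq\alpha n}\leq e^{(\ln 2)H(\alpha)n}$ is absorbed by $e^{-\Omega(kn)}$ for $\alpha$ a small constant (the dense case $|S|\geq\alpha n$ being handled as you describe, against all $2^n$ sets), this yields the per-$k$ failure $e^{-\Omega(nk)+O(pk\log p)}$ and hence the claimed $e^{-\Omega(n)+O(p\log p)}$ after summing over $k\leq p/2$. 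Without this repair, your argument proves a statement with failure probability decaying only like $e^{-\Omega(p)}$ rather than $e^{-\Omega(n)}$, which would in turn weaken \cref{thm:random-expander} and the sample-complexity theorems downstream.
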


The proof, which is merely a more involved version of the proof of \cref{thm:vershinyn} in \cite{vershynin2010introduction}, combines concentration of the random variable $\phi(S, \pi)$ for fixed $S, \pi$ with a union bound over $2^{[n]} \times N$ where $N$ is an $\eps$-net for the projections. We proceed to prove concentration.

Before showing the detailed proof of the lemma, we use the statement to complete the proof of \cref{thm:random-expander}.
\begin{proof}[Proof of \cref{thm:random-expander}] By \cref{thm:vershinyn} applied to the distribution $\sqrt p V$ where $V$ is a Haar random unit vector, the operator $\Phi_{\vec v}$ is $\eps$-doubly balanced with probability $1 - 2e^{ - (c \sqrt{n} \eps - C\sqrt{p})^2}$. By \cref{lem:random-cheeger}, for some small enough absolute constant $\lambda_0>0$ the quantity $\ch(\vec v)$ is at least $\lambda_0$ with probability at least $1 - O(e^{ - cn + C p \log p})$. Both events occur with probability $1 - O(\exp( - \min \{(c \sqrt n \eps - C \sqrt{p})^2, cn - C p \log p\}))$. By \cref{cor:cheeger-expander}, there is an absolute constant $c_0$ such that if $\eps < c_0\lambda_0^2$ then $\Phi_{\vec v}$ is a $(\eps, 1- \lambda)$-quantum expander with probability $1 - O(\exp( - \min \{(c \sqrt n \eps - C \sqrt{p})^2, cn - C p \log p\}))$ for $\lambda > 0$ another absolute constant.
\end{proof}

\begin{lem}\label{lem:fixed-proj} Suppose $\pi$ is of rank $k \leq p/2$. Then with probability at least $1 - e^{-\Omega( kn)}$, 
\begin{align}
\min_{S \subset [n]} \Phi(S, \pi) &\geq c\label{eq:bot-small}\\
\textrm{and } .5 kn/p \leq \sum_{i =1}^n \|\pi' v_i\|^2 &\leq n -.5 kn/p\label{eq:bot-big} 
\end{align}
Here $c$ is some small enough constant.
\end{lem}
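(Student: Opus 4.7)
The lemma has two claims. The second---concentration of $A := \sum_i \|\pi v_i\|^2$ around $nk/p$---follows directly from \cref{cor:subexp}. For a rank-$k$ projection $\pi$ and Haar-uniform $v_i \in \mathbb{S}^{p-1}$, each $\|\pi v_i\|^2$ has the distribution of $X_k$ in the notation of \cref{lem:subexp}. Applying \cref{cor:subexp} with a small constant $\eps$ confines $A$ to $[(1-\eps)nk/p,\,(1+\eps)nk/p]$ with failure probability $e^{-\Omega(kn)}$; the hypothesis $k \leq p/2$ then gives $(1+\eps)nk/p \leq n - nk/(2p)$ for $\eps \leq 1/2$, yielding the second claim.

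For the first claim, write $a_i = \|\pi v_i\|^2$, $Q_S = \sum_{i \in S}a_i$, and $Q_S' = A - Q_S$; the definitions yield $\cut(S,\pi) = |S|+A-2Q_S$, $\vol(S,\pi) = |S|+A$, and $\vol(\bar S,I-\pi) = 2n-|S|-A$. Fix a small constant $c > 0$, and define the slack
\[
F(S) := \cut(S,\pi) - c\min\bigl(\vol(S,\pi),\,\vol(\bar S,I-\pi)\bigr),
\]
so that $\phi(S,\pi) \geq c$ iff $F(S) \geq 0$. Splitting on which volume is smaller, $F(S)$ is an affine combination of the independent sums $Q_S, Q_S'$. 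In Case 1 ($|S|+A \leq n$) a direct computation gives
\[
\EE[F(S)] = (1-c)\tfrac{nk}{p} + |S|\bigl(1 - c - \tfrac{2k}{p}\bigr),
\]
and an elementary minimization over $|S|\in[0,n-A]$ and $k \in [1,p/2]$ shows $\EE[F(S)] \geq \Omega(nk/p)$ in general, and the sharper $\EE[F(S)] \geq \Omega(nk/p + |S|)$ whenever $k \leq (1-c)p/2$; Case 2 is handled by the symmetric formula.

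By the subexponential parameters of \cref{lem:subexp}, $F(S)-\EE[F(S)]$ is $(O(\sqrt{nk}/p),\,O(1/p))$-subexponential. Since $\EE[F(S)] \geq \Omega(nk/p)$ lies above the Bernstein crossover $\nu^2/b = O(nk/p)$, the exponential branch of \cref{thm:subexp} gives the per-set bound $\Pr[F(S) < 0] \leq \exp(-\Omega(\EE[F(S)]\cdot p))$. For $k \leq (1-c)p/2$ this refines to $\exp(-\Omega(nk + |S|p))$, and union-bounding with $\binom{n}{|S|} \leq e^{|S|\log(en/|S|)}$ yields total failure probability
\[
\sum_{s=0}^{n} e^{s\log(en/s) - \Omega(nk + sp)} \leq e^{-\Omega(nk)} \sum_{s \geq 0} e^{-\Omega(sp)+s\log(en/s)} = O(1)\cdot e^{-\Omega(nk)},
\]
the tail sum being $O(1)$ once $p$ exceeds a small absolute constant so that $sp$ dominates $s\log(en/s)$. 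For $k \in ((1-c)p/2,\,p/2]$, $\EE[F(S)] \geq \Omega(n)$ uniformly in $|S|$, so the per-set bound $e^{-\Omega(np)}$ beats the crude $2^n$ union bound and gives overall failure $\leq e^{-\Omega(np)} \leq e^{-\Omega(nk)}$.

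The chief obstacle is closing the union bound in the small-$k$ regime: the worst-case per-set bound is only $e^{-\Omega(nk)}$, which barely beats the $2^n$ entropy cost of ranging over subsets $S$. The resolution is to exploit the $|S|$-dependent growth of $\EE[F(S)]$, which supplies the additional exponential decay $e^{-\Omega(|S|p)}$ needed to absorb $\binom{n}{|S|}$ set by set.
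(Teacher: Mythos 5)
Your proof is essentially correct but follows a genuinely different route from the paper's. The paper splits on the size of $S$: for $|S| \geq \alpha n$ it uses the crude bound $\phi(S,\pi) \geq Y/n$ (the cut is at least the $(I_p-\pi)$-mass of $S$, the denominator at most $n$) with per-set failure $e^{-\Omega(np)}$ that beats the $2^n$ union bound, and for $|S| \leq \alpha n$ it establishes two events (an upper bound on $\vol(S,\pi)$ and a lower bound on $\cut(S,\pi)$, each via \cref{cor:subexp} with failure $e^{-\Omega(nk)}$) and beats the entropy $\binom{n}{\leq \alpha n}$ by taking $\alpha$ small. You instead linearize: writing $\cut$, $\vol$ exactly in terms of $Q_S$ and $A$, you reduce $\phi(S,\pi)\geq c$ to nonnegativity of a single affine statistic, compute its mean in closed form, apply the subexponential/Bernstein machinery of \cref{lem:subexp,thm:subexp} once, and crucially exploit that the mean grows linearly in $|S|$ (for $k$ bounded away from $p/2$) so that the per-set bound $e^{-\Omega(nk+|S|p)}$ absorbs $\binom{n}{|S|}$ term by term; the regime $k \approx p/2$ is handled by the uniform $e^{-\Omega(np)}$ bound against $2^n$, as in the paper's large-$|S|$ case. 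This is a cleaner, more unified argument that dispenses with the paper's case analysis on $|S|$ and the separate volume/cut events; the paper's version avoids the affine bookkeeping and the $|S|$-dependent mean computation.

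Two points need repair, both local. First, your closing justification of the union bound is wrong as stated: for $p$ a constant and small $s$, $sp$ does \emph{not} dominate $s\log(en/s)\approx s\log n$, and the tail sum $\sum_s \binom{n}{s}e^{-\Omega(sp)} = (1+e^{-\Omega(p)})^n = e^{n e^{-\Omega(p)}}$ is not $O(1)$; it can be superpolynomial in $n$. The conclusion still holds because $k\geq 1$ lets the prefactor absorb it: $n e^{-\Omega(p)} \leq \tfrac12\,\Omega(nk)$ once $p$ exceeds an absolute constant, giving total failure $e^{-\Omega(nk)}$ (the paper's own step $2^n e^{-\Omega(pn)} \leq e^{-\Omega(pn)}$ makes the same implicit assumption on $p$), so replace the "tail sum is $O(1)$" sentence with this computation. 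Second, two bookkeeping issues: you should not condition on the data-dependent event $|S|+A\leq n$ when computing the mean; instead note $F(S) \geq G_1(S) := \cut(S,\pi) - c\,\vol(S,\pi)$ always, so $\Pr[F(S)<0]\leq\Pr[G_1(S)<0]$, and the displayed formula is exactly $\EE[G_1(S)]$, which one checks is $\Omega(nk/p)$ uniformly over $|S|\leq n$, $k\leq p/2$ for $c<1/3$ (so the "symmetric Case 2 formula" is never needed). Also, at the threshold $k=(1-c)p/2$ the coefficient $1-c-2k/p$ vanishes, so the claimed $\Omega(nk/p+|S|)$ lower bound fails at the boundary; split instead at, say, $k\leq(1-2c)p/2$, which changes nothing else.
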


\begin{proof} We prove that the failure probability of \cref{eq:bot-big,eq:bot-small} are individually $O(e^{nkp})$; the claim will then follow by the union bound. We first bound the failure probability of \cref{eq:bot-big}. Note that $\sum_{i =1}^n \|\pi v_i\|^2$ is distributed as $W  = \sum_{i = 1}^n W_i$ for $W_i \sim X_k$. By \cref{cor:subexp}, we have $\Pr[ |W  - nk/p| \geq .5 nk/p] = O(e^{-nkp})$. Hence $\min\{W, n - W\} \geq .5 nk/p$ with probability at least $1-e^{-\Omega(nk)}$

We now bound the failure probabilty of \cref{eq:bot-small}. For fixed $S, \pi$, observe that $\phi$ is distributed as $$
\phi(S, \pi) \sim \frac{Y + Z}{\min\{\ell + W, 2n - \ell - W\}}$$
 where $Y = \sum_{i = 1}^\ell {Y_i}$ for $Y\sim X_{p - k}$ and $Z = \sum_{i = 1}^{n - \ell} Z_i$ for $Z_i \sim X_k$ with $Y$ and $Z$ independent, and $W  = \sum_{i = 1}^n W_i$ for $W_i \sim X_k$. 
 
Let $\alpha$ be a constant that we will make small compared to $c$.\\

\textbf{Case 1:} $\ell \geq \alpha n$.\\
Here we use the bound $\phi(S, \pi) \geq Y/n$. By \cref{cor:subexp}, with probability at least $1 -O(e^{- \Omega( \ell (p-k)}) = 1- O(e^{- \Omega(n p)})$, $Y$ is at least $.5\ell (p - k) /p \geq .25 \alpha n$. The number of $S$ is at most $2^n$, so with probability at least $ 1 - O(2^n e^{ - \Omega(p n)}) = 1 - O(e^{ - \Omega(pn)})$ there exists no $S$ with $\ell \geq \alpha n$ such that $\phi(S, \pi) \leq .25 \alpha$. \\

\textbf{Case 2:} $\ell \leq \alpha n.$\\
We claim two events $A$ and $B$ hold simultaneously with probability at least $1 - O(e^{-\Omega( k n)})$. Let $A$ be the event that $W + \ell \leq \ell + 1.5 nk/p = O(\max\{\ell, nk/p\})$. Let $B$ be the event that $ Y + Z$ is at least $\max\{.25 \ell, .5 nk/p\} = \Omega(\max\{\ell, nk/p\}).$ By \cref{cor:subexp} $A$ holds with probability at least $1 - O(e^{- \Omega(nk})$. We now bound the failure probability of $B$. If $.25 \ell \geq .5 nk/p$, then $Y$ is at least $.25 \ell $ with probability $1 - O(e^{-\Omega( l p)}) = 1 - O(e^{- \Omega( nk)})$, and $Z \geq .5 nk/p$ with probability $1 - O(e^{- \Omega( nk)})$. This shows that $B$ holds with probability $1 - O(e^{- \Omega(nk)})$, and by the union bound $A$ and $B$ hold simultaneously with probability $1 - O(e^{- \Omega(nk)})$. 

Condition on $A$ and $B$. For $\alpha \leq 1/8$, $A$ implies that $W \leq n - \ell$ and so $\min\{\ell + W, 2n - \ell - W\} = \ell + W$. Now $A$ and $B$ imply 
$$\phi(S, \pi) = \Omega\left( \frac{\max\{\ell, nk/p\}}{\max\{\ell, nk/p\}}\right) = \Omega(1).$$

Now, $A$ and $B$ fail for some $S$ with $\ell \leq \alpha n$ with probability on the order of
$$ e^{- ckn} \binom{n}{\leq \alpha n} \leq e^{ - \Omega(kn) + (\ln 2) H(\alpha) n} \leq e^{ - \Omega(kn)}$$
provided $H(\alpha)$, the binary entropy of $\alpha$, is a small enough constant. Hence, with probability $1 - e^{- \Omega(kn)}$, $\Phi(S, \pi) \geq c$ for all $S$ with $\ell \leq \alpha n$. \end{proof}

We now prove lemmas allowing us to construct and apply $\delta$-nets for the set of rank-$k$ orthogonal projections. Recall that a $\delta$-net for the projections of rank $k$ in the operator norm is a subset $N$ of projections such that for all projections $\pi:\RR^d \to \RR^d$ of rank $k$ there exists $\pi' \in N$ such that $\| \pi - \pi\|_{op} \leq \delta$.
\begin{lem}\label{lem:net-suffices} Suppose that a subset $N$ of the projections of rank $k$ is a $\delta$-net, and that for all $\pi' \in N$ we have
\begin{align}
 \min_{S \subset [n]} \phi(S, \pi') &\geq \beta.\label{eq:bot-small-1}\\
\textrm{and }.5 kn/p \leq \sum_{i =1}^n \|\pi' v_i\|^2 &\leq n -.5 kn/p \label{eq:bot-big-1} 
\end{align}
 Then for $\delta \leq \frac{c k }{p}$, we have $\min_{S \subset [n]}\phi(S, \pi) \geq \beta - O(p \delta/k)$ for all projections $\pi$ of rank $k$. Here $c$ is some small enough constant.
\end{lem}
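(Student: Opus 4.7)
The plan is to fix an arbitrary rank-$k$ projection $\pi$, pick $\pi' \in N$ with $\|\pi - \pi'\|_{op} \leq \delta$, and show that passing from $\pi'$ to $\pi$ perturbs both the numerator $\cut(S,\pi)$ and the denominator $\vol_{\min}(S,\pi) := \min\{\vol(S,\pi),\vol(\overline{S}, I_p - \pi)\}$ by at most $n\delta$ in absolute value. Since \eqref{eq:bot-big-1} will give $\vol_{\min}(S, \pi') \geq .5kn/p$, which dominates $n\delta$ provided $\delta \leq ck/p$, this additive perturbation translates into a multiplicative perturbation of $\phi$ of order $p\delta/k$.

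The key estimate is that because $\pi, \pi'$ are projections and $v_i$ is a unit vector,
$$\bigl|\|\pi v_i\|^2 - \|\pi' v_i\|^2\bigr| = \bigl|\langle v_i, (\pi - \pi') v_i \rangle\bigr| \leq \|\pi - \pi'\|_{op}\|v_i\|^2 \leq \delta,$$
and since $\|(I_p - \pi) v_i\|^2 = 1 - \|\pi v_i\|^2$ the same bound holds with $\pi$ replaced by $I_p - \pi$. Summing over $i$ or over $S$ or $\overline S$ yields $|\cut(S,\pi) - \cut(S,\pi')| \leq n\delta$, $|\vol(S,\pi) - \vol(S,\pi')| \leq n\delta$, and similarly for $\vol(\overline S, I_p - \pi)$, so $|\vol_{\min}(S,\pi) - \vol_{\min}(S,\pi')| \leq n\delta$ as well.

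Next I would use \eqref{eq:bot-big-1} to note that for every $S$, $\vol(S,\pi') \geq \sum_i \|\pi' v_i\|^2 \geq .5kn/p$ and $\vol(\overline S, I_p - \pi') \geq \sum_i \|(I_p - \pi') v_i\|^2 = n - \sum_i \|\pi' v_i\|^2 \geq .5kn/p$, so $\vol_{\min}(S,\pi') \geq .5kn/p$. Combining with $\phi(S,\pi') \geq \beta$ and the perturbation bounds,
$$\phi(S,\pi) \geq \frac{\cut(S,\pi') - n\delta}{\vol_{\min}(S,\pi') + n\delta} \geq \beta - \frac{(\beta + 1) n\delta}{\vol_{\min}(S,\pi')} \geq \beta - \frac{4 p \delta}{k} = \beta - O(p\delta/k),$$
where $\beta \leq 1$ (always true for $\phi$) and $\delta \leq ck/p$ with $c$ small enough ensure the manipulation is valid; if instead the numerator $\cut(S,\pi') - n\delta$ is negative then the trivial bound $\phi(S,\pi) \geq 0$ already beats $\beta - O(p\delta/k)$ in that regime.

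There is no serious obstacle — the calculation rests only on the Lipschitzness of $\pi \mapsto \|\pi v_i\|^2$ in operator norm on unit vectors. The one point requiring care is to use both halves of the two-sided sandwich in \eqref{eq:bot-big-1} to lower-bound $\vol_{\min}(S,\pi')$ by $.5kn/p$ uniformly in $S$, rather than settling for the weaker $|S|$-dependent bound that follows from the definition alone.
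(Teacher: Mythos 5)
Your proposal is correct and follows essentially the same route as the paper: both use the operator-norm Lipschitz bound $\bigl|\|\pi v_i\|^2 - \|\pi' v_i\|^2\bigr| \leq \delta$ to perturb cut and volume by at most $n\delta$, and both use the two-sided condition \eqref{eq:bot-big-1} to lower-bound the minimum volume by $.5kn/p$ uniformly in $S$, so the additive error becomes $O(p\delta/k)$. The only difference is bookkeeping (the paper converts the perturbations into multiplicative factors of $\cut$ and $\vol$ before dividing, while you keep them additive), which does not change the argument.
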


\begin{proof}
Consider a rank $k$-projection $\pi$, and let $\pi \in N$ such that $\|\pi - \pi'\|_{op} \leq \delta$. Fix $S \subset [n]$.

Observe that 
$$\min\{\vol(S, \pi'), \vol(\overline{S}, I_p - \pi')\} \geq \min\{\sum_{i =1}^n \|\pi' v_i\|^2, n - \sum_{i =1}^n \|\pi' v_i\|^2\} \geq .5 kn/p.$$ Because $\phi(S, \pi')\geq \beta,$ we have $$\cut(S, \pi')  \geq .5\beta kn/p.$$
Because $\|\pi v_i\|^2 = \langle v_i, \pi v_i \rangle$ is linear in $\pi$, we have that $\|\pi'v_i\|^2 - \|\pi v_i\|^2 \leq \delta$ and similarly $\|(I_p - \pi')v_i\|^2 - \|(I_p - \pi)v_i\|^2\| \leq \delta$. We may then write
\begin{align*}
\cut(S, \pi) &\geq  \sum_{i \in S} \|(I_p - \pi') v_i \|^2 + \sum_{i \not\in S} \|\pi' v_i \|^2  - n\delta \\
&\geq \left(1 - \delta \frac{p}{.5 \beta k}\right)\cut(S, \pi').
\end{align*}
On the other hand, 
\begin{align*}
\vol(S, \pi) &= \sum_{i \in n} \|\pi v_i \|^2 + \ell \\
&\leq \sum_{i \in n} \|\pi' v_i \|^2 + \ell + n\delta\\
&\leq \vol(S, \pi') \left(1 + \delta \frac{p}{.5 k}\right)
\end{align*}
and similarly $
\vol(\overline{S}, I_p - \pi) \leq \vol(\overline{S}, I_p - \pi') \left(1 + \delta \frac{p}{.5 k}\right)$. Thus, 
$$ \phi(S, \pi) \geq \phi(S, \pi') \frac{1 - \delta \frac{p}{.5\beta k}}{1 + \delta \frac{p}{.5 k}} \geq \beta  - O(p \delta/k). \qedhere$$
\end{proof}

Following a standard method to prove the existence of $\delta$-nets, we consider a maximal code $N$ of operator norm distance $\delta/2$ in the set $X$ of rank $k$ projections. By the triangle inequality, there can be no point at distance at least $\delta$ from every point of $N$, else it could be added to the packing. We then use the Hamming bound to bound $|N|$.
\begin{lem}\label{lem:net-exists} There is an operator norm $\delta$-net $N$ of the rank $k$ orthogonal projections with 
$$|N| = \exp(O(pk |\ln \delta|)).$$
\end{lem}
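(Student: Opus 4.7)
The plan is to reduce the problem of netting the space of rank-$k$ orthogonal projections in the operator norm to that of netting the Stiefel manifold of $p \times k$ matrices with orthonormal columns. Every rank-$k$ projection $\pi$ has the form $\pi = UU^\dagger$ for some $U \in \Mat(p,k)$ satisfying $U^\dagger U = I_k$, and any two such matrices $U,V$ obey the $2$-Lipschitz bound
\begin{align*}
\|UU^\dagger - VV^\dagger\|_{op} \leq \|U\|_{op}\|U-V\|_{op} + \|U-V\|_{op}\|V\|_{op} = 2\|U - V\|_{op},
\end{align*}
since $\|U\|_{op} = \|V\|_{op} = 1$ on the Stiefel manifold. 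Consequently, pushing a $(\delta/2)$-net for the Stiefel manifold in operator norm forward under $U \mapsto UU^\dagger$ yields a $\delta$-net for the rank-$k$ projections of no larger cardinality, so it suffices to bound the covering number of the Stiefel manifold.

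Following the hint in the excerpt, take a maximal $(\delta/2)$-separated packing $N$ of the Stiefel manifold in the operator norm. By the standard packing-versus-covering equivalence, $N$ is automatically a $(\delta/2)$-net. The open operator-norm balls of radius $\delta/4$ around the elements of $N$ are pairwise disjoint, and since each element of $N$ has operator norm $1$, each such ball is contained in the operator-norm ball of radius $1 + \delta/4$ in $\Mat(p,k)$. Regarding $\Mat(p,k) \cong \RR^{pk}$ as a $pk$-dimensional real vector space and using that the operator-norm unit ball of $\Mat(p,k)$ is a bounded convex body (so its dilation by $r$ scales Lebesgue volume by $r^{pk}$), the Hamming-ball comparison yields
\begin{align*}
|N| \leq \left( \frac{1 + \delta/4}{\delta/4}\right)^{pk} = \left(1 + \frac{4}{\delta}\right)^{pk} = \exp\bigl(O(pk|\ln\delta|)\bigr),
\end{align*}
for $\delta$ bounded above by a constant (outside of which the claim is trivial, since a single projection is then a valid net).

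The argument is essentially a volumetric packing bound and presents no serious obstacle; the main conceptual point is the choice of parametrization. Netting rank-$k$ projections directly as Hermitian $p \times p$ matrices would produce an exponent of $\Theta(p^2)$, while parametrizing through Stiefel matrices and invoking the $2$-Lipschitz property correctly exploits the fact that the Grassmannian has dimension only $k(p-k) \leq pk$. The only step requiring any care is book-keeping the constants in the reduction Stiefel $\to$ projections and in the packing-to-covering conversion so that the exponent comes out as $O(pk|\ln\delta|)$ rather than something larger.
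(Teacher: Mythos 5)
Your proof is correct and follows essentially the same route as the paper: parametrize rank-$k$ projections as $UU^\dagger$ with $U\in\Mat(p,k)$, $U^\dagger U = I_k$, use the $2$-Lipschitz bound for $U\mapsto UU^\dagger$, and control the net size by a volumetric (Hamming-type) packing argument in the spectral unit ball of $\Mat(p,k)$, giving $\exp(O(pk|\ln\delta|))$. The only difference is cosmetic: by packing the Stiefel manifold itself rather than the full spectral unit ball, your pushforward net already consists of genuine projections, so you can skip the paper's extra step of converting an external net into an internal one by intersecting balls with the set of rank-$k$ projections.
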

\begin{proof} 
The rank $k$-projections in $\RR^{p}$ are of the form $X X^T$ where $X$ is a $d \times r$ matrix with $X^TX = 1$. 
In particular, $X$ has spectral norm $1$. Moreover, the image of a $\delta$ net $N$ for the $d \times r$ matrices of spectral norm $1$ under the map $X \to XX^T$ is a $2\delta$ net $N'$ for the rank $k$ positive-semidefinite matrices of spectral norm $1$ (which contains the set of rank $k$ projections). 
This is because for two such positive semidefinite matrices $XX^T$ and $YY^T$, $\|XX^T - YY^T\|_{op} \leq \|X(X^T - Y^T)\|_{op} + \|(X - Y)Y^T\|_{op} \leq 2 \|X - Y\|_{op}$. $N'$ can be used to obtain a $4\delta$ net $N''$ for the rank $k$ projections by taking one point from each nonempty intersection of a $2\delta$ ball about some element of $N'$ with the rank $k$ projections.  
A maximal code $N$ of distance $\delta/4$ for the spectral unit ball of $d\times r$ matrices is of order $O( (\delta/8)^{rd})$ by the Hamming bound. 
By the discussion preceding the statement of the theorem, $N$ is a $\delta/4$ net for the spectral unit ball of $d\times r$ matrices, and hence the aforementioned net $N''$ is a $\delta$ net for the rank $k$ projections.  
\end{proof}

Finally we prove the main result of the section.
\begin{proof}[Proof of \cref{lem:random-cheeger}] 
Fix $k \leq p/2$. By \cref{lem:net-exists}, there is a $c\lambda/p$ net $N$ with $|N| = \exp(O( pk \log p))$ for absolute constants $c, \lambda$. By \cref{lem:fixed-proj} and the union bound, the conditions \cref{eq:bot-small-1,eq:bot-big-1} in \cref{lem:net-suffices} hold with $\beta = 2\lambda $ for all $\pi \in N$ with probability $1 - e^{ - \Omega( nk)} e^{ O(pk \log p)}$. By \cref{lem:net-suffices}, $\min_{S \subset [n]} \phi(S, \pi) \geq \lambda$ for all projections $\pi$ of rank $k$ assuming $c$ is small enough compared to $\lambda$.

By the union bound over $k \in [p/2]$, $\ch(v) \geq \lambda$ with probability at least $1 - O( p e^{- \Omega( n)  + O(p \log p)}) = 1 - O(e^{- \Omega( n)  + O(p \log p)})$. 
\end{proof}

\section*{Acknowledgements}The first author would like to acknowledge Ankit Garg for an enlightening discussion, as well as Akshay Ramachandran for interesting conversations and for pointing out and helping fix an error in \cref{lem:net-exists}.

\appendix

\section{Spectral gap}\label{sec:spectral-gap}

\begin{dfn}[Spectral gap] Let $\Phi:\Mat(p) \to \Mat(n)$ be a completely positive map. Say $\Phi$ has spectral gap $\lambda$ if its second singular value is at most $(1 - \lambda) \frac{s(\Phi)}{\sqrt{np}}$.
\end{dfn}

Here is the theorem appearing in \cite{KLR19} from which we can show \cref{thm:lapchi}.
\begin{thm} \label{thm:actual-lapchi}
Let $p \leq n$. Suppose that the completey positive map $\Phi:\Mat(n) \to \Mat(p)$ is $\eps$-doubly balanced and has spectral gap $\lambda$ with $\eps \log p/\lambda^2$ is at most a small enough constant. Then the condition number of the scaling solutions $L$ and $R$ such that $\Phi_{L,R}$ is doubly stochastic satisfy 
$$ \kappa(L), \kappa(R) \leq 1 + O( \eps \log p /\lambda).$$
\end{thm}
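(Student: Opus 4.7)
Proof plan for \cref{thm:actual-lapchi}.

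The plan is to recast the problem as finding a minimizer $Z^\star$ of the geodesically convex potential $f^{\Phi}$ on $\PD_1(p)$. A standard computation shows that a pair $(L,R)$ yields a doubly stochastic $\Phi_{L,R}$ precisely when $L^\dagger L$ is proportional to such a minimizer $Z^\star$ and $R^\dagger R$ is proportional to $\Phi(Z^\star)^{-1}$; moreover, after normalizing so that $\det L=\det R=1$, the condition numbers satisfy $\kappa(L)^2 = \lambda_{\max}(Z^\star)/\lambda_{\min}(Z^\star)$ and similarly for $R$, so it suffices to bound $\|\log Z^\star\|_{op}$ (and the analogous quantity for $\Phi(Z^\star)^{-1}$) by $O(\eps\log p/\lambda)$.

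Given the paper's appendix claim that, for small $\eps$, the spectral-gap hypothesis of \cref{thm:actual-lapchi} is equivalent to $\Phi$ being an $(O(\eps),1-\Omega(\lambda))$-quantum expander, I would first invoke \cref{cor:expansion-convexity} and \cref{lem:expansion-scaling} to obtain geodesic $\lambda/2$-strong convexity of $f^{\Phi}$ in a neighborhood of $I_p$, as in \cref{cor:convex-ball}. In Frobenius norm this combined with $\|\nabla f^\Phi(I_p)\|_F = O(\eps\sqrt p)$ and \cref{lem:frob-grad} would immediately give $\|\log Z^\star\|_F = O(\eps\sqrt p/\lambda)$. Since the displacement $\log Z^\star$ is governed by $\nabla f^\Phi(I_p) = (p/n)\Phi^*(\Phi(I_p)^{-1}) - I_p$, which has operator norm only $O(\eps)$ by $\eps$-balancedness, there is hope to save a factor of $\sqrt{p}/\log p$ by working directly with the operator norm.

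The key step is therefore to upgrade the $\lambda$-strong convexity in Frobenius norm to a usable strong convexity along all \emph{rank-one} perturbation directions, with only a $\log p$ loss. To do this I would set up a chaining argument: fix a $\delta$-net $\mathcal N$ of unit vectors on $S^{p-1}$ of size $\exp(O(p\log(1/\delta)))$; for each $u\in \mathcal N$ look at the geodesic emanating from $I_p$ in direction $uu^\dagger - I_p/p$; use the spectral-gap assumption to show that along a typical direction the second derivative of $f^\Phi$ exceeds $(\lambda/\log p)\|X\|_{op}^2$ on a sublevel set of radius $O(\log p)$; and use a Dudley-type entropy bound to transfer the estimate from the net to all directions. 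Alternatively, one can replace this chaining step by the modified log-Sobolev route of \cite{KLR19}: spectral gap $\lambda$ upgrades to a quantum MLSI constant $\Omega(\lambda/\log p)$, and integrating the resulting entropy-decay inequality along the gradient flow of $f^{\Phi}$ controls $\|\log Z^\star\|_{op}$ (since relative entropy of a near-identity state dominates $\|\cdot\|_{op}^2$ up to a $\log p$ factor). Either way one obtains $\|\log Z^\star\|_{op} = O(\eps \log p/\lambda)$; the analogous bound on $\Phi(Z^\star)^{-1}$ follows by applying the same argument to $\Phi^*$ at the balanced scaling.

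The main obstacle is quantifying the loss between Frobenius and operator norm: naively one loses a factor of $\sqrt{p}$, and the whole content of the theorem is that this should only be $\log p$. This is the step where either the chaining/entropy integral must be carried out carefully or the Poincar\'e-to-MLSI comparison must be invoked. Once this bound is in hand, the shrinkage condition $\eps\log p/\lambda^2 \ll 1$ is exactly what is needed to stay inside the region of strong convexity along the entire trajectory so that the fixed-point analysis is valid.
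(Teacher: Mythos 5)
First, a point of orientation: the paper does not prove \cref{thm:actual-lapchi} at all. It is quoted verbatim as a result of \cite{KLR19} (their Theorem 1.7), and the paper's only contribution near it is \cref{lem:translator}, which translates between the spectral-gap hypothesis and approximate quantum expansion so that \cref{thm:lapchi} follows. Indeed the authors explicitly say, when presenting \cref{thm:frob-improvement}, that their own geodesic-convexity argument is ``incomparable'' to the $\log p$ bound and that the latter rests on ``the more sophisticated methods of \cite{KLR19}.'' So a proof proposal for this statement has to supply what \cite{KLR19} supplies, and your plan does not.

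Your first half is fine and matches the paper's own machinery: reducing to bounding $\|\log Z^\star\|$ for the minimizer $Z^\star$ of $f^\Phi$ on $\PD_1(p)$, getting $\lambda$-strong geodesic convexity near $I_p$ from the expansion hypothesis (\cref{cor:expansion-convexity}, \cref{lem:expansion-scaling}, \cref{cor:convex-ball}), and then \cref{lem:frob-grad} with $\|\nabla f^\Phi(I_p)\|_F = O(\eps\sqrt p)$. But that route provably only yields the Frobenius bound $O(\eps\sqrt p/\lambda)$, i.e.\ exactly \cref{thm:frob-improvement}; the entire content of \cref{thm:actual-lapchi} is the replacement of $\sqrt p$ by $\log p$, and your two proposed devices for that step do not hold up. The chaining/Dudley suggestion is a non-starter: nothing here is random, so there is no stochastic process whose supremum an entropy integral would control, and a union bound over a net of directions gives nothing beyond the deterministic statement you started with. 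More substantively, strong convexity along rank-one geodesic directions (which in any case costs no $\log p$: for $X=uu^\dagger-I_p/p$ one has $\|X\|_F\asymp\|X\|_{op}$) does not bound $\|\log Z^\star\|_{op}$, because the displacement $\log Z^\star$ is a single, possibly spread-out matrix; the first-order pairing $\langle\nabla f^\Phi(I_p),\log Z^\star\rangle$ couples the operator-norm smallness of the gradient to the trace norm of the displacement, and strong convexity returns only Frobenius control, so you are back to the $\sqrt p$ loss. The alternative ``MLSI'' route is asserted, not argued: \cite{KLR19} does not proceed via a modified log-Sobolev inequality, the claimed upgrade of a spectral gap to an MLSI constant $\Omega(\lambda/\log p)$ is not established in this noncommutative scaling setting, and no mechanism is given connecting entropy decay of a semigroup to an operator-norm bound on the minimizer of $f^\Phi$. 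In short, the step where the theorem actually lives is left as a sketch of two unproven (and, in the chaining case, unworkable as stated) ideas, so the proposal has a genuine gap precisely at the point the paper deliberately outsources to \cite{KLR19}.
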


In the next lemma, which is straightforward, we prove a relationship between the spectral gap and quantum expansion.
\begin{lem}\label{lem:translator} Let $\Phi$ be a completely positive map. There are constants $C,c>0$ such that the following holds for $\eps \leq c \lambda$.
\begin{itemize}
\item\label{it:exp-to-spec} If $\Phi$ is a $(\eps,1 - \lambda)$-quantum expander then it has spectral gap $\lambda - O(\eps)$, and 
\item\label{it:spec-to-exp} If $\Phi$ has spectral gap $\lambda$ and is $\eps$-doubly balanced then it is an $(\eps, \alpha)$-quantum expander for 
$$\alpha = \max\{1/2, 1 - \lambda\} + O(\eps/\lambda).$$
\end{itemize}
\end{lem}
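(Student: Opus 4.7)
The lemma contains two independent implications, which I would handle separately.

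For the forward implication, $(\eps, 1-\lambda)$-quantum expander $\Rightarrow$ spectral gap: by Courant-Fischer, $\sigma_2(\Phi) \leq \sup_{X \in \Mat(p),\, \tr X = 0} \|\Phi(X)\|_F/\|X\|_F$ (restricting to any codimension-one subspace upper-bounds $\sigma_2$). Given a traceless $X \in \Mat(p)$, decompose $X = A + iB$ with $A, B$ Hermitian; the condition $\tr X = 0$ forces $\tr A = \tr B = 0$, and complete positivity sends Hermitians to Hermitians. A short computation using $\tr[\Phi(A),\Phi(B)] = 0$ yields $\|\Phi(X)\|_F^2 = \|\Phi(A)\|_F^2 + \|\Phi(B)\|_F^2$ and $\|X\|_F^2 = \|A\|_F^2 + \|B\|_F^2$, so the expander hypothesis on traceless Hermitians immediately gives $\sigma_2 \leq (1-\lambda)s(\Phi)/\sqrt{np}$, i.e., spectral gap at least $\lambda$ (no $\eps$-loss is actually needed).

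For the reverse implication, set $\mu := s(\Phi)/\sqrt{np}$ and $T := \Phi^*\Phi$. The plan is to show that the top right singular vector $v_1$ of $\Phi$, which can be chosen Hermitian because $T$ preserves the Hermitian subspace, is Frobenius-close to $I_p/\sqrt p$, so that a traceless Hermitian $X$ (perpendicular to $I_p/\sqrt p$) is nearly perpendicular to $v_1$. Write $\Phi(I_p) = (s(\Phi)/n)I_n + E_1$ and $\Phi^*(I_n) = (s(\Phi)/p)I_p + E_2$ with $\|E_1\|_{op} \leq \eps s(\Phi)/n$ and $\|E_2\|_{op} \leq \eps s(\Phi)/p$. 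The expansion $T(I_p) = (s(\Phi)^2/(np))I_p + (s(\Phi)/n)E_2 + \Phi^*(E_1)$, combined with the CP-monotonicity bound $\|\Phi^*(E_1)\|_{op} \leq \|\Phi^*(I_n)\|_{op}\|E_1\|_{op}$ (from $-\|E_1\|_{op} I_n \preceq E_1 \preceq \|E_1\|_{op} I_n$ and positivity of $\Phi^*$), yields $\|T(I_p/\sqrt p) - \mu^2 (I_p/\sqrt p)\|_F = O(\eps \mu^2)$ after converting from operator to Frobenius norm. Together with $\sigma_2^2 \leq (1-\lambda)^2\mu^2$, this produces a top eigenvalue gap of $T$ of size $\Omega(\lambda)\mu^2$ when $\eps \leq c\lambda$, so Davis-Kahan delivers $\|v_1 - I_p/\sqrt p\|_F = O(\eps/\lambda)$ and as a byproduct $\sigma_1 = (1+O(\eps))\mu$. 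For a traceless Hermitian $X$, decomposing $X = \alpha v_1 + Y$ with $Y \perp v_1$ and observing $\alpha = \langle X, v_1 - I_p/\sqrt p\rangle$ (since $\tr X = 0$), we get $|\alpha| \leq O(\eps/\lambda)\|X\|_F$ and hence
\[ \|\Phi(X)\|_F \leq |\alpha|\sigma_1 + \|Y\|_F \sigma_2 \leq \bigl((1-\lambda) + O(\eps/\lambda)\bigr)\mu\|X\|_F, \]
which is bounded by $\bigl(\max\{1/2, 1-\lambda\} + O(\eps/\lambda)\bigr)\mu\|X\|_F$ as required.

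The main obstacle is the Davis-Kahan step, which would look circular if one tried to estimate $\|\Phi^*(E_1)\|_F$ using $\sigma_1$ before $\sigma_1$ itself is controlled. This is resolved by working in operator norm via CP-monotonicity and only then converting to Frobenius norm, keeping the residual at the correct order $O(\eps\mu^2)$. The hypothesis $\eps \leq c\lambda$ is used precisely to ensure the eigenvalue gap of $T$ is genuinely $\Omega(\lambda)\mu^2$ rather than swallowed by the residual.
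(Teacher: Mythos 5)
Your proposal is correct, and both of its halves check out: in the forward direction the decomposition $X=A+iB$ with $\tr[\Phi(A),\Phi(B)]=0$ legitimately extends the expander bound from traceless Hermitians to all traceless matrices, and Courant--Fischer applied to the codimension-one subspace $\{ \tr X = 0\}$ then gives $\sigma_2 \leq (1-\lambda)s(\Phi)/\sqrt{np}$ with no $\eps$-loss, which is even slightly stronger than the stated $\lambda-O(\eps)$; in the reverse direction the CP-monotonicity bound $\|\Phi^*(E_1)\|_{op}\leq \|E_1\|_{op}\|\Phi^*(I_n)\|_{op}$ does keep the residual at $O(\eps \mu^2)$, the gap between $\mu^2$ and $\sigma_2^2\leq(1-\lambda)^2\mu^2$ is $\Omega(\lambda\mu^2)$ once $\eps\leq c\lambda$, and the sin-theta argument together with $(\sigma_1^2-\mu^2)\langle v_1,E\rangle = \langle v_1, TE-\mu^2E\rangle$ indeed yields both $\|v_1-I_p/\sqrt p\|_F=O(\eps/\lambda)$ and $\sigma_1=(1+O(\eps))\mu$ without circularity. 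Your route is genuinely different from the paper's: the paper proves two standalone linear-algebra lemmas --- one showing that any unit vector nearly attaining $\|\Phi\|_{op}$ forces contraction on its orthogonal complement (\cref{lem:gap-to-quantum}), the other deducing a spectral gap from contraction on a hyperplane via the Rayleigh-trace identity $\sigma_1^2+\sigma_2^2=\sup_{\dim L=2}\tr \pi_L A^\dagger A$ (\cref{lem:quantum-to-gap}) --- and it imports the bound $\sigma_1\leq(1+\eps)s(\Phi)/\sqrt{np}$ from \cite{KLR19} (\cref{lem:sigma1}), using only that $\|\Phi(I_p)\|_F$ is large; you instead use double balancedness on \emph{both} sides plus positivity of $\Phi^*$ to make $I_p/\sqrt p$ an approximate eigenvector of $\Phi^*\Phi$ and then invoke Davis--Kahan. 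What the paper's approach buys is a pair of reusable facts about arbitrary linear maps with a spectral gap, with no eigenvector-perturbation machinery and no need for positivity of $\Phi^*$; what yours buys is self-containedness (the $\sigma_1$ bound falls out as a byproduct rather than being cited) and a lossless forward direction, at the cost of leaning on the structure of completely positive maps and a slightly heavier perturbation step.
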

Before proving \cref{lem:translator} let us use it to prove \cref{thm:lapchi}. 
\begin{proof}[Proof of \cref{thm:lapchi}] 
By \cref{it:exp-to-spec}, $\Phi$ has spectral gap $\lambda - O( \eps) = \Omega(\lambda)$. Apply \cref{thm:actual-lapchi} to $\Phi^*:\Mat(n) \to \Mat(p)$; because $\Phi^*$ is the adjoint of $\Phi$, the two have the same spectral gap. Hence the scaling factors $R$ and $L$ that make $(\Phi^*)_{R, L}$ doubly balanced satisfy $ \kappa(L), \kappa(R) \leq 1 + O( \eps \log p /\lambda).$ By scaling $R$ and $L$ by the respective geometric means of their singular values, we may assume that $\|I_p - L^\dagger L\|_{op}, \|I_p - R^\dagger R\|_{op} = O(\eps \log p /\lambda)$ and $\det L = \det R = 1$.
\end{proof}
\cref{lem:translator} is essentially an elementary statement about linear maps. Its proof uses the following technical linear algebra lemmas as well as one from \cite{KLR19}. 

\begin{lem}\label{lem:gap-to-quantum} Suppose $A \in \Mat(m,n)$ has spectral gap $\lambda$. Then there are constants $C,c > 0$ such that for every unit vector $x$ such that $\|A x\| > (1 - \delta)\|A\|_{op}$ with $\delta/\lambda < c$, we have 
$$ \|Ay\| \leq (\max\{1/2, 1 - \lambda\} + C\delta/\lambda)\|A\|_{op}$$
for all unit vectors $y \in x^\perp$.
\end{lem}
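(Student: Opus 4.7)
The plan is to work in the SVD of $A$ and reduce the claim to an elementary inequality about coordinates. Let $\sigma_1 \geq \sigma_2 \geq \cdots$ be the singular values of $A$ with corresponding orthonormal right singular vectors $v_1, v_2, \dots$, so $\sigma_1 = \|A\|_{op}$. The spectral-gap hypothesis then reads $\sigma_i \leq (1-\lambda)\sigma_1$ for all $i \geq 2$.

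First I would expand $x = \sum_i \alpha_i v_i$ and combine the hypothesis $\|Ax\|^2 \geq (1-2\delta)\sigma_1^2$ with the upper bound $\|Ax\|^2 \leq \sigma_1^2|\alpha_1|^2 + (1-\lambda)^2\sigma_1^2(1-|\alpha_1|^2)$ obtained by pushing everything other than the top singular value down to the gap. Solving for $|\alpha_1|^2$ yields $|\alpha_1|^2 \geq 1 - 2\delta/(2\lambda-\lambda^2) \geq 1 - O(\delta/\lambda)$ provided $\delta/\lambda$ is at most a small enough constant; in other words, $x$ has almost all its mass on $v_1$.

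Next I would expand $y = \sum_i \beta_i v_i$ with $\sum_i |\beta_i|^2 = 1$. From $\langle x, y \rangle = \sum_i \overline{\alpha_i}\beta_i = 0$ and Cauchy--Schwarz applied to the tail $i \geq 2$, one obtains $|\alpha_1|^2|\beta_1|^2 \leq 1-|\alpha_1|^2$, hence $|\beta_1|^2 = O(\delta/\lambda)$. Plugging into the same spectral-gap bound now for $y$ gives
\[\|Ay\|^2 \leq \sigma_1^2\bigl((1-\lambda)^2 + |\beta_1|^2(2\lambda - \lambda^2)\bigr) \leq \sigma_1^2\bigl((1-\lambda)^2 + O(\delta)\bigr).\]

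To conclude I would take the square root, splitting into two cases. If $\lambda \leq 1/2$ then $1-\lambda \geq 1/2$ and $\sqrt{(1-\lambda)^2 + O(\delta)} \leq (1-\lambda) + O(\delta)/(1-\lambda) \leq (1-\lambda) + O(\delta/\lambda)$; if $\lambda > 1/2$ then $(1-\lambda)^2 \leq 1/4$ and $\sqrt{1/4 + O(\delta)} \leq 1/2 + O(\delta) \leq 1/2 + O(\delta/\lambda)$. Combined, these give the asserted bound $\|Ay\| \leq (\max\{1/2, 1-\lambda\} + C\delta/\lambda)\|A\|_{op}$. The main delicacy is exactly this last step: the expansion $\sqrt{a^2 + s} \approx a + s/(2a)$ becomes useless when $a = 1-\lambda$ is close to $0$, which forces the casework, and one must be careful to obtain the error $C\delta/\lambda$ uniformly in $\lambda$ rather than the naive $C\delta/(1-\lambda)$, which would blow up as $\lambda \to 1$.
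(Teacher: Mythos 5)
Your proof is correct and follows essentially the same route as the paper's: decompose $x$ and $y$ against the top singular vector, use the gap to show $|\alpha_1|^2 \geq 1 - O(\delta/\lambda)$, use orthogonality plus Cauchy--Schwarz to get $|\beta_1|^2 = O(\delta/\lambda)$, and then bound $\|Ay\|$. Your explicit casework at the square-root step is just a rephrasing of the paper's substitution $\lambda' = \min\{\lambda, 1/2\}$, so there is nothing substantively different to flag.
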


\begin{proof}
It is enough to prove the claim when $A$ has operator norm $1$. Let $v$ denote a top right singular vector of $A$. We assume $x$ is a unit vector. First we claim that $|\langle x, v \rangle| \geq \sqrt{1 - 2\delta/\lambda}.$
Write $x = \sqrt{(1 -\alpha)} uv + \sqrt{\alpha} w$ where $w \in v^\perp$ is a unit vector, $\alpha \in [0,1]$, and $u$ is in the complex unit circle. Because $Aw$ is in the orthogonal complement of $Av$, we have
\begin{align*}(1 - \delta)^2 \leq \|A x\|_2^2 = (1 -\alpha) \|Auv\|_2^2 + \alpha \|Aw\|_2^2 &\leq (1 -\alpha) 1 +\alpha(1 - \lambda)^2\\
&= 1 - \alpha(1 - (1 - \lambda)^2)
\end{align*}
Hence, $
\alpha \leq (1 - (1 - \delta)^2)/(1 - (1-\lambda)^2) \leq 2 \delta/\lambda.$ Next, for a unit vector $y \in x^\perp$ write $y = \sqrt{\beta} u' v + \sqrt{1 - \beta} w'$ where $w' \in v^\perp$ is a unit vector and $\beta \in [0,1]$. Because $\langle x, y \rangle = 0$, 
$$\sqrt{\beta}\sqrt{1 - \alpha} = \sqrt{1 - \beta} \sqrt{\alpha} |\langle w , w' \rangle| \leq \sqrt{1 - \beta} \sqrt{\alpha}.$$
Hence $\sqrt{\beta}/\sqrt{1-\beta} \leq \sqrt{\alpha}/\sqrt{1-\alpha}$ and so $\beta \leq \alpha \leq 2\delta/\lambda$. Take $\lambda' = \min\{\lambda, 1/2\}$. Now 
\begin{align*}\|A y\|_2^2 \leq \beta + (1 - \beta) (1- \lambda')^2 &\leq 2 \delta/\lambda + (1 - 2\delta/\lambda)(1 - \lambda')^2.
\end{align*}
Hence $\|A y\|_2 \leq (1 - \lambda')\sqrt{1 - 2\delta/\lambda + 2\delta/\lambda(1 - \lambda')^2} \leq (1 - \lambda')\sqrt{1 + 6\delta/\lambda} $, which completes the proof because we have assumed $\delta/\lambda$ is a small enough constant.\end{proof}

\begin{lem}\label{lem:quantum-to-gap}Suppose $A \in \Mat(m,n)$ is such that $\|A\|_{op} = 1$ and there exists a vector $x$ such that for all vectors $y \in x^\perp$, we have $\|Ay\| \leq (1 - \lambda) \|y\|.$ Then $A$ has spectral gap $\lambda$, i.e. $\sigma(A) \leq 1 - \lambda$.

\end{lem}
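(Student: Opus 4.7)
The plan is to use the standard Courant--Fischer min--max characterization of singular values. Recall that for any $A \in \Mat(m,n)$, the $k$th singular value satisfies
\[
\sigma_k(A) \;=\; \min_{\substack{V \subseteq \CC^n \\ \dim V \leq k-1}} \;\max_{\substack{y \perp V \\ \|y\|=1}} \|Ay\|.
\]
Taking $k = 2$ and choosing the one-dimensional subspace $V = \operatorname{span}\{x\}$ gives the upper bound
\[
\sigma_2(A) \;\leq\; \max_{\substack{y \perp x \\ \|y\|=1}} \|Ay\|.
\]
By the hypothesis of the lemma, the right-hand side is at most $(1-\lambda)$, so $\sigma_2(A) \leq 1 - \lambda$ as desired.

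There is essentially no obstacle: the argument is a one-line invocation of the variational characterization of singular values. The only thing worth being careful about is the normalization convention---$\|A\|_{op} = 1$ is given, so the phrase ``spectral gap $\lambda$'' here just means $\sigma_2(A) \leq 1 - \lambda = (1-\lambda)\|A\|_{op}$, matching the definition used earlier in the paper (for $s(\Phi)/\sqrt{np}$ there plays the role of $\|A\|_{op}$ here). No further estimates, nets, or perturbation arguments are needed.
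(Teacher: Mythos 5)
Your proof is correct. The Courant--Fischer characterization $\sigma_2(A) = \min_{\dim V \leq 1} \max_{y \perp V,\, \|y\|=1} \|Ay\|$ is standard (apply the usual min--max theorem to $A^\dagger A$ and take square roots), and plugging in $V = \operatorname{span}\{x\}$ immediately gives $\sigma_2(A) \leq 1-\lambda$. The paper takes a slightly different, though equally elementary, variational route: it uses the Ky Fan--type identity $\sigma_1^2(A) + \sigma_2^2(A) = \sup_{\dim L = 2} \tr \pi_L A^\dagger A$, chooses for each two-dimensional $L$ an orthonormal basis $u_1, u_2$ with $u_1 \in x^\perp$ (possible since $\dim(L \cap x^\perp) \geq 1$), bounds the trace by $(1-\lambda)^2 + 1$, and then subtracts $\sigma_1^2 = \|A\|_{op}^2 = 1$. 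Your version is marginally cleaner in that it bounds $\sigma_2$ directly and never touches $\sigma_1$: the normalization $\|A\|_{op}=1$ is used only to interpret the conclusion as a spectral gap relative to the top singular value, not in the estimate itself, whereas the paper's cancellation step genuinely needs $\sigma_1 = 1$. Both arguments are one-liners and yield the same bound; no gap in either.
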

\begin{proof} We use the Rayleigh trace. Write 
$$1 + \sigma_2(A)^2 =  \sigma_1^2(A) + \sigma_2^2(A) = \sup_{\dim L = 2} \tr \pi_L A^\dagger A.$$
For any $L$, we may take $u_1, u_2$ to be an orthonormal basis spanning $L$ with $u_1 \in x^\perp$. Then 
$$\tr \pi_L A^\dagger A  = \|Au_1\|_F^2 + \|A u_2 \|_F^2 \leq (1 - \lambda)^2 + 1.$$ Thus, $\sigma_2(A) \leq 1 - \lambda$.
\end{proof}

\begin{lem}[Lemma 3.6 of \cite{KLR19}]\label{lem:sigma1}
Let $\Phi:\Mat(p)\to \Mat(n)$ be an $\eps$-doubly balanced completely positive map. The first singular value of the linear map $\Phi$ is at most $(1 + \eps) s(\Phi)/\sqrt{np}$. 
\end{lem}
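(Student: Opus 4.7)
The plan is to bound the operator norm directly via the dual formulation
\[
\sigma_1(\Phi) \;=\; \sup_{\|X\|_F = \|Y\|_F = 1} \tr Y^\dagger \Phi(X),
\]
using a Kraus representation $\Phi(X)=\sum_{i=1}^r A_i X A_i^\dagger$ of $\Phi$ and two applications of Cauchy--Schwarz. The intuition is that the approximately balanced condition is exactly saying that the ``identity direction'' carries essentially all of the norm, and that this direction has Frobenius length $\sqrt{np}$ rather than $\sqrt{p^2}$ or $\sqrt{n^2}$ individually.

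First, for fixed $X,Y$ I would write $\tr Y^\dagger \Phi(X) = \sum_i \tr Y^\dagger A_i X A_i^\dagger = \sum_i \langle A_i^\dagger Y,\; X A_i^\dagger\rangle_{\text{HS}}$, and apply Cauchy--Schwarz termwise followed by Cauchy--Schwarz in the sum over $i$ to get
\[
|\tr Y^\dagger \Phi(X)| \;\leq\; \Bigl(\sum_i \|A_i^\dagger Y\|_F^2\Bigr)^{1/2}\Bigl(\sum_i \|X A_i^\dagger\|_F^2\Bigr)^{1/2}.
\]

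Next, I would identify the two sums with quadratic forms involving $\Phi(I_p)$ and $\Phi^*(I_n)$:
\[
\sum_i \|A_i^\dagger Y\|_F^2 \;=\; \tr YY^\dagger \Phi(I_p), \qquad \sum_i \|XA_i^\dagger\|_F^2 \;=\; \tr X^\dagger X\, \Phi^*(I_n).
\]
Using $\eps$-double balancedness, $\Phi(I_p) \preceq (1+\eps)\tfrac{s(\Phi)}{n} I_n$ and $\Phi^*(I_n) \preceq (1+\eps)\tfrac{s(\Phi)}{p} I_p$, so the two sums are at most $(1+\eps)\tfrac{s(\Phi)}{n}\|Y\|_F^2$ and $(1+\eps)\tfrac{s(\Phi)}{p}\|X\|_F^2$ respectively. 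Combining gives $|\tr Y^\dagger \Phi(X)| \leq (1+\eps)s(\Phi)/\sqrt{np}$ for unit $X,Y$, which is the desired bound.

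I do not expect a serious obstacle here: the whole proof is essentially a two-line Cauchy--Schwarz once the Kraus form is in hand. The only minor care point is making sure the Cauchy--Schwarz is applied in the right grouping so that the two resulting sums land on $\Phi(I_p)$ and $\Phi^*(I_n)$ rather than on some cross object, and this is automatic from the splitting $\tr Y^\dagger A_i X A_i^\dagger = \langle A_i^\dagger Y, X A_i^\dagger\rangle_{\text{HS}}$.
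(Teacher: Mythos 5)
Your proof is correct: the identity $\tr Y^\dagger A_i X A_i^\dagger = \langle A_i^\dagger Y, X A_i^\dagger\rangle_{\mathrm{HS}}$, the double Cauchy--Schwarz, and the identifications $\sum_i \|A_i^\dagger Y\|_F^2 = \tr YY^\dagger \Phi(I_p)$ and $\sum_i \|XA_i^\dagger\|_F^2 = \tr X^\dagger X\,\Phi^*(I_n)$ all check out, and the $\eps$-balancedness bounds $\Phi(I_p)\preceq(1+\eps)\tfrac{s(\Phi)}{n}I_n$, $\Phi^*(I_n)\preceq(1+\eps)\tfrac{s(\Phi)}{p}I_p$ then give exactly $(1+\eps)s(\Phi)/\sqrt{np}$. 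Note that the paper itself offers no proof of this statement --- it is imported verbatim as Lemma 3.6 of \cite{KLR19} --- so there is nothing internal to compare against; your argument is the standard Kraus-representation Cauchy--Schwarz computation and would serve as a correct self-contained replacement for the citation.
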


\begin{proof}[Proof of \cref{lem:translator}]
Suppose $\Phi:\Mat(n) \to \Mat(m)$ is $\eps$-doubly balanced. By \cref{lem:sigma1}, the operator norm of $\Phi$ is at most $(1 + \eps) s(\Phi)/\sqrt{mn}$. Let $E$ denote the unit vector $I_n/\sqrt{n}$. 

First assume that $\Phi$ has spectral gap $\lambda$. Because $\Phi$ is $\eps$-doubly balanced, 
$$\|\Phi(E)\|_F \geq \|(1 - \eps)s(\Phi)I_m/m\sqrt{n}\|_F \geq (1 - \eps)s(\Phi)/\sqrt{mn}.$$
Hence, $\|\Phi(E)\|_F \geq ((1 - \eps)/(1 + \eps)) \|\Phi\|_{op}$. Applying \cref{lem:gap-to-quantum} to $\Phi$, $x = E$, and $\delta = 1 - (1 - \eps)/(1 + \eps) \leq 2 \eps$, we have that $\|\Phi(Y)\| \leq (\max\{1/2, 1 - \lambda\} + C\eps/\lambda)\|Y\|_F$ for all $Y$ orthogonal to $E$ and hence $\Phi$ is an $(\eps, \max\{1/2, 1 - \lambda\} + C\eps/\lambda)$-quantum expander. \\

On the other hand, if $\Phi$ is a $(\eps, 1 - \lambda)$-quantum expander, then $\|\Phi(Y)\|_F \leq \|Y\|_F(1 - \lambda)s(\Phi)/\sqrt{mn}$ for all $Y$ orthogonal to $E$, and hence $\|\Phi(Y)\|_F \leq \|Y\|_F(1 - \lambda)\|\Phi\|_{op}/(1 - \eps)$ for all $Y$ orthogonal to $E$. By \cref{lem:quantum-to-gap} applied with $\Phi$ normalized to have operator norm $1$ we have that $\Phi$ has spectral gap $1 - (1 -\lambda)/(1 - \eps) \geq \lambda - C \eps.$
\end{proof}

\section{Finite precision}\label{sec:finite-precision}

Here we show that even with access to only finitely many bits of $v_1, \dots, v_n$, Tyler's M-estimator $\widehat{\Sigma}$ still exists with high probability and remains close to the true shape $\Sigma$. Let $\vec v'$ denote the result of rounding the entries of $\vec v$.

Recall the proof of \cref{thm:elliptical}. To show that $\widehat{\Sigma}$ is close to $\Sigma$, it sufficed to show that Tyler's M-estimator on $\vec v$ is close to the identity. Similarly, to show $\widehat{\Sigma}$ is close to $\Sigma$ even after rounding, it suffices to show that Tyler's M-estimator on $\vec v'$ is close to the identity. This would follow from \cref{thm:random-expander} with $\vec v$ replaced by $\vec v'$, or more precisely that if $\vec v$ is chosen at random then $\phi_{\vec v'}$ is a $(\eps, 1- \lambda)$ quantum expander with high probability. As \cref{thm:random-expander} already shows $\phi_{\vec v}$ is an $(\eps, 1 - \lambda)$ quantum expander with the desired probability, the following lemma is enough.

\begin{lem}\label{lem:round-expander}
Suppose $p \leq n$, that $\vec v, \vec v'$ are such that $\|v_i - v'_i\| \leq \delta$ for all $i \in [n]$ and that $\Phi_{\vec v}$ is an $(\eps, 1 - \lambda)$ quantum expander. Then $\Phi_{\vec v'}$ is an 
$$(\eps - \delta n , 1 - \lambda' + \delta \sqrt{pn})$$ quantum expander.
\end{lem}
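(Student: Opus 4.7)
The plan is to write $\Phi_{\vec v'} = \Phi_{\vec v} + E$, where $E:\Mat(p) \to \Mat(n)$ is the completely positive perturbation given by
\[
E(X) = \diag\bigl(\langle v_i, X v_i\rangle - \langle v_i', X v_i'\rangle \;:\; i \in [n]\bigr),
\]
and then to show that $E$ is small enough in the relevant norms to preserve, with controlled degradation, the balance and expansion properties of $\Phi_{\vec v}$. The workhorse identity I will use throughout is
\[
\langle v_i, X v_i\rangle - \langle v_i', X v_i'\rangle = \langle v_i - v_i', X v_i\rangle + \langle v_i', X(v_i - v_i')\rangle,
\]
which, since $\|v_i\|=1$ and $\|v_i'\| \leq 1+\delta$, bounds each diagonal entry of $E(X)$ in absolute value by $(2+\delta)\,\delta\,\|X\|_{op}$.

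First I would verify the doubly-balanced condition for $\Phi_{\vec v'}$. Since $\Phi_{\vec v}(I_p) = \diag(\|v_i\|^2)$, the diagonal matrix $E(I_p)$ has operator norm $\max_i |\|v_i\|^2 - \|v_i'\|^2| = O(\delta)$, and $|s(\Phi_{\vec v'}) - s(\Phi_{\vec v})| = \bigl|\sum_i (\|v_i'\|^2 - \|v_i\|^2)\bigr| = O(n\delta)$. For the dual I would expand $E^*(I_n) = \sum_i (v_i' v_i'^\dagger - v_i v_i^\dagger)$ using the same additive identity, bounding each summand in operator norm by $2\delta(1+\delta)$ and summing via the triangle inequality to get $\|E^*(I_n)\|_{op} = O(n\delta)$. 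Feeding these three perturbation bounds into the $\eps$-balance of $\Phi_{\vec v}$ and renormalizing by the new size $s(\Phi_{\vec v'})$ gives the claimed balance parameter.

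Next I would bound the effect of $E$ on expansion. Using the entrywise bound,
\[
\|E(X)\|_F^2 = \sum_{i=1}^n \bigl|\langle v_i, X v_i\rangle - \langle v_i', X v_i'\rangle\bigr|^2 \leq n (2+\delta)^2 \delta^2\,\|X\|_{op}^2 \leq O(n\delta^2)\,\|X\|_F^2,
\]
so $E$ has Frobenius-to-Frobenius operator norm $O(\sqrt{n}\,\delta)$. Combining with the expansion of $\Phi_{\vec v}$ by the triangle inequality gives, for every traceless Hermitian $X$,
\[
\|\Phi_{\vec v'}(X)\|_F \leq (1-\lambda)\,\frac{s(\Phi_{\vec v})}{\sqrt{np}}\,\|X\|_F + O(\sqrt n\, \delta)\,\|X\|_F.
\]
Dividing by $s(\Phi_{\vec v'})/\sqrt{np}$ (which is comparable to $\sqrt{n/p}$ since the $v_i$ are nearly unit) rewrites the right-hand side as $(1 - \lambda + O(\delta \sqrt{pn}))\,s(\Phi_{\vec v'})/\sqrt{np}\,\|X\|_F$, yielding the desired expansion parameter.

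The main bookkeeping hurdle is tracking how the slight change in the normalization constant $s(\Phi_{\vec v'})$ filters through both the balance inequality and the expansion inequality simultaneously; once the perturbation bounds above are in hand, everything else is a routine comparison of ratios and an application of the triangle inequality, with no further probabilistic input needed.
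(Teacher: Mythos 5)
Your proposal is correct and follows essentially the same route as the paper: both decompose the difference $\Phi_{\vec v'}-\Phi_{\vec v}$ into the cross terms coming from $\langle v_i',Xv_i'\rangle-\langle v_i,Xv_i\rangle$, bound its Frobenius-to-Frobenius norm by $O(\delta\sqrt n)$, and propagate the perturbation through the size, the two balance conditions, and the expansion bound. Your bookkeeping (entrywise bound on the single perturbation map $E$, triangle inequality on $E^*(I_n)$) in fact yields slightly sharper degradations ($\eps+O(p\delta)$, $1-\lambda+O(\delta\sqrt p)$) than the stated parameters, so the lemma follows.
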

\begin{proof}
If for some $\alpha \leq \max\{c \lambda, 1/p\}$ we can show 
\begin{equation}\|\Phi_{\vec v} - \Phi_{\vec v'}\|_{op}\leq \alpha s(\Phi_{\vec v}), \label{eq:operator}\end{equation}
then $s(\Phi_{\vec v})(1 - \alpha \sqrt{p}) \leq s(\Phi_{\vec v'}) \leq (1 + \alpha \sqrt{p})s(\Phi_{\vec v})$, and so 

$\|\Phi_{\vec v'}(I_p) -\Phi_{\vec v}(I_p)\|_{op} \leq \alpha \sqrt{p}\; s(\Phi)$, $\|\Phi_{\vec v'}^*(I_n) -\Phi_{\vec v}^*(I_n)\|_{op} \leq \alpha \sqrt{n}\; s(\Phi_{\vec v'})$, and finally 
\begin{align*}\sup_{\tr X = 0, X \in \Herm(p)}\frac{ \|\Phi_{\vec v'}(X)\|_F}{\|X\|_F} &\leq (\lambda + \alpha) s(\Phi_{\vec v})\\
& \leq \frac{ (\lambda + \alpha)}{(1 - \alpha \sqrt{p})} s(\Phi_{\vec v}).
\end{align*}
Plugging in these parameters and using $p \leq n$ tells us $\Phi_{\vec v'}$ is an 
$$(\eps + O(\alpha \sqrt{n}), \lambda - O(\alpha \sqrt{p}))$$ quantum expander. We now upper bound $\alpha$ in \cref{eq:operator}. We may write $\Phi_{\vec v'} = \Phi_{\vec v} + T + S + \Phi_{\vec v' - \vec v}$ where the operators $S,T$ are given by $S(X)_{i,i} = v_i^\dagger X (v_i' - v)$ and $T(X)_{i,i} = (v'_i - v)^\dagger X v_i$ and $\Phi_{\vec v' - \vec v}$. Because $\|S\|_{op}= \|T\|_{op} \leq \delta \sqrt{n} $ and $\|\Phi_{\vec v - \vec v'}\|_{op} \leq \delta^2 \sqrt{n}$, we may take $\alpha = \delta \sqrt{n}$. \end{proof}
Now let $\vec{x}'$ denote the result of rounding $\vec{x}$ to $b$ bits after the decimal place so that each entry of $\vec x' -\vec x$ is at most $2^{-b}$. From the discussion at the beginning of this section, we have the following results.

\begin{thm}\label{thm:finite-precision}
Let $M\geq 1$ be a bound on the condition number of $\Sigma$, let $\vec x = x_1, \dots, x_n$ be drawn from an elliptical distribution with shape $\Sigma$, and let $\vec x' = x'_1, \dot, x'_n$ be the result of rounding $x_i$ to 
$$b_i \geq C \log \left(\frac{M np \log p}{\eps\|x_i\|}\right)$$ bits after the decimal place. Let $c > 0$ be an absolute constant.
\begin{enumerate}
\item\label{it:elliptical-rounding} With probability $1  - O(e^{-q(p,n,\eps/\log p)})$ for $q(p,n,\eps)$ is as in \cref{thm:random-expander}, the estimator $\tilde{\Sigma} = \widehat{\Sigma}(\vec x')$ on the rounded vectors satisfies 
\begin{gather}\|I_p - \Sigma^{1/2} \tilde{\Sigma}^{-1} \Sigma^{1/2}\|_{op} = O(\eps)\label{eq:round-bound}\end{gather}
provided $\delta \leq c/\sqrt{p}$.
\item\label{it:sinkhorn-rounding} With probability $1 - O(e^{-q(p,n,c/\log p)})$, Sinkhorn scaling on $\vec x'$ outputs $\overline{\Sigma}$ satisfying
\begin{gather} \|I_p - \widehat{\Sigma}^{1/2} \overline{\Sigma}^{-1} \widehat{\Sigma}^{1/2} \|_{op} = O(\eps) \label{eq:sink-bound}\end{gather}
 in  $O(|\log \det \Sigma| + p +  \log (1/\eps))$
 iterations.\end{enumerate}
\end{thm}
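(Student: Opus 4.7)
The plan is to mirror the arguments of \cref{thm:elliptical} and \cref{thm:fast-sinkhorn-elliptical}, substituting the rounded samples via \cref{lem:round-expander}. Because Tyler's equation is scale-invariant in each sample, only the directions of the $x_i'$ matter: writing $x_i' = u_i' \Sigma^{1/2} v_i'$ with $v_i' \in \Ess^{p-1}$ and applying the analog of \cref{obs:identity} gives that Tyler's M-estimator on $\vec x'$ equals, after normalization, $\Sigma^{1/2} Y \Sigma^{1/2}$ where $Y$ is Tyler's M-estimator on the unit vectors $\vec v'$. Similarly, the Sinkhorn iterates on $\vec x'$ starting at $\widehat\Sigma$ differ from the Sinkhorn iterates on $\vec v'$ starting at $I_p$ only by a $\Sigma^{1/2}$-conjugation. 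Hence both \eqref{eq:round-bound} and \eqref{eq:sink-bound} reduce, as in the original proofs, to showing $\Phi_{\vec v'}$ is a quantum expander of the right quality.

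The key quantitative step is to bound $\delta := \max_i \|v_i - v_i'\|$ by the rounding error. Since $\Sigma^{-1/2} x_i = u_i v_i$ has norm $|u_i|$ and $\|\Sigma^{-1/2}(x_i' - x_i)\| \leq \|\Sigma^{-1/2}\|_{op}\,2^{-b_i}\sqrt p$, the standard estimate for a perturbation of a normalized vector gives
$$\|v_i - v_i'\| \;\leq\; \frac{2\,\|\Sigma^{-1/2}\|_{op}\,2^{-b_i}\sqrt p}{|u_i|}.$$
Using $\|x_i\| \leq \|\Sigma^{1/2}\|_{op}\,|u_i|$ and $\|\Sigma^{-1/2}\|_{op}\,\|\Sigma^{1/2}\|_{op} = \sqrt{\kappa(\Sigma)} \leq \sqrt M$, I conclude $\delta \leq O(\sqrt{Mp}\,2^{-b_i}/\|x_i\|)$, so the hypothesis $b_i \geq C\log(Mnp\log p/(\eps\|x_i\|))$ makes $\delta$ smaller than any prescribed $\poly^{-1}(np\log p/\eps)$ by taking $C$ large enough, and in particular enforces $\delta \leq c/\sqrt p$.

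For item \cref{it:elliptical-rounding}, I would invoke \cref{thm:random-expander} at precision $\eps/(2\log p)$ so that with the claimed probability $\Phi_{\vec v}$ is an $(\eps/(2\log p),\,1-\lambda)$-quantum expander, $\lambda$ an absolute constant. \cref{lem:round-expander} with $\delta \leq c\eps/(n\log p)$ then yields that $\Phi_{\vec v'}$ is an $(O(\eps/\log p),\,1-\lambda/2)$-quantum expander, and the remainder of the proof of \cref{thm:elliptical} (via \cref{thm:lapchi}) goes through unchanged. For item \cref{it:sinkhorn-rounding}, one only needs $\Phi_{\vec v'}$ to be a $(c\lambda^2/\log p,\,1-\lambda)$-quantum expander; the same three-step argument works with a much weaker condition on $\delta$, and the Sinkhorn analysis of \cref{thm:fast-sinkhorn-elliptical} applied to $\vec v'$ delivers the $O(|\log\det\Sigma| + p + \log(1/\eps))$ iteration bound.

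The main obstacle is the condition-number dependence: the perturbation bound for $\|v_i - v_i'\|$ involves $\|\Sigma^{-1/2}\|_{op}$, which is not directly controlled by the stated hypothesis on $\|x_i\|$. The remedy is to trade $|u_i|$ for $\|x_i\|$ at the cost of an additional $\|\Sigma^{1/2}\|_{op}$, leaving the product $\|\Sigma^{-1/2}\|_{op}\|\Sigma^{1/2}\|_{op} = \sqrt{\kappa(\Sigma)} \leq \sqrt M$ that enters cleanly inside the logarithm in the bound on $b_i$.
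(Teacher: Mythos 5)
For \cref{it:elliptical-rounding} your argument is correct and takes a mildly different route from the paper: you perturb the underlying directions $v_i\mapsto v_i'=\Sigma^{-1/2}x_i'/\|\Sigma^{-1/2}x_i'\|$ and feed $\Phi_{\vec v'}$ into \cref{lem:round-expander} and then the unchanged proof of \cref{thm:elliptical}, with the condition number entering as $\sqrt{\kappa(\Sigma)}\le\sqrt M$; the paper instead compares the vectors rescaled by $\widehat{\Sigma}^{-1/2}$ (the Tyler scaling of the \emph{unrounded} data), picking up an $O(M^2)$ factor, so your perturbation bookkeeping is if anything cleaner, and since $M$ only enters inside the logarithm defining $b_i$ both give the stated result.

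For \cref{it:sinkhorn-rounding}, however, there is a genuine gap. The quantity controlled in \cref{eq:sink-bound} is the distance of the Sinkhorn output $\overline{\Sigma}$ (computed from the \emph{rounded} data) to $\widehat{\Sigma}$, Tyler's M-estimator of the \emph{unrounded} data, and the claimed failure probability $O(e^{-q(p,n,c/\log p)})$ does not depend on $\eps$. Your reduction ("the Sinkhorn analysis of \cref{thm:fast-sinkhorn-elliptical} applied to $\vec v'$") only yields that the iterates converge to $\tilde{\Sigma}$, the estimator of the rounded data; you never supply the bridge $\|I_p-\widehat{\Sigma}^{1/2}\tilde{\Sigma}^{-1}\widehat{\Sigma}^{1/2}\|_{op}=O(\eps)$. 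If you try to get it from \cref{it:elliptical-rounding} plus \cref{lem:triangle-ineq} (both $\tilde{\Sigma}$ and $\widehat{\Sigma}$ are $O(\eps)$-close to $\Sigma$), you must condition on the $(\eps/\log p,1-\lambda)$-expander event, which degrades the probability to $1-O(e^{-q(p,n,\eps/\log p)})$ and contradicts the statement. The paper avoids this precisely by scaling $\Phi_{\vec x}$ with the \emph{exact} Tyler scaling $\widehat{\Sigma}^{-1/2}$: the unrounded scaled operator $\Psi$ is then exactly doubly balanced (a genuine $(1-.5\lambda)$-expander by \cref{cor:scaled-expander}), so the imbalance of the rounded scaled operator is governed solely by the rounding precision $2^{-b_i}$, and \cref{thm:lapchi} gives $\tilde{\Sigma}\approx_{O(\eps)}\widehat{\Sigma}$ (\cref{eq:rounded}) under only the constant-level event. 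Your argument can be repaired by the same device applied to $\vec v$ versus $\vec v'$ (scale by the exact Tyler scaling of $\vec v$ before invoking \cref{lem:round-expander}), but as written the decisive step is missing. A smaller slip: Tyler's procedure on $\vec x'$ starts at $I_p$, which corresponds to starting at $\Sigma^{-1}$ (not $I_p$) for $\vec v'$; this only affects the additive $O(p+|\log\det\Sigma|)$ term in the iteration count, but the bookkeeping should be stated correctly.
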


\begin{proof}
Let $\lambda$ be a small constant as in \cref{thm:random-expander}. By \cref{thm:random-expander}, with probability at least $1  - O(e^{-q(p,n,\eps/\log p)})$, the operator $\Phi_{\vec v}$ is an $(\eps/\log p, 1 - \lambda)$-quantum expander. Condition on this event. The proof of \cref{thm:elliptical} shows that $\|I_p - \Sigma^{1/2} \widehat{\Sigma}^{-1} \Sigma^{1/2} \|_{op} \leq \eps$. Let $\tilde{\Sigma}$ denote Tyler's M-estimator for the rounded vectors $\vec x'$. To prove \cref{it:elliptical-rounding}, by \cref{lem:triangle-ineq} it suffices to show that $\tilde{\Sigma}$ is close to $\widehat{\Sigma}$. First note that $\widehat{\Sigma}$ has condition number $O(M)$ because $\|I_p - \Sigma^{1/2} \widehat{\Sigma}^{-1} \Sigma^{1/2} \|_{op} \leq 1 $.

By \cref{cor:scaled-expander}, 
$$\Psi:=(\Phi_{\vec x})_{\widehat{\Sigma}^{-1/2}, \Phi_{\vec x}(\widehat{\Sigma}^{-1})^{-1/2}}$$ is a $1 - .5 \lambda$ quantum expander. Note that $\Psi = \Phi_{\vec z}$ for $z_i = \sqrt{Z} x_i/\|\sqrt{Z}x_i\|$, $Z:=\widehat{\Sigma}^{-1}$. Now if we let $z'_i = \sqrt{Z} x'_i/\|\sqrt{Z}x_i'\|$, observe that $z_i,z'_i$ are scale-invariant in $x_i, x'_i$, and set $x = x_i/\|x_i\| = 1, x' = x_i'/\|x'_i\|$ so that $\|x - x'\| \leq 2^{-b}/\|x_i\|$, then $\|z_i - z'_i\|$ is  
\begin{align*}\left\|\frac{\sqrt{Z} x x^\dagger \sqrt{Z}}{\langle x, Zx\rangle} - \frac{\sqrt{Z} x' (x')^\dagger \sqrt{Z}}{\langle x', Zx'\rangle} \right\|_F\\
\leq \|Z\|_{op} \frac{\| \langle  x', Z x' \rangle x x^\dagger -  \langle  x, Z x \rangle x' (x')^\dagger\|_F}{\langle x, Zx \rangle \langle x',Z x' \rangle.}\\
\leq \|Z\|_{op}\|Z^{-1}\|^2_{op} \| \langle  x', Z x' \rangle x x^\dagger -  \langle  x, Z x \rangle x' (x')^\dagger\|_F\\
\leq \|Z\|_{op}\|Z^{-1}\|^2_{op} (\langle  x', Z x' \rangle \| x x^\dagger - x' (x')^\dagger\|_F + \|x\|^2|\tr Z (xx^\dagger - x(x')^\dagger)|)  \\
\leq (CM)^2 \cdot O(\| x x^\dagger - x' (x')^\dagger\|_F )
= O( (CM)^2 \sqrt{p}2^{-b}/\|x_i\|),
\end{align*}

By our choice of $b_i$ and \cref{lem:round-expander}, the operator
$$\Psi':=(\Phi_{\vec x'})_{\widehat{\Sigma}^{-1/2}, \Phi_{\vec x'}(\widehat{\Sigma}^{-1})^{-1/2}}$$
is a $(\eps/\log p, 1 - .25 \lambda)$ quantum expander. By \cref{thm:lapchi}, the left scaling $Y^{1/2} \in \PD_1(p)$ for $\Psi'$ satisfies $\|I_p - Y\|_{op} = O(\eps)$. Then $Y^{1/2} \widehat{\Sigma}^{-1/2}$ is a left scaling for $\Phi_{\vec x'}$, i.e. $\tilde{\Sigma}^{-1}_1 = \widehat{\Sigma}^{-1/2}_1 Y \widehat{\Sigma}^{-1/2}_1$. We conclude that 
\begin{gather}\|\widehat{\Sigma}_1^{-1/2}\tilde{\Sigma}_1\widehat{\Sigma}^{-1/2}_1 - I_p\|_{op} = O(\eps).\label{eq:rounded}\end{gather}
\cref{eq:round-bound} follows from \cref{eq:rounded} and \cref{lem:trace-det}.

To prove \cref{it:sinkhorn-rounding}, condition on the operator $\Phi_{\vec v}$ being an $(c/\log p, 1 - \lambda)$-quantum expander for $c$ a small enough constant. This happens with probability $1  - O(e^{-q(p,n,c/\log p)})$. Let $Z_T$ denote the result of running normalized Sinkhorn iteration on $\Phi_{\vec x'}$. Let $T$ be large enough that $\|I_p - \tilde{\Sigma}_1^{1/2} Z_T \tilde{\Sigma}_1^{1/2} \|_{op} \leq \eps$; by the proof of \cref{thm:fast-sinkhorn-elliptical}, $T$ satisfies \cref{eq:sinkhorn-bound} in \cref{thm:fast-sinkhorn-elliptical}. \cref{eq:sink-bound} follows from \cref{eq:rounded}, \cref{lem:trace-det}, and \cref{lem:triangle-ineq}. \end{proof}

\section{Normalization conventions and error metric}
Define $d(A, B) := \|I_p - A^{1/2} B^{-1} A^{1/2}\|_{op}$ for $A, B \in \PD(p)$. Note that 
$$d(A, B) \leq \eps \iff  B^{-1} \in A^{-1}( 1 +  \eps),$$
where for $X \in\Herm(p)$ the short-hand $X(1 \pm \eps)$ denotes the interval of $Y \in \Herm(p)$ such that 
$X(1 - \eps)\preceq Y  \preceq X( 1 +  \eps).$

\begin{lem}\label{lem:triangle-ineq}
Let $A, B, C \in \PD(p)$.
\begin{itemize}
\item provided $d(A, B), d(B, C)$ are small enough constants, 
$$d(A, C) = O(d(A, B) + d(B, C)).$$
\item provided $d(A, B)$ is at most a small enough constant, 
$$d(B,A) = O(d(A, B)).$$
\item provided $d(A, B)$ is at most a small enough constant, 
$$d(A^{-1},B^{-1}) = O(d(A, B)).$$

\end{itemize}
\end{lem}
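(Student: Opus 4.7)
The plan is to exploit the characterization given right before the lemma, namely that $d(A,B)\leq \eps$ is equivalent to the two-sided Loewner inequality
\[
(1-\eps)A^{-1} \preceq B^{-1} \preceq (1+\eps)A^{-1}.
\]
Once this is in hand, all three parts reduce to routine manipulations of the Loewner order, which is preserved under addition, under conjugation by an invertible matrix, and under inversion (with an orientation flip in the last case). I expect no real obstacle; the only subtlety is keeping track of which side conjugation is taking place on so that the inequalities remain valid in the Loewner order.

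For the triangle-like inequality, I would chain the hypotheses: if $d(A,B)\leq\eps_1$ and $d(B,C)\leq\eps_2$, then the characterization gives $B^{-1}\in A^{-1}(1\pm\eps_1)$ and $C^{-1}\in B^{-1}(1\pm\eps_2)$. Multiplying these through in the Loewner order yields $(1-\eps_1)(1-\eps_2)A^{-1}\preceq C^{-1}\preceq(1+\eps_1)(1+\eps_2)A^{-1}$. Expanding the products gives that $C^{-1}\in A^{-1}(1\pm(\eps_1+\eps_2+\eps_1\eps_2))$, and since the $\eps_i$ are assumed small, this last bound is $O(\eps_1+\eps_2)$. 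Applying the characterization in the other direction concludes $d(A,C)=O(d(A,B)+d(B,C))$.

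For the symmetry statement, I would start from $(1-\eps)A^{-1}\preceq B^{-1}\preceq(1+\eps)A^{-1}$ with $\eps=d(A,B)$ and invert (note that inversion is order-reversing on $\PD(p)$): this gives $\frac{1}{1+\eps}A\preceq B\preceq\frac{1}{1-\eps}A$, and then inverting again gives $\frac{1}{1+\eps}B^{-1}\preceq A^{-1}\preceq\frac{1}{1-\eps}B^{-1}$. For $\eps$ small enough, $\frac{1}{1\mp\eps}=1\pm O(\eps)$, so $A^{-1}\in B^{-1}(1\pm O(\eps))$, which by the characterization is exactly $d(B,A)=O(\eps)$.

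For the inversion statement I would work directly from the definition: $d(A^{-1},B^{-1})=\|I_p-A^{-1/2}BA^{-1/2}\|_{op}$. The chain $(1+\eps)^{-1}A\preceq B\preceq(1-\eps)^{-1}A$ obtained above by one inversion, conjugated by $A^{-1/2}$ on both sides, gives $(1+\eps)^{-1}I_p\preceq A^{-1/2}BA^{-1/2}\preceq(1-\eps)^{-1}I_p$. Hence
\[
\|I_p-A^{-1/2}BA^{-1/2}\|_{op}\leq \max\!\left\{1-\tfrac{1}{1+\eps},\;\tfrac{1}{1-\eps}-1\right\}=O(\eps),
\]
as long as $\eps$ is smaller than a suitable constant, which yields $d(A^{-1},B^{-1})=O(d(A,B))$.
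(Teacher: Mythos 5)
Your proposal is correct and follows essentially the same route as the paper: both rest on the equivalence $d(A,B)\leq\eps \iff (1-\eps)A^{-1}\preceq B^{-1}\preceq(1+\eps)A^{-1}$, then chain the two-sided Loewner bounds for the triangle-type inequality, and use order-reversal of inversion (equivalently, dividing by the scalars $1\pm\eps$) for the symmetry and inversion statements. Your write-up is just a more explicit version of the paper's brief argument, with the same smallness caveats in the right places.
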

\begin{proof} The first item follows because 
\begin{align*}C^{-1} &\in B^{-1}(1 \pm d(B, C))\\
& \subset A^{-1}(1 \pm d(A, B))(1 \pm d(B, C))\\
& \subset A^{-1}(1 \pm O(d(A, B) + d(B, C)).
\end{align*}
The second item follows because $B^{-1} \in A^{-1}(1 \pm d(A, B))$ implies $A^{-1} \in B^{-1}(1 \pm O(d(A, B)))$ provided $\eps$ is small enough. The third item follows because for $A, B \in \PD(p)$ we have $A \preceq B$ if and only if $A^{-1} \succeq B^{-1}$.\end{proof}

\begin{lem}\label{lem:trace-det}
Suppose $A, B \in \PD(p)$ satisfy $\tr A = \tr B = 1$, and let $A_1 = \det(A)^{-1/p} A, B_1 = \det(B)^{-1/p} B \in \PD_1(p)$. Then if $d(A_1, B_1)$ is at most a small constant,
$$d(A, B) = O(d(A_1, B_1)).$$
\end{lem}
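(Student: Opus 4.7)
The plan is to reduce the claim to an estimate on the ratio $r := \det(A)^{1/p}/\det(B)^{1/p}$, which is the scalar that distinguishes $A_1^{1/2} B_1^{-1} A_1^{1/2}$ from $A^{1/2} B^{-1} A^{1/2}$. Indeed, since $A = \det(A)^{1/p} A_1$ and $B = \det(B)^{1/p} B_1$, we may expand
\[
A^{1/2} B^{-1} A^{1/2} = r\, A_1^{1/2} B_1^{-1} A_1^{1/2},
\]
and then write $I_p - A^{1/2} B^{-1} A^{1/2} = (1-r) I_p + r(I_p - A_1^{1/2} B_1^{-1} A_1^{1/2})$, which by the triangle inequality gives
\[
d(A,B) \;\leq\; |1-r| + r\, d(A_1, B_1).
\]

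The first step is to express $r$ in terms of the traces of $A_1$ and $B_1$. Since $\tr A = \tr B = 1$, applying the trace to $A = \det(A)^{1/p} A_1$ yields $\det(A)^{1/p} = 1/\tr A_1$, and similarly for $B$. Hence $r = \tr B_1/\tr A_1$. The second step is to bound $r$ using the hypothesis $d(A_1,B_1) \leq \eps$ for some small constant $\eps$. By definition this means $B_1^{-1}$ lies in $A_1^{-1}(1 \pm \eps)$ in the Loewner order, so inverting and using monotonicity of inversion on $\PD(p)$ gives $(1+\eps)^{-1} A_1 \preceq B_1 \preceq (1-\eps)^{-1} A_1$. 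Taking traces and using monotonicity of $\tr$ on PSD matrices gives $r = \tr B_1/\tr A_1 \in [(1+\eps)^{-1}, (1-\eps)^{-1}]$, and hence $|1-r| = O(\eps) = O(d(A_1,B_1))$ when $\eps$ is at most a small enough constant.

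Combining these two steps,
\[
d(A,B) \leq |1-r| + r\, d(A_1,B_1) = O(d(A_1,B_1)) + (1 + O(\eps))\, d(A_1,B_1) = O(d(A_1,B_1)),
\]
as desired. There is no genuine obstacle here: the only subtle point is noticing that the two normalizations $\tr = 1$ and $\det = 1$ differ precisely by the scalar $r$ that can be read off from the traces of the determinant-one normalizations, after which the Loewner order bound handed to us by $d(A_1,B_1) \leq \eps$ directly controls $r$.
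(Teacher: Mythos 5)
Your proof is correct and follows essentially the same route as the paper's: both arguments reduce the statement to the observation that the trace-one and determinant-one normalizations differ only by the scalar $r=\tr B_1/\tr A_1$, which the Loewner bound $(1+\eps)^{-1}A_1 \preceq B_1 \preceq (1-\eps)^{-1}A_1$ coming from $d(A_1,B_1)\le\eps$ forces to be $1+O(\eps)$. Your explicit decomposition $I_p - A^{1/2}B^{-1}A^{1/2} = (1-r)I_p + r\bigl(I_p - A_1^{1/2}B_1^{-1}A_1^{1/2}\bigr)$ is just a slightly more streamlined way of organizing the same computation the paper performs with $A^{-1/2}BA^{-1/2}$.
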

\begin{proof}
Note that $A = p A_1/\tr A_1, B = p B_1/\tr B_1$. Because for $A, B \in \PD(p)$ we have $A \preceq B$ if and only if $A^{-1} \succeq B^{-1}$, we have that $A \in B(1 \pm O(d(A_1, B_1))$. This tells us $\tr A \in (1 \pm O(d(A_1, B_1)) \tr B$, so 
$$A^{-1/2} B A^{-1/2} \in (1 \pm O(d(A_1, B_1)) A_1^{-1/2} B_1 A_1^{-1/2} \subset (1 \pm O(d(A_1, B_1))) I_p.$$ This completes the proof. 
\end{proof}

\section{Existence and uniqueness}\label{sec:king}

Here we prove \cref{it:tyler-uniqueness} of \cref{thm:king}. 
\begin{proof}
For the sufficiency of the condition we refer to \cite{Tyler_1987}. It remains to prove that it is necessary. Suppose \cref{eq:tyler} has a solution $A \in \PD(p)$ and let $y_i = A^{-1/2}x_i/\|A^{-1/2} x_i\|$ so that $y_i$ are in radial isotropic position, i.e. $\|y_i\| = 1$ and $\frac{p}{n} \sum_{i = 1}^n y_i y_i^\dagger = I_p$. Suppose the condition in \cref{it:tyler-uniqueness} is violated. Clearly, because $A$ is a linear transformation, the condition is violated for $y_i$ also, i.e. there is a $k$-dimensional subspace $0 \subsetneq W \subsetneq \RR^p$ containing at least $nk/p$ points $y_i$. Let $\pi_{W}$ denote the orthogonal projection to $W$. Then 
$$\tr \pi_{W} = \frac{p}{n} \sum_{i = 1}^n \tr \pi_{W} y_i y_i^\dagger \geq \frac{p}{n} (n k/p), $$
but both sides are equal to $k$ and so the inequality must be an equality. Equality can hold if and only if $\pi_W y_i = 0$ for all $y_i \not\in W$. That is, for every $i$ we have that $y_i \in W$ or $y_i \in W^\perp$. Then for any $a > 0$, the matrix $B = (a \pi_W + a^{-1} \pi_{W^\perp}) $ satisfies $B y_i/\|By_i\| = y_i$. Thus $A^{1/2} B^{-1} A^{1/2}$, renormalized if necessary, also solves \cref{eq:tyler}. By the invertibility of $A$, $A^{1/2} B^{-1} A^{1/2}$ constitutes a family of distinct solutions to \cref{eq:tyler}, none of which is a constant multiple of the other.
\end{proof}

\bibliographystyle{alphaurl}
\bibliography{BL-tyler-M}

\end{document}